\newcommand\version{June 18, 2019}
\newtheorem{theorem}{Theorem}[section]
\newtheorem{proposition}[theorem]{Proposition}
\newtheorem{lemma}[theorem]{Lemma}
\newtheorem{corollary}[theorem]{Corollary}
\theoremstyle{definition}
\newtheorem{assumption}[theorem]{Assumption}
\theoremstyle{remark}
\newcommand{\C}{\mathbb{C}}
\renewcommand{\epsilon}{\varepsilon}
\renewcommand{\phi}{\varphi}
\newcommand{\R}{\mathbb{R}}
\DeclareMathOperator{\im}{Im}
\DeclareMathOperator{\re}{Re}
\begin{document}

\title[Landau--Pekar equations --- \version]{A non-linear adiabatic theorem for the one-dimensional Landau--Pekar equations}

\author{Rupert L. Frank}
\address[Rupert L. Frank]{Mathematisches Institut, Ludwig-Maximilans Univers\"at Mu\"unchen, Theresienstr. 39, 80333 M\"unchen, Germany, and Munich Center for Quantum Science and Technology (MCQST), Schellingstr. 4, 80799 M\"unchen, Germany, and Mathematics 253-37, Caltech, Pasa\-de\-na, CA 91125, USA}
\email{rlfrank@caltech.edu}

\author{Zhou Gang}
\address[Zhou Gang]{Department of Mathematics, Binghamton University, Binghampton, NY 13902-6000}
\email{gzhou@math.binghamton.edu}

\begin{abstract}
We discuss a one-dimensional version of the Landau--Pekar equations, which are a system of coupled differential equations with two different time scales. We derive an approximation on the slow time scale in the spirit of a non-linear adiabatic theorem. Dispersive estimates for solutions of the Schr\"odinger equation with time-dependent potential are a key technical ingredient in our proof.
\end{abstract}

\maketitle

\renewcommand{\thefootnote}{${}$} \footnotetext{\copyright\, 2019 by the authors. This paper may be reproduced, in its entirety, for non-commercial purposes.\\
The first author would like to thank Benjamin Schlein and Robert Seiringer for interesting discussions. Partial support through US National Science Foundation grant DMS-1363432 and through German Research Foundation grant EXC-2111 390814868 (R.L.F.) is acknowledged.}


\section{Introduction and main result}

A polaron is a physical model for a particle accompanied by its polarization field. We treat a one-dimensional, classical version of this model, where the electron is described by a complex-valued wave function $\psi\in L^2(\R)$ and the polarization field by a real-valued function $\phi\in L^2(\R)$. The strength of the coupling between the particle and the field is described by a constant $\sqrt\alpha = \epsilon^{-1/4}$ which is assumed to be large. While the original polaron model is three-dimensional, its one-dimensional version, which we discuss here, has been introduced in the physics literature both as a toy problem \cite{Gr} and as a limiting model for the three-dimensional model in a strong magnetic field, see \cite{KoLeSm,FrGe} and references therein.

Landau and Pekar \cite{LaPe} derived phenomenologically equations of motion for the polaron, whose one-dimen\-sio\-nal analogues read
\begin{align}\label{eq:lp}
\begin{cases}
\epsilon i \partial_t\psi& =-\partial_x^2\psi+\phi \psi \,,
\\
-\partial_t^2\phi & =\phi+\tfrac12 |\psi|^2 \,.
\end{cases}
\end{align}
Note that the typical time scale of the electron is of order $\epsilon$, whereas the time scale of the field is of order $1$.

Equations \eqref{eq:lp} are supplemented by initial conditions
\begin{equation}
\label{eq:lpinitial}
\psi|_{t=0} = \psi_0 \,,\qquad \phi|_{t=0} = \phi_0 \,,\qquad \partial_t\phi|_{t=0} = \dot\phi_0 \,,
\end{equation}
which we assume to be independent of $\epsilon$. By a standard argument (as, for instance, in \cite[Lemma 2.1]{FrGa}), using conservation of mass and energy, one can show that \eqref{eq:lp}, \eqref{eq:lpinitial} has global solutions for $\psi_0\in H^1(\R)$ and $\phi_0,\dot\phi_0\in L^2(\R)$.

Our goal is to approximate the dynamics on time scales of order one for a certain class of physically relevant initial conditions. Namely, under some assumptions, we will prove that if the initial wave function $\psi_0$ is the ground state of the Schr\"odinger operator $-\partial_x^2+\phi_0$ with the initial field as a potential, then up to times $t$ of order $1$, the wave function $\psi_t$ at time $t$ is close to the ground state of the Schr\"odinger operator $-\partial_x^2 +\phi_t$ with the field at time $t$ as potential. More precisely, we will construct $\epsilon$-independent limiting dynamics $(V,Q)$ where $Q$ is an exact ground state of $-\partial_x^2+V$ such that up to times of order $1$ the solution $(\psi,\phi)$ is in a quantitative sense well approximated by $(Q,V)$ after multiplying $\psi$ by an explicit phase. As we will discuss below in some more detail, this result is in the spirit of a non-linear adiabatic theorem.

Let us state our main result in detail. We will work under the following assumption on the initial data.

\begin{assumption}\label{ass:main}
Let $\phi_0\in L^2(\R)\cap \langle x\rangle^{-2} L^1(\R)$ be real-valued and assume that the Schr\"odin\-ger operator $-\partial_x^2+\phi_0$ in $L^2(\R)$ has a unique negative eigenvalue and no zero-energy resonance. We denote the eigenvalue by $E_0$ and a corresponding real-valued eigenfunction (not necessarily normalized) by $\psi_0$. Moreover, let $\dot\phi_0\in L^2(\R)\cap \langle x\rangle^{-2} L^1(\R)$ be real-valued.
\end{assumption}

We recall (see, e.g., \cite[Chapter 5]{Ya}) that the Schr\"odinger operator $-\partial_x^2+W$ is said to have a \emph{zero-energy resonance} if there is a non-trivial, \emph{bounded} function $u$ on $\R$ such that $(-\partial_x^2 + W)u=0$. We recall that if $W\in \langle x \rangle^{-1} L^1(\R)$, then any solution $u$ of the latter equation satisfies $u(x) \sim b_\pm x$ as $x\to\pm\infty$, and so having a zero-energy resonance means that there is a non-trivial solution with $b_+=b_-=0$. Generically, $-\partial_x^2+W$ has \emph{no} zero-energy resonance.

The assumption that $\psi_0$ is real-valued is not restrictive since, because of the simplicity of $E_0$, any corresponding eigenfunction $\tilde\psi_0$ is of the form $e^{i\theta}\psi_0$ for a real-valued $\psi_0$, and then the pair $(e^{i\theta}\psi,\phi)$ is a solution of \eqref{eq:lp} with the initial condition $(\tilde\psi_0,\phi_0,\dot\phi_0)$.

Using a fixed point argument (see Proposition \ref{reference}) one can show that there is a maximal interval $[0,T_*)$, $T_*\in(0,\infty)\cup\{\infty\}$, as well as unique functions $Q\in C^\infty([0,T_*),H^1(\R,\R))$, $V\in C^\infty([0,T_*),L^2(\R))$ and $E\in C^\infty([0,T_*),(-\infty,0))$ such that for all $t\in [0,T_*)$
\begin{equation}
\label{eq:eqref}
(-\partial_x^2 + V)Q = EQ \,,
\qquad
-\partial_t^2 V = V + \tfrac12 Q^2
\end{equation}
and
\begin{equation}
\label{eq:refmass}
\| Q \|_2 = \|\psi_0\|_2
\end{equation}
and
\begin{equation}
\label{eq:refinitial}
Q|_{t=0} = \psi_0 \,,
\qquad
V|_{t=0} = \phi_0 \,,
\qquad
\partial_t V|_{t=0} = \dot\phi_0 \,.
\end{equation}
We set
\begin{align}\label{eq:tstar}
T^* & :=\sup\left\{ T\in [0,T_*]:\ E_t \ \text{is the unique negative eigenvalue of}\ -\partial_x^2 +V_t \right. \notag \\
& \qquad \qquad \qquad\qquad \quad\ \left. \text{and there is no zero energy resonance for all}\ t\in(0,T) \right\}
\end{align}
and note that $T^*>0$.

The following is our main result.

\begin{theorem}\label{main}
Let $\psi_0,\phi_0$ and $\dot\phi_0$ be as in Assumption \ref{ass:main} and let $(V,Q)$ be the solution of \eqref{eq:eqref}, \eqref{eq:refmass} with initial data \eqref{eq:refinitial}. Then for every $T<T^*$ there is an $\epsilon_T>0$ such that for all $t\in[0,T]$ and $\epsilon\in (0,\epsilon_T]$, the solution $(\psi,\phi)$ of \eqref{eq:lp} with initial data \eqref{eq:lpinitial} satisfies
$$
\left\|\psi_t - e^{-i\epsilon^{-1}\int_0^t E_s \,ds} Q_t \right\|_2 \lesssim \epsilon \,,
\qquad
\| \phi_t - V_t \|_2 + \|\partial_t\phi_t - \partial_t V_t\|_2 \lesssim \epsilon^2 \,.
$$
More precisely, setting
$$
\alpha_t := \|\psi_0\|^{-2} \langle Q_t, e^{i\epsilon^{-1}\int_0^t E_s \,ds} \psi_t\rangle
\qquad\text{and}\qquad
R_t := e^{i\epsilon^{-1} \int_0^t E_s \,ds} \psi_t - \alpha_t Q_t \,,
$$
we have the decomposition
\begin{align}
\label{eq:decomp}
\psi_t = e^{-i\epsilon^{-1}\int_0^t E_s \,ds} \left( \alpha_t Q_t + R_t \right)
\qquad\text{with}\qquad
\langle Q_t,R_t\rangle = 0
\end{align}
and the bounds
$$
\left\| R_t \right\|_2 = \|\psi_0\|_2 \sqrt{1-|\alpha_t|^2} \lesssim \epsilon \,,
\qquad
\left| \partial_t \alpha_t \right|\lesssim \epsilon \,,
\qquad
\left| \partial_t (|\alpha_t|^2) \right| \lesssim
\begin{cases}
\epsilon & \! \text{if}\ t\leq \epsilon \,,\\
\epsilon (\epsilon/t)^{3/2} & \!\text{if}\ \epsilon\leq t\leq\epsilon^{1/3},\\
\epsilon^2 & \!\text{if}\ t\geq \epsilon^{1/3} \,.
\end{cases}
$$
\end{theorem}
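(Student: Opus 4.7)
The plan is to substitute the decomposition \eqref{eq:decomp} into \eqref{eq:lp}, derive coupled evolution equations for $\alpha_t$, $R_t$, and $\phi_t-V_t$, and close the bounds by a bootstrap argument on $[0,T]$. Set $h_t:=-\partial_x^2+V_t-E_t$, so that $h_tQ_t=0$ and (by $T<T^*$) $h_t|_{Q_t^\perp}$ has a spectral gap uniform in $t\in[0,T]$. Inserting $\psi_t=e^{-i\epsilon^{-1}\int_0^tE_s\,ds}(\alpha_tQ_t+R_t)$ into the first line of \eqref{eq:lp} and using \eqref{eq:eqref} gives
\begin{equation*}
\epsilon i\,\partial_tR_t = h_tR_t - \epsilon i\,\dot\alpha_tQ_t - \epsilon i\,\alpha_t\,\partial_tQ_t + (\phi_t-V_t)(\alpha_tQ_t+R_t).
\end{equation*}
Projecting onto $Q_t$ (using $\langle Q_t,R_t\rangle=0$, $\langle Q_t,\partial_tQ_t\rangle=0$ from \eqref{eq:refmass}, $\langle Q_t,h_tR_t\rangle=0$ and $\langle Q_t,\partial_tR_t\rangle=-\langle\partial_tQ_t,R_t\rangle$) yields the scalar ODE
\begin{equation*}
\|\psi_0\|_2^2\,\dot\alpha_t = \langle\partial_tQ_t,R_t\rangle - \tfrac{i}{\epsilon}\,\langle Q_t,(\phi_t-V_t)(\alpha_tQ_t+R_t)\rangle,
\end{equation*}
while projection onto $Q_t^\perp$ isolates the dispersive dynamics of $R_t$. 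Subtracting \eqref{eq:eqref} from the second line of \eqref{eq:lp} gives the linear Klein--Gordon equation
\begin{equation*}
-\partial_t^2(\phi_t-V_t) = (\phi_t-V_t) + \tfrac12\bigl((|\alpha_t|^2-1)Q_t^2 + 2\re(\alpha_tQ_t\overline{R_t}) + |R_t|^2\bigr)
\end{equation*}
with zero initial data, and mass conservation combined with orthogonality forces $\|R_t\|_2^2=\|\psi_0\|_2^2(1-|\alpha_t|^2)$, so controlling $R_t$ controls $1-|\alpha_t|^2$ automatically.

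With these equations in hand I would run a bootstrap with a priori bounds $\|R_t\|_2\leq C\epsilon$, $\|\phi_t-V_t\|_2+\|\partial_t(\phi_t-V_t)\|_2\leq C\epsilon^2$ and $|\alpha_t-1|\leq C\epsilon$, and improve each. The decisive step is the $R$-bound. With $\mathcal{U}(t,s)$ the propagator of $\epsilon i\partial_t-h_t$, Duhamel's formula reads $R_t=\int_0^t\mathcal{U}(t,s)P_s^\perp G_s\,ds$, where $G_s=-\alpha_s\partial_sQ_s-i\epsilon^{-1}(\phi_s-V_s)(\alpha_sQ_s+R_s)$ has $\|G_s\|_2=O(1)$ (the apparent $\epsilon^{-1}$ is absorbed by $\|\phi-V\|_2=O(\epsilon^2)$ via $H^1\hookrightarrow L^\infty$ in one dimension). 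Using the key identity $\mathcal{U}(t,s)h_s=-i\epsilon\,\partial_s\mathcal{U}(t,s)$, I integrate by parts in $s$:
\begin{equation*}
R_t = -i\epsilon\,h_t^{-1}P_t^\perp G_t + i\epsilon\,\mathcal{U}(t,0)h_0^{-1}P_0^\perp G_0 + i\epsilon\int_0^t\mathcal{U}(t,s)\,\partial_s\bigl(h_s^{-1}P_s^\perp G_s\bigr)\,ds.
\end{equation*}
The uniform spectral gap makes $h_s^{-1}$ bounded on $Q_s^\perp$, and an $L^2$-bound on $\mathcal{U}$ uniform in $\epsilon$ then yields $\|R_t\|_2\lesssim\epsilon$. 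Inserting this into the scalar ODE gives $|\dot\alpha_t|\lesssim\epsilon$ (both terms on the right-hand side are $O(\epsilon)$), and a standard energy estimate for the Klein--Gordon equation with its now $O(\epsilon^2)$ source gives the improved field bound.

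The refined bound on $\partial_t(|\alpha_t|^2)=-2\|\psi_0\|_2^{-2}\re\langle R_t,\partial_tR_t\rangle$ exploits that the conservative term $\epsilon^{-1}\langle R_t,h_tR_t\rangle/i$ in $\langle R_t,\partial_tR_t\rangle$ is purely imaginary, leaving only $-\re(\alpha_t\langle R_t,\partial_tQ_t\rangle)+O(\epsilon^2)$. Substituting the leading-order expression $R_t\approx-i\epsilon\alpha_th_t^{-1}P_t^\perp\partial_tQ_t$ (with $Q_t,\partial_tQ_t$ real, so $\langle h_t^{-1}P_t^\perp\partial_tQ_t,\partial_tQ_t\rangle\in\R$) makes this quantity purely imaginary, so the leading $O(\epsilon)$ contribution cancels; what survives comes from the $s=0$ boundary term and the remainder integral above. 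Controlling this surviving real part uses the one-dimensional local dispersive decay $\langle t/\epsilon\rangle^{-3/2}$ for the Schr\"odinger propagator on the continuous spectrum, producing the intermediate rate $\epsilon(\epsilon/t)^{3/2}$, with the transition at $t\sim\epsilon$ being where the boundary term ceases to dominate and $t\sim\epsilon^{1/3}$ where local decay is overtaken by the $O(\epsilon^2)$ remainder. The main obstacle is thus the pair of dispersive estimates -- $L^2$-boundedness uniform in $\epsilon$ and local decay uniform in $\epsilon$ -- for the propagator of $\epsilon i\partial_t-h_t$ with time-dependent potential, which is precisely the dispersive analysis highlighted in the abstract; the no-zero-energy-resonance hypothesis enters decisively in establishing the $\langle t/\epsilon\rangle^{-3/2}$ one-dimensional local decay.
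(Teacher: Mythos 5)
Your skeleton (the decomposition \eqref{eq:decomp}, the coupled equations for $\alpha$, $R$ and $\phi-V$, mass conservation giving $\|R_t\|_2^2=\|\psi_0\|_2^2(1-|\alpha_t|^2)$, and a bootstrap) matches the paper, and your integration by parts via $\mathcal U(t,s)h_s=-i\epsilon\,\partial_s\mathcal U(t,s)$ is morally the same device as the paper's subtraction of the explicit leading term $i\epsilon\alpha_t\chi_t$ with $\chi_t=(-\partial_x^2+V_t-E_t)^{-1}\partial_tQ_t$ (your boundary term $-i\epsilon h_t^{-1}P_t^\perp G_t$ is exactly $-i\epsilon\alpha_t\chi_t$ to leading order). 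However, there is a genuine gap at the step you dismiss as ``a standard energy estimate for the Klein--Gordon equation with its now $O(\epsilon^2)$ source.'' The source is \emph{not} $O(\epsilon^2)$: the cross term $2Q_s\re(\overline{\alpha_s}R_s)$ satisfies only $\|Q_s\re(\overline{\alpha_s}R_s)\|_2\lesssim\|Q_s\|_\infty\|R_s\|_2\lesssim\epsilon$ pointwise in $s$, so an energy estimate (equivalently, the explicit formula \eqref{eq:w}) yields only $\|\phi_t-V_t\|_2\lesssim\epsilon t$, i.e.\ $O(\epsilon)$ on times of order one. The gain of an extra factor $\epsilon$ requires two separate inputs: (i) the leading part $-i\epsilon\alpha\chi$ of $R$ is $i$ times a real function, so it drops out of $\re(\overline{\alpha}R)$ entirely; and (ii) the remainder $\tilde R$, although of size $\epsilon$ in $L^2$ \emph{uniformly in time}, must be shown to satisfy the time-integrable local decay $\|\langle x\rangle^{-1}\tilde R_s\|_\infty\lesssim\epsilon(\epsilon+\min\{(\epsilon/s)^{1/2},(\epsilon/s)^{3/2}\})$, whose integral over $[0,t]$ is $O(\epsilon^2)$. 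This is exactly the content of the paper's control function $\mathcal M_3$ and Lemma \ref{wbound0}, and it is where Theorem \ref{dispersive} (hence the no-resonance hypothesis) enters the \emph{main} bounds, not merely the refined estimate on $\partial_t(|\alpha_t|^2)$.

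This gap is not cosmetic, because $\|\phi-V\|_2\lesssim\epsilon^2$ is not only a stated conclusion but an essential hypothesis of your own bootstrap: your $G_s$ contains $\epsilon^{-1}(\phi_s-V_s)(\alpha_sQ_s+R_s)$, and $\|G_s\|_2=O(1)$ holds only if $\|\phi-V\|$ is $O(\epsilon^2)$ (in a norm pairing with $\psi$); with only the honestly obtainable $O(\epsilon)$ the integration by parts no longer yields $\|R_t\|_2\lesssim\epsilon$, and the scheme does not close. A second, lesser issue: your boundary/remainder formula requires $\|\partial_s(h_s^{-1}P_s^\perp G_s)\|_2=O(1)$, but $\partial_sG_s$ contains $\epsilon^{-1}(\phi_s-V_s)\partial_sR_s$ with $\partial_sR_s=(i\epsilon)^{-1}(h_sR_s+\dots)$, so you need either $H^2$-control of $R$ or, as the paper does, a reformulation in Duhamel form in which no derivative falls on $R$ and the quadratic-in-$\tilde R$ part of $W$ is isolated (it is real, so it cancels in $\im\langle\tilde R,W_0\tilde R\rangle$) before estimating. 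In short: the architecture is right, but the dispersive machinery must be threaded through the $L^2$ bound on $R$ and the bound on $\phi-V$ from the start, not appended only for the refined $\partial_t(|\alpha_t|^2)$ estimate.
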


We emphasize that $\alpha_t$ and $R_t$ depend on $\epsilon$, whereas $Q_t$ and $E_t$ do not.

The bound on the approximation of $\psi_t$ stated in the first part of the theorem follows from the bounds in the second part since
$$
\left\|\psi_t - e^{-i\epsilon^{-1}\int_0^t E_s \,ds} Q_t \right\|_2 = \left\| (\alpha_t-1) Q_t + R_t \right\|_2 = \sqrt{|\alpha_t-1|^2 \|\psi_0\|_2^2 + \|R_t\|_2^2} \lesssim \epsilon \,.
$$
We believe that the order $\epsilon$ is best possible, since in the proof of the theorem we will extract from $R_t$ a term which is a multiple of $\epsilon$ and show that the remainder is, at least in the norm of $\langle x\rangle L^\infty$ and for times $t\geq \epsilon^{1/3}$, bounded by $\epsilon^2$.

A result closely related to Theorem \ref{main} appears in \cite{LeRaScSe}. We'll discuss similarities and differences at the end of this introduction.

The statement of the theorem is reminiscent of the adiabatic theorem in quantum mechanics, which states that, under a gap condition, a system initially in an eigenstate remains close to its instanteneous eigenstate if the Hamiltonian changes slowly. Some recent works have explored to which extent this theorem remains valid for \emph{non-linear} Schr\"odinger equations. The paper \cite{Sp} studies the case of a weak non-linearity and modifies techniques from the proof of the (linear) adiabatic theorem. In contrast, we will follow the approach initiated in \cite{GaGr}, which exploits a completely different mechanism, namely that of \emph{dispersion}. It draws its inspiration from works on asymptotic stability of ground states of non-linear Schr\"odinger equations, a topic that was pioneered by Soffer and Weinstein \cite{SoWe1,SoWe2} and Buslaev and Perel'man \cite{BuPe} and that has seen an enormous activity in the last two decades. For instance, the works \cite{Cu,RoScSo} concern the situation without excited states, which is similar to the situation considered here. We will not attempt to review the immense list of works contributing to the problem with excited states.

\smallskip

A key ingredient in our proof are adiabatic dispersive estimates for time-dependent Schr\"o\-din\-ger operators which, we hope, will turn out to be useful also beyond the context of this work. They are the topic of Section \ref{sec:linear} of this paper, which can be read independently of the remaining sections. We emphasize that the notation $V$ in this part of the paper has nothing to do with the solution of \eqref{eq:eqref}.

\begin{assumption}\label{ass:disp}
Let $T\in(0,\infty)$ and let $V\in C([0,T];\langle x\rangle^{-2} L^1(\R))\cap C^1([0,T];L^1+L^\infty(\R))$ such that $V(t)$ is real-valued for any $t\in[0,T]$. Moreover, for any $t\in[0,T]$, the operator $-\partial_x^2+V(t)$ has a single negative eigenvalue and no zero energy resonance.
\end{assumption}

We denote by $P_c(t)$ the orthogonal projection corresponding to the continuous spectrum of $-\partial_x^2+V(t)$ in $L^2(\R)$ and consider the equation
\begin{equation}
\label{eq:linear}
i\epsilon\partial_t\psi = (-\partial_x^2 +V(t))P_c(t)\psi
\end{equation}
with an initial condition $\psi_0$ corresponding to the continuous spectrum of $-\partial_x^2+V(0)$.

\begin{theorem}\label{dispersive}
If $P_c(0)\psi_0=\psi_0$, then the solution $\psi$ of \eqref{eq:linear} with initial condition $\psi|_{t=0}=\psi_0$ satisfies for all $t\in (0,T]$ and all $\epsilon\in(0,1]$,
\begin{align}
\label{eq:dispthm1}
\left\| \langle x \rangle^{-1} P_c(t)\psi(t) \right\|_\infty & \lesssim \min\left\{ \left(\frac{\epsilon}{t}\right)^{1/2}, \left(\frac{\epsilon}{t}\right)^{3/2} \right\} \left\| \langle x\rangle \psi_0 \right\|_1 \,, \\
\label{eq:dispthm2}
\left\| \langle x \rangle^{-1} P_c(t)\psi(t) \right\|_\infty & \lesssim \left(\frac{\epsilon}{t}\right)^{1/2} \left\| \psi_0 \right\|_1 \,, \\
\label{eq:dispthm3}
\left\| P_c(t)\psi(t) \right\|_\infty & \lesssim \max\left\{ \left(\frac{\epsilon}{t}\right)^{1/2}, 1 \right\} \left\| \psi_0 \right\|_1 \,.
\end{align}
\end{theorem}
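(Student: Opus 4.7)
The plan is to derive Theorem \ref{dispersive} from the classical one-dimensional fixed-potential dispersive theory by a Duhamel comparison with a frozen propagator, closed by a bootstrap in~$t$. The required input is the 1D dispersive theory of Weder, Goldberg--Schlag and Murata for a single Schr\"odinger operator $H_W:=-\partial_x^2+W$ obeying the hypotheses of Assumption \ref{ass:disp}: writing $P_c^W$ for its continuous spectral projection,
\[
\|e^{-itH_W}P_c^W f\|_\infty\lesssim|t|^{-1/2}\|f\|_1,\qquad \|\langle x\rangle^{-1}e^{-itH_W}P_c^W f\|_\infty\lesssim|t|^{-3/2}\|\langle x\rangle f\|_1,
\]
together with the uniform bound $\|e^{-itH_W}P_c^W\|_{1\to\infty}\lesssim 1+|t|^{-1/2}$. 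The improved $|t|^{-3/2}$ rate is equivalent to the vanishing of the reduced resolvent $(H_W-\lambda)^{-1}P_c^W$ to first order in $\sqrt\lambda$ at $\lambda=0$, which in one dimension is precisely the no-resonance hypothesis. Rescaling $t\mapsto t/\epsilon$ then produces the right-hand sides of \eqref{eq:dispthm1}--\eqref{eq:dispthm3}.

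Setting $K(t):=(-\partial_x^2+V(t))P_c(t)$ and differentiating $s\mapsto e^{-i(t-s)K(t)/\epsilon}\psi(s)$ yields the Duhamel identity
\[
\psi(t)=e^{-itK(t)/\epsilon}\psi_0+\frac{1}{i\epsilon}\int_0^t e^{-i(t-s)K(t)/\epsilon}\bigl(K(s)-K(t)\bigr)\psi(s)\,ds.
\]
The leading term is controlled by the fixed-potential estimate applied to $K(t)$, after writing $\psi_0=P_c(t)\psi_0+(P_c(0)-P_c(t))\psi_0$ and absorbing the second summand as an $O(t)$ weighted-$L^1$ correction (since $t\mapsto P_c(t)$ is $C^1$ with rank-one derivative built from the exponentially localized eigenfunction of $-\partial_x^2+V(t)$). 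The perturbation decomposes as $K(s)-K(t)=(V(s)-V(t))P_c(s)+(-\partial_x^2+V(t))(P_c(s)-P_c(t))$, each summand of size $O(|s-t|)$ between appropriate weighted $L^1$-spaces, the second being a smoothing rank-one operator. A bootstrap of the monotone quantity $M(t):=\sup_{0<s\le t}(s/\epsilon)^{3/2}\|\langle x\rangle^{-1}\psi(s)\|_\infty/\|\langle x\rangle\psi_0\|_1$ and its analogues for \eqref{eq:dispthm2}--\eqref{eq:dispthm3} closes the estimates, provided the $s$-integral is split into $[0,\epsilon]$ (handled by the uniform frozen bound), $[\epsilon,t/2]$ (decay in $s$) and $[t/2,t]$ (decay in $t-s$), each contributing a Beta-function-type integral dominated by the leading term.

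The principal obstacle is the $1/\epsilon$ prefactor in the Duhamel integrand: a blunt substitution of the weighted frozen estimate only yields $\frac{1}{\epsilon}\int_0^t(\epsilon/(t-s))^{3/2}|s-t|\,ds\sim(\epsilon t)^{1/2}$, much larger than the target $(\epsilon/t)^{3/2}$. The remedy is an integration by parts in $s$ via
\[
\tfrac{1}{i\epsilon}\,e^{-i(t-s)K(t)/\epsilon}P_c(t)=\partial_s\bigl[e^{-i(t-s)K(t)/\epsilon}K(t)^{-1}P_c(t)\bigr],
\]
which transfers the singular $1/\epsilon$ onto the \emph{reduced resolvent} $K(t)^{-1}P_c(t)$. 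In one dimension this reduced resolvent has a Green's function growing linearly at infinity, so its boundedness between weighted spaces again invokes the no-resonance hypothesis; correctly threading the $\langle x\rangle$-weights through it is the most delicate technical point. The boundary term at $s=0$ is controlled by one more application of the frozen estimate, and the new interior integrand, which picks up the additional factor $|s-t|$ from $K(s)-K(t)=\int_t^s K'(\sigma)\,d\sigma$, produces convergent integrals without any further loss in $\epsilon$.
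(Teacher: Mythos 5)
Your overall frame (Duhamel against a frozen propagator, the fixed-potential $1$D estimates of Weder/Goldberg--Schlag/Mizutani as input, a bootstrap in $t$) matches the paper's, but the way you propose to tame the $\epsilon^{-1}$ prefactor contains a genuine gap. The integration by parts
$\tfrac{1}{i\epsilon}e^{-i(t-s)K(t)/\epsilon}P_c(t)=\partial_s\bigl[e^{-i(t-s)K(t)/\epsilon}K(t)^{-1}P_c(t)\bigr]$
does not remove the singular prefactor: when $\partial_s$ falls on $\psi(s)$ in the interior term you get $(K(s)-K(t))\partial_s\psi(s)=\tfrac{1}{i\epsilon}(K(s)-K(t))K(s)\psi(s)$, so the $\epsilon^{-1}$ reappears, now attached to $\partial_x^2\psi(s)$, which none of your bootstrap quantities controls; the claim that the new integrand "produces convergent integrals without any further loss in $\epsilon$" is exactly the point that needs proof and is not supplied. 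In addition, your scheme rests on $\|K(s)-K(t)\|\lesssim|s-t|$ between \emph{weighted} $L^1$ spaces, but Assumption \ref{ass:disp} only gives $V\in C([0,T];\langle x\rangle^{-2}L^1)$ (the $C^1$ regularity is in the unweighted $L^1+L^\infty$), so the Lipschitz bound in the weighted norm is not available. Finally, $K(t)^{-1}P_c(t)$ is the boundary value of the resolvent at the bottom of the essential spectrum; its zero-energy Green's function grows linearly, so composing it with the weighted dispersive estimate costs extra powers of $\langle x\rangle$ that the hypothesis $V\in\langle x\rangle^{-2}L^1$ does not obviously afford.

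The paper closes the $\epsilon^{-1}$ differently, and you may want to compare. First, it works with $\tilde\psi=P_c(t)\psi(t)$ and writes its exact equation \emph{before} applying Duhamel: the terms coming from $\partial_tP_c(t)$ are rank one and carry an explicit factor $\epsilon$, so they never meet the $\epsilon^{-1}$. The only term retaining $\epsilon^{-1}$ is $\tfrac{1}{i\epsilon}\int e^{-i(t-s)(-\partial_x^2+V(t_0))/\epsilon}P_c(t_0)(V(s)-V(t_0))\tilde\psi(s)\,ds$, and it is handled without any integration by parts: the bootstrap ansatz gives $\|\langle x\rangle^{-1}\tilde\psi(s)\|_\infty\lesssim\mathcal M\min\{(\epsilon/s)^{1/2},(\epsilon/s)^{3/2}\}$, and the convolution of two such kernels (Lemma \ref{intbounds}) gains exactly one factor of $\epsilon$, cancelling the prefactor. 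The remaining coefficient $\sup_s\|\langle x\rangle^2(V(s)-V(t_0))\|_1$ is made small not by a Lipschitz bound but by uniform continuity of $V$ on the compact interval, restricting to windows $[t_0,t_0+\delta]$, absorbing into the left-hand side, and iterating over finitely many windows (with a Gronwall step for the leftover $\epsilon\int\|\langle x\rangle^{-1}\tilde\psi\|_\infty$ term). This mechanism is the missing ingredient in your argument: the $\epsilon$ you need is already present in the convolution of the two dispersive kernels once the decay of $\psi(s)$ itself is fed into the Duhamel integral.
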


For us, the most important one of these bounds is \eqref{eq:dispthm1}, which yields an integrable $t^{-3/2}$ decay at the expense of introducing weights into the norms. However, we also need the bounds \eqref{eq:dispthm2} and \eqref{eq:dispthm3} without weights on the right side when dealing with some remainder terms.

The improved bound \eqref{eq:dispthm1} relies fundamentally on the non-resonance assumption on $-\partial_x^2+V(t)$. In the context of asymptotic stability of ground states for the non-linear Schr\"odinger equation, the observation that a non-resonance condition improves the usual $t^{-1/2}$ decay to a $t^{-3/2}$ decay is due to Buslaev and Perel'man \cite{BuPe} and has been used in many works thereafter, see, e.g., \cite{KrSc,GaSi}. The bounds in Theorem \ref{dispersive} for time-dependent $V$ seem to be new, but as an input in the proof we use bounds for time-independent $V$. Such bounds go back to Weder \cite{We} and are due to Goldberg and Schlag \cite{GoSc} and Mizutani \cite{Mi} under rather minimal assumption decay conditions on $V$. For further references we refer to the review \cite{Sc}. For dispersive estimate for Schr\"odinger operators with time-dependent potentials in a non-adiabatic setting in the three-dimensional case we refer to \cite{RoSc}.

\smallskip

The research described in this paper was finished in early 2017 and the results were presented at conferences in Stuttgart, Oberwolfach and Munich between April and June 2017 and announced in \cite{Fr}. In April 2019 the authors received a preprint by Leopold, Rademacher, Schlein and Seiringer \cite{LeRaScSe} which contains closely related results for the corresponding three-dimensional system, obtained by different means. Let us compare their work with ours. The techniques from \cite{LeRaScSe} extend immediately to the one-dimensional case considered here, but it is not clear whether our techniques extend to the three-dimensional case. While the dispersion in three dimensions is stronger, which would lead to some simplifications in our approach, the corresponding Schr\"odinger operator in three dimensions has typically infinitely many negative eigenvalues, which is probably outside of the scope of our methods.

The assertions in \cite{LeRaScSe}, translated into the one-dimensional setting, are different from ours. In \cite{LeRaScSe} $\psi_t$ is compared with the ground state of $-\partial_x^2+\phi_t$ (multiplied by a suitable phase), which still depends on $\epsilon$. On the other hand, our comparison dynamics $(Q,V)$ are independent of $\epsilon$ (again, up to an explicit phase). Moreover, for times of order one our bound on the approximation error for $\psi$ in $L^2$ is of order $\epsilon$ whereas it is only of order $\sqrt\epsilon$ in \cite{LeRaScSe}. We have stated our bounds only up to times of order one, even when $T^*=\infty$. In contrast, the bounds in \cite{LeRaScSe} are possibly valid, with a worse error bound, up to times of order $o(\epsilon^{-1})$, provided a certain spectral assumption is satisfied. This assumption is only verified up to times of order one. The problem of approximating $\phi_t$ is not considered in \cite{LeRaScSe}.

Finally, \cite{LeRaScSe} contains results about the relation between the classical and the quantum model, which we did not study in this paper. For earlier results about the relation between the classical and quantum dynamics we refer to \cite{FrSc,FrGa,Gr}.


\section{Dispersive estimates with time-dependent potentials}\label{sec:linear}

Our goal in this section is to prove Theorem \ref{dispersive}.


\subsection{Preparations for the proof}\label{sec:dispprep}

We denote by $\phi(t)$ an $L^2$-normalized eigenfunction corresponding to the unique negative eigenvalue of $-\partial_x^2+V(t)$ and set
$$
P_d(t):=1-P_c(t) = |\phi(t)\rangle\langle\phi(t)| \,.
$$
The second equality follows from Assumption \ref{ass:disp}. Under our assumptions on $V$, it is well-known that the eigenfunctions can be chosen to satisfy $\phi\in C^1([0,T],H^1(\R))$. In fact, in our situation, where $\phi(t)$ corresponds to the lowest eigenvalue, such a choice is fixed by requiring that $\phi(t)$ is non-negative for any $t\in[0,T]$. In the following it is only important that $\phi(t)$ is real-valued which, since $\|\phi(t)\|_2=1$ implies that $\langle \phi(t),\partial_t\phi(t)\rangle =0$.

We will frequently use the following properties of these eigenfunctions,
$$
\| \langle x\rangle \phi(t) \|_1 \,, \| \langle x\rangle^{-1} \phi(t)\|_\infty \,, \|\langle x\rangle \partial_t\phi(t) \|_1 \lesssim 1 \,.
$$
The uniform boundedness of the first two norms follows from the fact that $\phi(t)$ satisfies pointwise exponential bounds. Those follow, for instance, by writing the equation for $\phi(t)$ as a Volterra equation and using the fact that $V\in L^\infty([0,T],L^1(\R))$ and that the eigenvalue stays away from zero; see, e.g., \cite[Chapter 5]{Ya}. The uniform boundedness of the third norm follows by differentiating the equation for $\phi(t)$ with respect to $t$. Again using ODE techniques, it is easy to see that $\partial_t\phi$ satisfies pointwise exponential bounds (more precisely, it behaves like an exponential possibly multiplied by a linearly growing factor).

In order to prove Theorem \ref{dispersive} we will use Duhamel's formula in the following form, where we abbreviate
$$
\tilde\psi(t) = P_c(t)\psi(t) \,.
$$

\begin{lemma}\label{duhamel}
For all $t,t_0\in[0,T]$,
\begin{align}
\label{eq:duhamel}
\tilde\psi(t) & = e^{-i(-\partial_x^2+V(t_0))t/\epsilon} P_c(t_0)\psi_0 + P_d(t_0)\tilde\psi(t) \notag \\
& \quad + \frac{1}{i\epsilon}\int_{0}^t e^{-i(-\partial_x^2+V(t_0))(t-s)/\epsilon} P_c(t_0)(V(s)-V(t_0))\tilde\psi(s)\,ds \notag \\
& \quad - \int_{0}^t \langle\partial_s\phi(s),\tilde\psi(s)\rangle e^{-i(-\partial_x^2+V(t_0))(t-s)/\epsilon} P_c(t_0)\phi(s)\,ds \notag \\
& \quad - \int_{0}^t \int_0^s \langle \partial_{s_1}\phi(s_1),\tilde\psi(s_1)\rangle \,ds_1\, e^{-i(-\partial_x^2+V(t_0))(t-s)/\epsilon} P_c(t_0) \partial_s\phi(s)\,ds \,.
\end{align}
\end{lemma}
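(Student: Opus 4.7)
The plan is to derive the identity by a straightforward Duhamel expansion after first deriving a closed evolution equation for $\tilde\psi(t) = P_c(t)\psi(t)$.

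First, I would compute the coefficient $a(t) := \langle \phi(t),\psi(t)\rangle$, so that $\psi = \tilde\psi + a(t)\phi(t)$. Using $i\epsilon\partial_t\psi = (-\partial_x^2 + V(t))\tilde\psi$ and $(-\partial_x^2 + V(t))\phi(t) = E(t)\phi(t)$ with $\phi(t) \perp \tilde\psi(t)$, one obtains $\partial_t a(t) = \langle \partial_t\phi(t),\psi(t)\rangle$. Expanding $\psi = \tilde\psi + a\phi$ and using $\langle\phi,\partial_t\phi\rangle = 0$ collapses this to $\partial_t a(t) = \langle \partial_t\phi(t),\tilde\psi(t)\rangle$. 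Since $P_c(0)\psi_0 = \psi_0$, the initial value is $a(0) = 0$, so
$$a(t) = \int_0^t \langle \partial_s\phi(s),\tilde\psi(s)\rangle\,ds.$$

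Next I would derive the equation satisfied by $\tilde\psi$. Differentiating $\tilde\psi = \psi - a(t)\phi(t)$ in time gives
$$i\epsilon\,\partial_t\tilde\psi = (-\partial_x^2 + V(t))\tilde\psi - i\epsilon\,\langle\partial_t\phi(t),\tilde\psi(t)\rangle\,\phi(t) - i\epsilon\,a(t)\,\partial_t\phi(t).$$
Writing $-\partial_x^2 + V(t) = H_0 + (V(t) - V(t_0))$ with $H_0 := -\partial_x^2 + V(t_0)$, this is a linear inhomogeneous equation for $\tilde\psi$ driven by $H_0$.

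Now I would apply $P_c(t_0)$ to both sides; since $P_c(t_0)$ commutes with $H_0$, the variation-of-constants formula (Duhamel) on the semigroup $e^{-iH_0 t/\epsilon}$ applied to $P_c(t_0)\tilde\psi$ yields
\begin{align*}
P_c(t_0)\tilde\psi(t) &= e^{-iH_0 t/\epsilon} P_c(t_0)\tilde\psi(0) + \tfrac{1}{i\epsilon}\int_0^t e^{-iH_0(t-s)/\epsilon} P_c(t_0)(V(s)-V(t_0))\tilde\psi(s)\,ds \\
&\quad - \int_0^t \langle\partial_s\phi,\tilde\psi\rangle\, e^{-iH_0(t-s)/\epsilon} P_c(t_0)\phi(s)\,ds \\
&\quad - \int_0^t a(s)\, e^{-iH_0(t-s)/\epsilon} P_c(t_0)\partial_s\phi(s)\,ds,
\end{align*}
where $\tilde\psi(0) = P_c(0)\psi_0 = \psi_0$, hence $P_c(t_0)\tilde\psi(0) = P_c(t_0)\psi_0$.

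Finally, I would recover $\tilde\psi(t)$ by adding the discrete component, $\tilde\psi(t) = P_c(t_0)\tilde\psi(t) + P_d(t_0)\tilde\psi(t)$, and substitute the integral expression for $a(s)$ derived in the first step into the last term. This produces exactly the stated formula \eqref{eq:duhamel}. There is no real obstacle here beyond careful bookkeeping of the two projections $P_c(t)$ and $P_c(t_0)$; the only point requiring some care is justifying that the contribution from differentiating $P_c(t)\psi$ arranges itself into the two correction terms along $\phi(s)$ and $\partial_s\phi(s)$, which is what the computation of $a(t)$ achieves.
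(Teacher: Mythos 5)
Your proposal is correct, and its two core computations coincide with the paper's: the identity $\tfrac{d}{dt}\langle\phi(t),\psi(t)\rangle=\langle\partial_t\phi(t),\tilde\psi(t)\rangle$ (using $\langle\phi,(-\partial_x^2+V)P_c\psi\rangle=E\langle\phi,P_c\psi\rangle=0$ and $\langle\phi,\partial_t\phi\rangle=0$), and the evolution equation for $\tilde\psi$ obtained from $\partial_tP_c(t)=-|\phi\rangle\langle\partial_t\phi|-|\partial_t\phi\rangle\langle\phi|$. The only genuine difference is the final assembly: you apply $P_c(t_0)$ to the differential equation first (legitimate, since $P_c(t_0)$ is a spectral projection of $H_0=-\partial_x^2+V(t_0)$ and hence commutes with it) and then run a single Duhamel formula over $[0,t]$ for $w=P_c(t_0)\tilde\psi$, recovering $\tilde\psi(t)=w(t)+P_d(t_0)\tilde\psi(t)$ at the end. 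The paper instead first proves an unprojected Duhamel identity on a general subinterval $[t_*,t]$ and then glues two instances of it at $t_*=t_0$, applying $P_c(t_0)$ and the propagator afterwards. Your route is the more economical of the two and yields exactly \eqref{eq:duhamel}; the paper's two-step version buys nothing extra here, since the projected integrands over $[0,t_0]$ and $[t_0,t]$ recombine into the same single integrals over $[0,t]$.
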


\begin{proof}
We first prove that for all $t,t_0,t_*\in[0,T]$,
\begin{align}
\label{eq:duhamel0}
\tilde\psi(t) & = e^{-i(-\partial_x^2+V(t_0))(t-t_*)/\epsilon} \tilde\psi(t_*) \notag \\
& \quad + \frac{1}{i\epsilon}\int_{t_*}^t e^{-i(-\partial_x^2+V(t_0))(t-s)/\epsilon} (V(s)-V(t_0))\tilde\psi(s)\,ds \notag \\
& \quad - \int_{t_*}^t \langle\partial_s\phi(s),\tilde\psi(s)\rangle e^{-i(-\partial_x^2+V(t_0))(t-s)/\epsilon}\phi(s)\,ds \notag \\
& \quad - \int_{t_*}^t \int_0^s \langle \partial_{s_1}\phi(s_1),\tilde\psi(s_1)\rangle \,ds_1\, e^{-i(-\partial_x^2+V(t_0))(t-s)/\epsilon} \partial_s\phi(s)\,ds \,.
\end{align}

Since $\partial_t P_c(t) = -\partial_t P_d(t)= - |\phi(t)\rangle\langle\partial_t\phi(t)|-|\partial_t\phi(t)\rangle\langle\phi(t)|$, the equation for $\tilde\psi$ reads
$$
i\epsilon\partial_t \tilde\psi = (-\partial_x^2 +V(t))\tilde\psi 
- i\epsilon \langle\partial_t\phi,\tilde\psi\rangle\phi 
-i\epsilon \langle \phi,\psi\rangle \partial_t\phi \,.
$$
(Here we also used the fact that $\langle\partial_t\phi,\psi\rangle= \langle\partial_t\phi,\tilde\psi\rangle$, since $\|\phi\|^2=1$ implies $\langle\phi,\partial_t\phi\rangle=0$, that is, $P_c\partial_t\phi=\partial_t\phi$.) Therefore, by Duhamel's formula,
\begin{align*}
\tilde\psi(t) & = e^{-i(-\partial_x^2+V(t_0))(t-t_*)/\epsilon} \tilde\psi(t_*) + \frac{1}{i\epsilon}\int_{t_*}^t e^{-i(-\partial_x^2+V(t_0))(t-s)/\epsilon} (V(s)-V(t_0))\tilde\psi(s)\,ds \\
& \quad - \int_{t_*}^t \langle\partial_s\phi(s),\tilde\psi(s)\rangle e^{-i(-\partial_x^2+V(t_0))(t-s)/\epsilon}\phi(s)\,ds \\
& \quad - \int_{t_*}^t \langle \phi(s),\psi(s)\rangle e^{-i(-\partial_x^2+V(t_0))(t-s)/\epsilon}\partial_s\phi(s)\,ds \,.
\end{align*}
In order to replace $\psi$ in the last integral by $\tilde\psi$ we note that
\begin{align*}
\frac{d}{dt} \langle \phi(t),\psi(t)\rangle & = \langle \partial_t\phi(t),\psi(t)\rangle + \langle \phi(t),\dot\psi(t)\rangle \\
& = \langle \partial_t\phi(t),\psi(t)\rangle + \frac{1}{i\epsilon} \langle\phi(t),(-\partial_x^2+V(t))P_c(t)\psi(t)\rangle \\
& = \langle \partial_t\phi(t),\psi(t)\rangle \\
& = \langle \partial_t\phi(t),\tilde\psi(t)\rangle \,.
\end{align*}
Thus, recalling also $P_c(0)\psi_0=\psi_0$,
\begin{equation*}
\langle \phi(t),\psi(t)\rangle = \int_0^t \langle \partial_s\phi(s),\tilde\psi(s)\rangle \,ds \,.
\end{equation*}
Inserting this into the above formula we finally obtain \eqref{eq:duhamel0}.

In order to prove the equality in the lemma, we choose $t_0=t_*$ in \eqref{eq:duhamel0} and apply $P_c(t_0)$ to obtain
\begin{align*}
\tilde\psi(t) & = P_d(t_0)\tilde\psi(t) + e^{-i(-\partial_x^2+V(t_0))(t-t_0)/\epsilon} \tilde\psi(t_0) \notag \\
& \quad + \frac{1}{i\epsilon}\int_{t_0}^t e^{-i(-\partial_x^2+V(t_0))(t-s)/\epsilon} P_c(t_0)(V(s)-V(t_0))\tilde\psi(s)\,ds \notag \\
& \quad - \int_{t_0}^t \langle\partial_s\phi(s),\tilde\psi(s)\rangle e^{-i(-\partial_x^2+V(t_0))(t-s)/\epsilon} P_c(t_0)\phi(s)\,ds \notag \\
& \quad - \int_{t_0}^t \int_0^s \langle \partial_{s_1}\phi(s_1),\tilde\psi(s_1)\rangle \,ds_1\, e^{-i(-\partial_x^2+V(t_0))(t-s)/\epsilon}P_c(t_0)\partial_s\phi(s)\,ds \,.
\end{align*}
On the other hand, taking $t_*=0$ and $t=t_0$ in \eqref{eq:duhamel0} and applying $e^{-i(-\partial_x^2+V(t_0))(t-t_0)/\epsilon} P_c(t_0)$ to both sides of the equation we obtain
\begin{align*}
& e^{-i(-\partial_x^2+V(t_0))(t-t_0)/\epsilon}\tilde\psi(t_0)
= e^{-i(-\partial_x^2+V(t_0))t/\epsilon}P_c(t_0)\psi_0 \notag \\
& \qquad + \frac{1}{i\epsilon} \int_0^{t_0} e^{-i(-\partial_x^2+V(t_0))(t-s)/\epsilon} P_c(t_0) \left( V(s)-V(t_0)\right)\tilde\psi(s)\,ds \notag \\
& \qquad - \int_0^{t_0} \langle\partial_s\phi(s),\tilde\psi(s)\rangle e^{-i(-\partial_x^2+V(t_0))(t-s)/\epsilon} P_c(t_0) \phi(s)\,ds \notag \\
& \qquad - \int_0^{t_0} \int_0^s \langle\partial_{s_1}\phi(s_1),\tilde\psi(s_1)\rangle\,ds_1 e^{-i(-\partial_x^2+V(t_0))(t-s)/\epsilon} P_c(t_0)\partial_s\phi(s)\,ds \,.
\end{align*}
Combining the previous two formulas we arrive at the claimed expression \eqref{eq:duhamel}.
\end{proof}

The following simple bounds will be useful in the proof of Theorem \ref{dispersive}.

\begin{lemma}\label{intbounds}
We have for all $\epsilon>0$, $t>0$ and $T>0$,
\begin{align*}
& \int_0^\infty  \min\left\{ \left| \frac{\epsilon}{t-s}\right|^{1/2},\left| \frac{\epsilon}{t-s}\right|^{3/2} \right\} \min\left\{ \left( \frac{\epsilon}{s}\right)^{1/2},\left( \frac{\epsilon}{s}\right)^{3/2} \right\} ds  \lesssim \epsilon \min\left\{ \left(\frac{\epsilon}{t}\right)^{1/2}, \left( \frac{\epsilon}{t} \right)^{3/2} \right\},
\end{align*}
\begin{align*}
& \int_0^\infty  \min\left\{ \left| \frac{\epsilon}{t-s}\right|^{1/2},\left| \frac{\epsilon}{t-s}\right|^{3/2} \right\} \int_0^s \min\left\{ \left( \frac{\epsilon}{s_1}\right)^{1/2},\left( \frac{\epsilon}{s_1}\right)^{3/2} \right\} ds_1 \,ds \lesssim \epsilon^2 \,,
\end{align*}
\begin{align*}
& \int_0^\infty  \min\left\{ \left| \frac{\epsilon}{t-s}\right|^{1/2},\left| \frac{\epsilon}{t-s}\right|^{3/2} \right\} \left( \frac{\epsilon}{s}\right)^{1/2} ds \lesssim \epsilon \left(\frac{\epsilon}{t}\right)^{1/2},
\end{align*}
and
\begin{align*}
& \int_0^T  \min\left\{ \left| \frac{\epsilon}{t-s}\right|^{1/2},\left| \frac{\epsilon}{t-s}\right|^{3/2} \right\} \int_0^s \left( \frac{\epsilon}{s_1}\right)^{1/2} ds_1 \,ds \lesssim \epsilon^2 \left( \frac{T}{\epsilon}\right)^{1/2}.
\end{align*}
\end{lemma}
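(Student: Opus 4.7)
The plan is to reduce all four estimates to scale-invariant integral inequalities by the change of variables $s=\epsilon u$, $s_1=\epsilon u_1$, $t=\epsilon\tau$ (and, in (4), $T=\epsilon L$). Under this substitution $\min\{|\epsilon/(t-s)|^{1/2},|\epsilon/(t-s)|^{3/2}\}$ becomes $f_0(|\tau-u|)$ where $f_0(x):=\min\{x^{-1/2},x^{-3/2}\}$, and similarly $\min\{(\epsilon/s)^{1/2},(\epsilon/s)^{3/2}\}=f_0(u)$, $(\epsilon/s)^{1/2}=u^{-1/2}$, with $ds=\epsilon\,du$. A key preliminary is that $f_0\in L^1(\R_+)$: it equals $u^{-1/2}$ on $(0,1]$ (integrable at $0$) and $u^{-3/2}$ on $[1,\infty)$ (integrable at $\infty$), with explicit total mass $4$. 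Each rescaling produces the correct overall power of $\epsilon$ on the right-hand side, so it suffices to verify $\epsilon$-free bounds.

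The core estimate, underlying (1), is the convolution-type inequality
\[
(f_0\ast f_0)(\tau):=\int_0^\infty f_0(|\tau-u|)\,f_0(u)\,du\lesssim f_0(\tau),\qquad \tau>0.
\]
I would prove this by splitting the $u$-axis into $[0,\tau/2]$, $[\tau/2,3\tau/2]$, and $[3\tau/2,\infty)$. On the first interval $|\tau-u|\ge\tau/2$, so $f_0(|\tau-u|)\lesssim f_0(\tau)$ and the contribution is bounded by $f_0(\tau)\,\|f_0\|_{L^1}$. On the third interval $|\tau-u|\ge u/3$, so $f_0(|\tau-u|)\lesssim f_0(u)$; the resulting integral of $f_0(u)^2$ over $[3\tau/2,\infty)$ is controlled by $\tau^{-2}$ for $\tau\ge 1$ and by a constant for $\tau\le 1$, both $\lesssim f_0(\tau)$. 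On the middle interval $u\sim\tau$, so $f_0(u)\lesssim f_0(\tau)$ and the remaining factor $f_0(|\tau-u|)$ integrates to at most $\|f_0\|_{L^1}$.

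Bounds (2), (3) and (4) are then corollaries of this technique. For (2), the inner integral is bounded by $\epsilon\|f_0\|_{L^1}$ uniformly in $s$, so the outer integral reduces to $\epsilon\int_0^\infty f_0(|\tau-u|)\,\epsilon\,du\lesssim \epsilon^2$. For (3) the same three-region splitting applies, with $f_0(u)$ replaced by $u^{-1/2}$: on $[0,\tau/2]$ we factor out $f_0(\tau)\lesssim \tau^{-1/2}$ and use $\int_0^{\tau/2}u^{-1/2}du\lesssim \tau^{1/2}$; on the middle we factor out $u^{-1/2}\sim\tau^{-1/2}$ and use $\|f_0\|_{L^1}$; on the tail we use $f_0(|\tau-u|)u^{-1/2}\lesssim u^{-3/2}\cdot u^{-1/2}=u^{-2}$ to obtain $\lesssim\tau^{-1}$ for $\tau\ge 1$, and for $\tau\le 1$ the logarithm from $\int_{3\tau/2}^1 u^{-1}du$ is absorbed into $\tau^{-1/2}$ since $|\ln\tau|\lesssim\tau^{-1/2}$. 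For (4), the inner integral evaluates exactly to $2\epsilon^{1/2}s^{1/2}\le 2\epsilon^{1/2}T^{1/2}$, and factoring this out leaves $\int_0^T f_0(|\tau-u|)\,\epsilon\,du\le \epsilon\,2\|f_0\|_{L^1}$, giving the stated $\epsilon^2(T/\epsilon)^{1/2}$.

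The main obstacle is the convolution estimate $f_0\ast f_0\lesssim f_0$ used for (1); it is the only place where the case analysis is genuinely subtle, because one must match the correct $\tau$-power on both sides uniformly in all three regions, including the transition $\tau\sim 1$. Once this is in hand, (2)--(4) follow by bounding either the inner integral or one of the convoluted weights by the elementary $L^1$ (or $L^\infty$) norm.
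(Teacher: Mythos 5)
Your proof is correct and follows essentially the same route as the paper's: scale out $\epsilon$ and establish the convolution bound for $f_0(x)=\min\{x^{-1/2},x^{-3/2}\}$ by splitting the integration domain into regions where one factor is controlled pointwise and the other is integrated (the paper uses a two-region split at $s=t$ plus the symmetry $s\mapsto t-s$, rather than your three regions, and handles the tail $s\ge t$ by the monotonicity bound $f_0(s)\le f_0(t)$ instead of integrating $f_0^2$). The one imprecision is your claim in part (1) that $\int_{3\tau/2}^{\infty}f_0(u)^2\,du$ is bounded by a constant for $\tau\le 1$ --- it in fact grows like $\ln(1/\tau)$ since $f_0(u)^2=u^{-1}$ near zero --- but the conclusion $\lesssim f_0(\tau)$ survives because $\ln(1/\tau)\lesssim\tau^{-1/2}$, exactly as you note yourself in the analogous step of part (3).
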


\begin{proof}
By scaling we may assume in the following that $\epsilon=1$. To prove the first inequality we split the integral into the regions $s\leq t$ and $s>t$. For the first integral we have
\begin{align*}
& \int_0^t \min\left\{ \frac{1}{|t-s|^{1/2}}, \frac{1}{|t-s|^{3/2}} \right\} \min\left\{\frac{1}{s^{1/2}},\frac{1}{s^{3/2}} \right\} ds \\
& \quad = 2 \int_0^{t/2} \min\left\{ \frac{1}{(t-s)^{1/2}}, \frac{1}{(t-s)^{3/2}} \right\} \min\left\{\frac{1}{s^{1/2}},\frac{1}{s^{3/2}} \right\} ds \\
& \quad \lesssim \min\left\{ \frac{1}{t^{1/2}}, \frac{1}{t^{3/2}} \right\} \int_0^{t/2} \min\left\{\frac{1}{s^{1/2}},\frac{1}{s^{3/2}} \right\} ds \lesssim \min\left\{ \frac{1}{t^{1/2}}, \frac{1}{t^{3/2}}\right\}.
\end{align*}
For the second integral we have
\begin{align*}
& \int_t^\infty \min\left\{ \frac{1}{|t-s|^{1/2}}, \frac{1}{|t-s|^{3/2}} \right\} \min\left\{\frac{1}{s^{1/2}},\frac{1}{s^{3/2}} \right\} ds \\
& \quad \leq \min\left\{ \frac{1}{t^{1/2}}, \frac{1}{t^{3/2}}\right\}
\int_t^\infty \min\left\{ \frac{1}{(s-t)^{1/2}}, \frac{1}{(s-t)^{3/2}} \right\} ds
\lesssim \min\left\{ \frac{1}{t^{1/2}}, \frac{1}{t^{3/2}}\right\}.
\end{align*}
This proves the first inequality.

The second inequality simply follows from
$$
\int_0^{s} \min\left\{\frac{1}{s_1^{1/2}},\frac{1}{s_1^{3/2}} \right\}ds_1 \lesssim 1
$$
and
\begin{equation}
\label{eq:finiteintegral2}
\int_0^{\infty}  \min\left\{ \frac{1}{|t-s|^{1/2}}, \frac{1}{|t-s|^{3/2}} \right\} ds
\leq \int_\R  \min\left\{ \frac{1}{|t-s|^{1/2}}, \frac{1}{|t-s|^{3/2}} \right\} ds <\infty \,.
\end{equation}

To prove the third inequality we split the integral into the regions $s\leq t/2$ and $s>t/2$. For the first integral we have
\begin{align*}
& \int_0^{t/2} \min\left\{ \frac{1}{|t-s|^{1/2}}, \frac{1}{|t-s|^{3/2}} \right\} \frac{1}{s^{1/2}}\, ds  \lesssim \min\left\{ \frac{1}{t^{1/2}}, \frac{1}{t^{3/2}} \right\} \int_0^{t/2} \frac{1}{s^{1/2}}\, ds \lesssim \min\left\{ 1, \frac{1}{t}\right\}.
\end{align*}
For the second integral we have
\begin{align*}
& \int_{t/2}^\infty \min\left\{ \frac{1}{|t-s|^{1/2}}, \frac{1}{|t-s|^{3/2}} \right\} \frac{1}{s^{1/2}}\, ds \leq \frac{1}{t^{1/2}}
\int_{t/2}^\infty \min\left\{ \frac{1}{|t-s|^{1/2}}, \frac{1}{|t-s|^{3/2}} \right\} ds
 \lesssim \frac{1}{t^{1/2}}\,.
\end{align*}
This proves the first inequality.

The fourth inequality simply follows from
$$
\int_0^{s} \frac{1}{s_1^{1/2}}\,ds_1 = 2s^{1/2} \leq 2 T^{1/2}
$$
for $s\leq T$ and \eqref{eq:finiteintegral2}. This proves the lemma.
\end{proof}


\subsection{Proof of Theorem \ref{dispersive}. First part.}

We introduce the quantity
$$
\mathcal M(t_0,t) := \sup_{t_0\leq s\leq t} \left( \min\left\{\left(\frac{\epsilon}{s}\right)^{1/2}, \left(\frac{\epsilon}{s}\right)^{3/2}\right\} \right)^{-1} \left\|\langle x\rangle^{-1} \tilde\psi(s)\right\|_\infty
$$
and abbreviate
$$
\mathcal M(t) := \mathcal M(0,t) \,.
$$
We will show that there is a $\delta>0$ such that for all $t_0\in[0,T-\delta]$ and all $\epsilon\in(0,1]$ one has
\begin{align}\label{eq:dispersive}
& \mathcal M(t_0,t_0+\delta) \lesssim \mathcal M(t_0) + \left\| \langle x\rangle \psi_0 \right\|_1
\end{align}
with the convention that $\mathcal M(0)=0$.

Clearly, applying \eqref{eq:dispersive} iteratively at $t_0=0,\delta,2\delta,\ldots$ we obtain inequality \eqref{eq:dispthm1}.

Thus, let $0\leq t_0\leq t\leq T$. All implied constants below are independent of $t$ and $t_0$. Our starting point is the Duhamel formula \eqref{eq:duhamel}. Using the dispersive estimate in \cite{Mi} (combined with that in \cite{GoSc}) we obtain
\begin{align*}
& \left\| \langle x \rangle^{-1} \tilde\psi(t) \right\|_\infty \lesssim \min\left\{ \left( \frac{\epsilon}{t}\right)^{1/2},\left( \frac{\epsilon}{t}\right)^{3/2} \right\} \left\|\langle x\rangle \psi_0 \right\|_1 + \left\| \langle x \rangle^{-1} P_d(t_0)\tilde\psi(t) \right\|_\infty \\
& \qquad\quad + \epsilon^{-1} \int_0^t \min\left\{\left(\frac{\epsilon}{t-s}\right)^{1/2}, \left(\frac{\epsilon}{t-s}\right)^{3/2}\right\} \left\|\langle x\rangle (V(s)-V(t_0))\tilde\psi(s)\right\|_1 ds \\
& \qquad\quad + \int_0^{t_0} \left|\langle\partial_s\phi(s),\tilde\psi(s)\rangle \right| \min\left\{\left(\frac{\epsilon}{t-s}\right)^{1/2}, \left(\frac{\epsilon}{t-s}\right)^{3/2}\right\} \left\|\langle x\rangle \phi(s)\right\|_1 ds \\
& \qquad\quad + \int_{t_0}^t \left|\langle\partial_s\phi(s),\tilde\psi(s)\rangle \right| \min\left\{\left(\frac{\epsilon}{t-s}\right)^{1/2}, \left(\frac{\epsilon}{t-s}\right)^{3/2}\right\} \left\|\langle x\rangle P_c(t_0) \phi(s)\right\|_1 ds \\
& \qquad\quad + \int_{0}^t \int_0^s \left|\langle \partial_{s_1}\phi(s_1),\tilde\psi(s_1)\rangle\right| ds_1\, \min\left\{\left(\frac{\epsilon}{t-s}\right)^{1/2}\!\!, \left(\frac{\epsilon}{t-s}\right)^{3/2}\right\} \left\| \langle x\rangle \partial_s\phi(s)\right\|_1 ds.
\end{align*}
We treat the six terms on the right side separately.

The first term on the right side is already of the desired form.

To bound the second term we use the fact that $P_d(t)\tilde\psi(t)=0$ and obtain
\begin{align*}
\left\|\langle x\rangle^{-1} P_d(t_0) \tilde\psi(t)\right\|_\infty
& = \left\|\langle x\rangle^{-1} \left(P_d(t_0)-P_d(t)\right) \tilde\psi(t)\right\|_\infty \\
& \leq \left\|\langle x\rangle^{-1} \left(P_d(t_0)-P_d(t)\right)\langle x\rangle \right\|_{\infty\to\infty} \left\| \langle x\rangle^{-1} \tilde\psi(t)\right\|_\infty \,.
\end{align*}
We write
$$
P_d(t_0)-P_d(t) = |\phi(t_0)-\phi(t)\rangle\langle\phi(t_0)| + |\phi(t)\rangle\langle\phi(t_0)-\phi(t)| 
$$
and use the general fact that $\left\| |f\rangle\langle g| \right\|_{\infty\to\infty} = \|f\|_\infty \|g\|_1$ to bound
\begin{align*}
\left\|\langle x\rangle^{-1} P_d(t_0) \tilde\psi(t)\right\|_\infty
& \leq \left( \left\| \langle x\rangle^{-1}\left(\phi(t_0)-\phi(t)\right)\right\|_\infty \left\|\langle x\rangle \phi(t_0)\right\|_1 \right. \\
& \qquad \left. + \left\| \langle x \rangle^{-1} \phi(t) \right\|_\infty \left\| \langle x\rangle \left(\phi(t_0)-\phi(t)\right)\right\|_1 \right) \left\| \langle x\rangle^{-1} \tilde\psi(t)\right\|_\infty \notag \\
& \lesssim \eta(t_0,t) \left\| \langle x\rangle^{-1} \tilde\psi(t)\right\|_\infty
\end{align*}
with
\begin{align*}
\eta(t_0,t) & := \sup_{t_0\leq s\leq t} \left\|\langle x\rangle^2 (V(s)-V(t_0))\right\|_1 + \sup_{t_0\leq s\leq t} \left\|\langle x \rangle^{-1} (\phi(t_0)-\phi(s)) \right\|_\infty \\
& \qquad + \sup_{t_0\leq s\leq t} \left\|\langle x \rangle (\phi(t_0)-\phi(s)) \right\|_1 \,.
\end{align*}

For later purposes we also record the bound
\begin{align}
\label{eq:dispfirstboundproof}
\left\| \langle x \rangle P_c(t_0)\phi(s)\right\|_1 
\lesssim
\eta(t_0,t) 
\qquad\text{for all}\ t_0\leq s\leq t \,,
\end{align}
which is proved in a similar way. Indeed, since $P_c(s)\phi(s)=0$,
\begin{align*}
\left\| \langle x \rangle P_c(t_0)\phi(s)\right\|_1 & = \left\| \langle x \rangle (P_c(t_0)-P_c(s)) \phi(s)\right\|_1 = \left\| \langle x \rangle (P_d(t_0)-P_d(s)) \phi(s) \right\|_1 \\
& \leq \left\| \langle x \rangle (P_d(t_0)-P_d(s)) \langle x \rangle^{-1}\right\|_{1\to 1} \|\langle x \rangle \phi(s)\|_1 \,.
\end{align*}
We write $P_d(t_0)-P_d(s)$ as before and use the general fact that $\left\| |f\rangle\langle g| \right\|_{1\to1} = \|f\|_1 \|g\|_\infty$ to bound
\begin{align*}
\left\| \langle x \rangle P_c(t_0)\phi(s)\right\|_1
& \leq \left( \left\| \langle x \rangle (\phi(t_0)-\phi(s))\right\|_1 \left\| \langle x\rangle^{-1} \phi(t_0)\right\|_\infty \right. \\
& \qquad \left. + \left\| \langle x\rangle \phi(t) \right\|_1 \left\| \langle x\rangle^{-1} (\phi(t_0)-\phi(s)) \right\|_\infty \right) \|\langle x \rangle \phi(s)\|_1 \,,
\end{align*}
which implies \eqref{eq:dispfirstboundproof}.

To bound the third term we estimate
$$
\left\|\langle x\rangle (V(s)-V(t_0))\tilde\psi(s)\right\|_1 \leq \left\|\langle x\rangle^2 (V(s)-V(t_0))\right\|_1 \left\|\langle x\rangle^{-1} \tilde\psi(s)\right\|_\infty \,.
$$
Moreover,
\begin{equation}
\label{eq:dispproof}
\left\|\langle x\rangle^{-1} \tilde\psi(s)\right\|_\infty
\lesssim
\min\left\{ \left( \frac{\epsilon}{s}\right)^{1/2}, \left( \frac{\epsilon}{s}\right)^{3/2} \right\} \times
\begin{cases}
\mathcal M(t_0) & \text{if}\ 0\leq s\leq t_0 \,,\\
\mathcal M(t_0,t) & \text{if}\ t_0<s\leq t \,.
\end{cases}
\end{equation}
and
\begin{align*}
\left\|\langle x\rangle^2 (V(s)-V(t_0))\right\|_1
\leq
\begin{cases}
2 \sup_{0\leq s\leq T} \left\|\langle x\rangle^2 V(s) \right\|_1 \lesssim 1 & \text{if}\ 0\leq s\leq t_0 \,, \\
\eta(t_0,t) & \text{if}\ t_0<s\leq t \,.
\end{cases}
\end{align*}
Therefore, with the help of Lemma \ref{intbounds} we obtain
\begin{align*}
& \epsilon^{-1} \int_0^t \min\left\{\left(\frac{\epsilon}{t-s}\right)^{1/2}, \left(\frac{\epsilon}{t-s}\right)^{3/2}\right\} \left\|\langle x\rangle (V(s)-V(t_0))\tilde\psi(s)\right\|_1 \,ds \\
& \qquad \leq \min\left\{ \left( \frac{\epsilon}{t}\right)^{1/2}, \left( \frac{\epsilon}{t}\right)^{3/2} \right\} \left( \mathcal M(t_0) + \eta(t_0,t) \mathcal M(t_0,t) \right).
\end{align*}

To bound the fourth and the fifth term we estimate
\begin{equation}
\label{eq:dispproof1}
\left|\langle\partial_s\phi(s),\tilde\psi(s)\rangle \right| \leq \left\|\langle x \rangle \partial_s\phi(s) \right\|_1 \left\|\langle x \rangle^{-1} \tilde\psi(s) \right\|_\infty \lesssim \left\|\langle x \rangle^{-1} \tilde\psi(s) \right\|_\infty
\end{equation}
and then use \eqref{eq:dispproof}. Moreover, for $s\leq t_0$ we bound $\|\langle x\rangle \phi(s) \|_1\lesssim 1$, while for $s>t_0$ we use \eqref{eq:dispfirstboundproof}. Therefore, with the help of Lemma \ref{intbounds} we obtain
\begin{align*}
& \int_0^{t_0} \left|\langle\partial_s\phi(s),\tilde\psi(s)\rangle \right| \min\left\{\left(\frac{\epsilon}{t-s}\right)^{1/2}, \left(\frac{\epsilon}{t-s}\right)^{3/2}\right\} \left\|\langle x\rangle \phi(s)\right\|_1 \,ds \\
& \quad + \int_{t_0}^t \left|\langle\partial_s\phi(s),\tilde\psi(s)\rangle \right| \min\left\{\left(\frac{\epsilon}{t-s}\right)^{1/2}, \left(\frac{\epsilon}{t-s}\right)^{3/2}\right\} \left\|\langle x\rangle P_c(t_0) \phi(s)\right\|_1 \,ds \\
& \lesssim \epsilon \min\left\{ \left( \frac{\epsilon}{t}\right)^{1/2}, \left( \frac{\epsilon}{t}\right)^{3/2} \right\} \left( \mathcal M(t_0) + \eta(t_0,t) \mathcal M(t_0,t) \right).
\end{align*}

To bound the sixth term we use again \eqref{eq:dispproof1} and, for $s\leq t_0$, \eqref{eq:dispproof}. We interchange the order of integration. To the part of the double integral corresponding to $s_1\leq t_0$ we apply Lemma~\ref{intbounds}. In the part corresponding to $s_1\geq t_0$ we use
\begin{equation}
\label{eq:finiteintegral0}
\int_{s_1}^t \min\left\{\left(\frac{\epsilon}{t-s}\right)^{1/2}, \left(\frac{\epsilon}{t-s}\right)^{3/2}\right\} ds \leq \int_{-\infty}^t \min\left\{\left(\frac{\epsilon}{t-s}\right)^{1/2}, \left(\frac{\epsilon}{t-s}\right)^{3/2}\right\} ds = 4\epsilon \,.
\end{equation}
This implies
\begin{align*}
& \int_{0}^t \int_0^s \left|\langle \partial_{s_1}\phi(s_1),\tilde\psi(s_1)\rangle\right| ds_1\, \min\left\{\left(\frac{\epsilon}{t-s}\right)^{1/2}, \left(\frac{\epsilon}{t-s}\right)^{3/2}\right\} \left\| \langle x\rangle \partial_s\phi(s)\right\|_1 \,ds \\
& \qquad \lesssim \epsilon^2 \mathcal M(t_0) + \epsilon \int_{t_0}^t \left\|\langle x \rangle^{-1} \tilde\psi(s_1) \right\|_\infty ds_1
\end{align*}

To summarize, we have shown that
\begin{align*}
\left\| \langle x \rangle^{-1} \tilde\psi(t) \right\|_\infty \lesssim & \min\left\{ \left( \frac{\epsilon}{t}\right)^{1/2}, \left( \frac{\epsilon}{t}\right)^{3/2} \right\} \left(
\left\| \langle x \rangle\psi_0 \right\|_1 + \mathcal M(t_0) + \eta(t_0,t) \mathcal M(t_0,t) \right) \\
& + \epsilon \int_{t_0}^t \left\| \langle x \rangle^{-1} \tilde\psi(s_1)\right\|_\infty ds_1 + \epsilon^{2} \mathcal M(t_0) \,.
\end{align*}
Using $\min\{ (\epsilon/t)^{1/2},(\epsilon/t)^{3/2}\} \gtrsim \epsilon^{3/2}$ we find that
$$
u(t) := \left( \min\left\{\left(\frac{\epsilon}{t}\right)^{1/2}, \left(\frac{\epsilon}{t}\right)^{3/2}\right\} \right)^{-1} \left\| \langle x \rangle^{-1} \tilde\psi(t) \right\|_\infty
$$
satisfies
$$
u(t) \lesssim A(t_0,t) + \epsilon^{-1/2} \int_{t_0}^t \min\left\{\left(\frac{\epsilon}{s_1}\right)^{1/2}, \left(\frac{\epsilon}{s_1}\right)^{3/2}\right\} u(s_1)\,ds_1
$$
with
\begin{align*}
A(t_0,t) & = \left\| \langle x \rangle\psi_0 \right\|_1 + \mathcal M(t_0) + \eta(t_0,t) \mathcal M(t_0,t) \,.
\end{align*}
Thus, by Gronwall's inequality and a computation as in \eqref{eq:finiteintegral0},
$$
u(t) \lesssim A(t_0,t) e^{C \epsilon^{-1/2} \int_{t_0}^t \min\left\{\left(\epsilon/s_1\right)^{1/2}, \left(\epsilon/s_1 \right)^{3/2}\right\} \,ds_1} \lesssim A(t_0,t) \,.
$$
This implies that
$$
\mathcal M(t_0,t) \lesssim A(t_0,t) \,.
$$
Since $t\mapsto V(t)$ and $t\mapsto \phi(t)$ are uniformly continuous from the compact interval $[0,T]$ to $\langle x \rangle^{-2} L^1$ and to $\langle x\rangle L^\infty\cap \langle x\rangle^{-1} L^1$, respectively, by choosing $\delta$ small enough (independent of $t_0$), we can make $\eta(t_0,t_0+\delta)$ smaller than any given constant. Thus, the term $\eta(t_0,t)\mathcal M(t_0,t)$ in $A(t_0,t)$ can be absorbed in the left side and we obtain the claimed bound \eqref{eq:dispersive}.
\qed


\subsection{Proof of Theorem \ref{dispersive}. Second part.}
We introduce the quantity
$$
\tilde{\mathcal M}(t_0,t) := \sup_{t_0\leq s\leq t} \left(\frac{\epsilon}{s}\right)^{-1/2} \left\|\langle x\rangle^{-1} \tilde\psi(s)\right\|_\infty
$$
and abbreviate
$$
\tilde{\mathcal M}(t) := \tilde{\mathcal M}(0,t) \,.
$$
We will show that there is a $\delta>0$ such that for all $t_0\in[0,T-\delta]$ and all $\epsilon\in(0,1]$ one has
\begin{align}\label{eq:dispersive2}
& \tilde{\mathcal M}(t_0,t_0+\delta) \lesssim \tilde{\mathcal M}(t_0) + \left\| \psi_0 \right\|_1
\end{align}
with the convention that $\mathcal M(0)=0$.

Clearly, applying \eqref{eq:dispersive2} iteratively at $t_0=0,\delta,2\delta,\ldots$ we obtain inequality \eqref{eq:dispthm2}.

Thus, let $0\leq t_0\leq t\leq T$. All implied constants below are independent of $t$ and $t_0$. Our starting point is the Duhamel formula \eqref{eq:duhamel}. Using the dispersive estimate in \cite{Mi} (combined with that in \cite{GoSc}) we obtain
\begin{align*}
& \left\| \langle x \rangle^{-1} \tilde\psi(t) \right\|_\infty \lesssim \left( \frac{\epsilon}{t}\right)^{1/2} \left\| \psi_0 \right\|_1 + \left\| \langle x \rangle^{-1} P_d(t_0)\tilde\psi(t) \right\|_\infty \\
& \qquad\quad + \epsilon^{-1} \int_0^t \min\left\{\left(\frac{\epsilon}{t-s}\right)^{1/2}, \left(\frac{\epsilon}{t-s}\right)^{3/2}\right\} \left\|\langle x\rangle (V(s)-V(t_0))\tilde\psi(s)\right\|_1 ds \\
& \qquad\quad + \int_0^{t_0} \left|\langle\partial_s\phi(s),\tilde\psi(s)\rangle \right| \min\left\{\left(\frac{\epsilon}{t-s}\right)^{1/2}, \left(\frac{\epsilon}{t-s}\right)^{3/2}\right\} \left\|\langle x\rangle \phi(s)\right\|_1 ds \\
& \qquad\quad + \int_{t_0}^t \left|\langle\partial_s\phi(s),\tilde\psi(s)\rangle \right| \min\left\{\left(\frac{\epsilon}{t-s}\right)^{1/2}, \left(\frac{\epsilon}{t-s}\right)^{3/2}\right\} \left\|\langle x\rangle P_c(t_0) \phi(s)\right\|_1 ds \\
& \qquad\quad + \int_{0}^t \int_0^s \left|\langle \partial_{s_1}\phi(s_1),\tilde\psi(s_1)\rangle\right| ds_1\, \min\left\{\left(\frac{\epsilon}{t-s}\right)^{1/2}\!\!, \left(\frac{\epsilon}{t-s}\right)^{3/2}\right\} \left\| \langle x\rangle \partial_s\phi(s)\right\|_1 ds.
\end{align*}
Arguing in the same way as before, using Lemma \ref{intbounds}, we obtain
\begin{align*}
\left\| \langle x \rangle^{-1} \tilde\psi(t) \right\|_\infty \lesssim & \left( \frac{\epsilon}{t}\right)^{1/2} \left( \left\| \psi_0 \right\|_1 + \tilde{\mathcal M}(t_0) + \eta(t_0,t) \tilde{\mathcal M}(t_0,t) \right) \\
& + \epsilon \int_{t_0}^t \left\| \langle x \rangle^{-1} \tilde\psi(s_1)\right\|_\infty ds_1 + \epsilon^{3/2} \tilde{\mathcal M}(t_0)
\end{align*}
with $\eta(t_0,t)$ as before. Thus
$$
\tilde u(t) := \left(\frac{\epsilon}{t}\right)^{-1/2} \left\| \langle x \rangle^{-1} \tilde\psi(t) \right\|_\infty
$$
satisfies
$$
\tilde u(t) \lesssim \tilde A(t_0,t) + \epsilon^{1/2} \int_{t_0}^t \left(\frac{\epsilon}{s_1}\right)^{1/2} \tilde u(s_1)\,ds_1
$$
with
\begin{align*}
\tilde A(t_0,t) & = \left\| \psi_0 \right\|_1 + \tilde{\mathcal M}(t_0) + \eta(t_0,t) \tilde{\mathcal M}(t_0,t) \,.
\end{align*}
Thus, by Gronwall's inequality,
$$
u(t) \lesssim \tilde A(t_0,t) e^{C \epsilon^{1/2} \int_{t_0}^t \left(\epsilon/s_1 \right)^{1/2} \,ds_1} \lesssim \tilde A(t_0,t) \,.
$$
This implies that
$$
\tilde{\mathcal M}(t_0,t) \lesssim \tilde A(t_0,t)
$$
and the term $\eta(t_0,t)\tilde{\mathcal M}(t_0,t)$ in $\tilde A(t_0,t)$ can be absorbed, as before, in the left side by choosing $\delta$ small enough. This yields the claimed bound \eqref{eq:dispersive2}.
\qed


\subsection{Proof of Theorem \ref{dispersive}. Third part.}
Finally, we deduce \eqref{eq:dispthm3} from \eqref{eq:dispthm2}. The starting point is Duhamel's formula \eqref{eq:duhamel} with $t_0=t$. Applying the dispersive bounds from \cite{GoSc} we obtain
\begin{align*}
\left\| \tilde\psi(t) \right\|_\infty & \lesssim \left(\frac{\epsilon}{t}\right)^{1/2} \left\|\psi_0\right\|_1 \\
& \quad + \frac{1}{\epsilon} \int_0^t \left(\frac{\epsilon}{t-s}\right)^{1/2} \left\| \left( V(s)-V(t) \right) \tilde\psi(s) \right\|_1 ds \\
& \quad + \int_0^t \left\|\langle x \rangle \partial_s\phi(s)\right\|_1 \left\|\langle x \rangle^{-1}\tilde\psi(s) \right\|_\infty \left(\frac{\epsilon}{t-s}\right)^{1/2} \left\|\phi(s)\right\|_1 ds \\
& \quad + \int_0^t \int_0^s \left\|\langle x \rangle \partial_{s_1}\phi(s_1)\right\|_1 \left\|\langle x \rangle^{-1}\tilde\psi(s_1) \right\|_\infty \left(\frac{\epsilon}{t-s}\right)^{1/2} \left\|\partial_s\phi(s)\right\|_1 ds_1\,ds \\
& \lesssim \left( \left(\frac{\epsilon}{t}\right)^{1/2} + \frac{1}{\epsilon} \int_0^t \left(\frac{\epsilon}{t-s}\right)^{1/2} \left(\frac{\epsilon}{s}\right)^{1/2} ds + \int_0^t \int_0^s \left(\frac{\epsilon}{s_1}\right)^{1/2} \left(\frac{\epsilon}{t-s}\right)^{1/2} ds_1\,ds \right)\\
& \quad \times\left\|\psi_0\right\|_1 \\
& \lesssim \left( \left(\frac{\epsilon}{t}\right)^{1/2} + 1 \right) \left\|\psi_0\right\|_1 \,.
\end{align*}
In the next to last inequality we used \eqref{eq:dispthm2}. This completes the proof of the theorem.
\qed


\section{Reminder on the adiabatic theorem}\label{sec:adiabatic}

In this section we briefly recall a version of the usual adiabatic theorem. This material is well-known, but we have not been able to find a reference for the precise inequality in Theorem \ref{adiabaticenergy} that we need. Since it comes at no extra effort, we present the material in a general Hilbert space. For further results and references concerning the adiabatic theorem we refer, for instance, to \cite{Te}.

For $t\in [0,T]$ let $H(t)$ be a self-adjoint operator in a Hilbert space. We assume that for any $t\in[0,T]$, $E(t)$ is a simple eigenvalue of $H(t)$. We assume that $E(t)$ and a corresponding normalized eigenvector $\Phi(t)$ depend in a $C^2$ manner on time. (If the resolvent of $H(t)$ is $C^2$ with respect to $t$ in operator norm and if $E(t)$ is isolated in the spectrum, which we do not assume, however, this assumption is automatically satisfied.) We set
$$
\beta(t) := \int_0^t \im\langle\Phi(s),\dot\Phi(s)\rangle\,ds
\qquad\text{and}\qquad
\theta(t) := \int_0^t E(s)\,ds \,.
$$
Differentiating $\|\Phi(t)\|^2=1$ we infer that $\dot\Phi(t) - i\im\langle\Phi(t),\dot\Phi(t)\rangle \Phi(t)$ is orthogonal to $\Phi(t)$. Assuming that $\dot\Phi(t)$ belongs to the operator domain of $H(t)$, it follows that
$$
\Xi(t) := e^{-i\beta(t)} \frac{1}{H(t)-E(t)}\left(\dot\Phi(t) - i\im\langle\Phi(t),\dot\Phi(t)\rangle \Phi(t) \right)
$$
is well-defined and orthogonal to $\Phi(t)$.

The adiabatic theorem says that the solution $\Psi(t)$ to
\begin{equation}
\label{eq:schreq}
i\epsilon \partial_t\Psi(t) = H(t)\Psi(t) \,,
\qquad \Psi(0) = \Phi(0)
\end{equation}
is approximately given by $e^{-i\theta(t)/\epsilon-i\beta(t)} \Phi(t)$. The following two theorems quantify this in the norm of the underlying Hilbert space and in the `energy norm', respectively.

\begin{theorem}\label{adiabatic}
Let $\Psi$ be the solution of \eqref{eq:schreq}. Then
\begin{align*}
\left\| \Psi(t) - e^{-i\theta(t)/\epsilon-i\beta(t)} \Phi(t) \right\|
\leq 2\epsilon \sup_{0\leq s\leq t} \left( \|\Xi(s)\| + \int_0^s \left\| \dot\Xi(s_1)\right\| ds_1 \right).
\end{align*}
In particular, if $\Pi(t)=1-|\Phi(t)\rangle\langle\Phi(t)|$, then
$$
\| \Pi(t) \Psi(t) \| \leq 2\epsilon \sup_{0\leq s\leq t} \left( \|\Xi(s)\| + \int_0^s \left\| \dot\Xi(s_1)\right\| ds_1 \right).
$$
\end{theorem}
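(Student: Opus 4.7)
The strategy is the standard Kato-style argument for adiabatic theorems: build the explicit adiabatic approximation $\Psi_{\mathrm{app}}(t):=e^{-i\theta(t)/\epsilon-i\beta(t)}\Phi(t)$, compute the residual it produces in the Schr\"odinger equation, use Duhamel's formula to express $\Psi-\Psi_{\mathrm{app}}$ as a time integral, and finally convert that integral into a boundary term plus an $\dot\Xi$-remainder by integration by parts against the unitary propagator.

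First I would compute $r(t):=i\epsilon\partial_t\Psi_{\mathrm{app}}(t)-H(t)\Psi_{\mathrm{app}}(t)$. Since $\dot\theta=E$ and $\dot\beta=\im\langle\Phi,\dot\Phi\rangle$, a direct differentiation combined with the defining relation $(H(t)-E(t))e^{i\beta(t)}\Xi(t)=\dot\Phi(t)-i\im\langle\Phi(t),\dot\Phi(t)\rangle\Phi(t)$ yields
$$
r(t)=i\epsilon\,e^{-i\theta(t)/\epsilon}\bigl(H(t)-E(t)\bigr)\Xi(t).
$$
Let $U(t,s)$ denote the unitary propagator generated by $H(\cdot)$; in particular $\partial_s U(t,s)=-(i\epsilon)^{-1}U(t,s)H(s)$ and $\|U(t,s)\|=1$. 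Because $\theta(0)=\beta(0)=0$, one has $\Psi_{\mathrm{app}}(0)=\Phi(0)=\Psi(0)$, so Duhamel's formula gives
$$
\Psi(t)-\Psi_{\mathrm{app}}(t)=-\int_0^t U(t,s)\,e^{-i\theta(s)/\epsilon}\bigl(H(s)-E(s)\bigr)\Xi(s)\,ds.
$$

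The key observation is that the integrand is, up to a correction involving $\dot\Xi$, a total derivative in $s$. Indeed, differentiating $U(t,s)e^{-i\theta(s)/\epsilon}\Xi(s)$ and combining the contributions of $\partial_s U$ and $\partial_s e^{-i\theta(s)/\epsilon}$ gives
$$
\frac{d}{ds}\Bigl[U(t,s)\,e^{-i\theta(s)/\epsilon}\Xi(s)\Bigr]=\frac{i}{\epsilon}\,U(t,s)\,e^{-i\theta(s)/\epsilon}(H(s)-E(s))\Xi(s)+U(t,s)\,e^{-i\theta(s)/\epsilon}\dot\Xi(s).
$$
Solving for the first term on the right and integrating from $0$ to $t$ yields
$$
\Psi(t)-\Psi_{\mathrm{app}}(t)=i\epsilon\Bigl[e^{-i\theta(t)/\epsilon}\Xi(t)-U(t,0)\Xi(0)\Bigr]-i\epsilon\int_0^t U(t,s)\,e^{-i\theta(s)/\epsilon}\dot\Xi(s)\,ds.
$$
Taking norms, using the unitarity of $U$ and of the scalar phase, and bounding $\|\Xi(t)\|+\|\Xi(0)\|\leq 2\sup_{0\leq s\leq t}\|\Xi(s)\|$, I obtain precisely the stated estimate. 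The second bound, on $\|\Pi(t)\Psi(t)\|$, is then immediate: since $\Pi(t)\Phi(t)=0$ implies $\Pi(t)\Psi_{\mathrm{app}}(t)=0$, we have $\Pi(t)\Psi(t)=\Pi(t)(\Psi(t)-\Psi_{\mathrm{app}}(t))$, and $\|\Pi(t)\|\leq 1$.

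There is no serious obstacle here. The only points requiring mild care are (i) verifying that $\dot\Phi(t)\in\dom H(t)$ so that $\Xi(t)$ is well-defined and $s\mapsto\Xi(s)$ is $C^1$, making the integration-by-parts step rigorous (both are covered by the $C^2$ regularity assumption on $\Phi$ and $E$ stated before the theorem), and (ii) bookkeeping of the factors of $i\epsilon$ so that the boundary term and the $\dot\Xi$-remainder appear with the correct constants.
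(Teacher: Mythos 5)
Your proof is correct, but it follows a genuinely different route from the paper's. You use the classical Kato--Duhamel scheme: compute the residual of the adiabatic ansatz, represent $\Psi-\Psi_{\mathrm{app}}$ by Duhamel's formula with the propagator $U(t,s)$, and integrate by parts in $s$ to convert the $O(1)$ integrand into a boundary term plus an $\dot\Xi$-remainder. The computations check out (the residual $r(t)=i\epsilon e^{-i\theta(t)/\epsilon}(H(t)-E(t))\Xi(t)$, the total-derivative identity, and the final bound $\epsilon\bigl(\|\Xi(t)\|+\|\Xi(0)\|+\int_0^t\|\dot\Xi\|\bigr)$, which is in fact slightly sharper than the stated $2\epsilon\sup(\cdots)$), and your derivation of the second inequality by simply applying $\Pi(t)$ to the first is cleaner than the paper's, which instead goes through $1-|z|^2\le 2(1-\re z)$ and $\|\Psi(t)\|=1$. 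The paper, by contrast, never introduces the propagator: after reducing to $E=0$, $\beta=0$, it differentiates $\|\Psi(t)-\Phi(t)\|^2$ directly, uses self-adjointness of $H(t)$ and the defining relation for $\Xi$ to trade $\langle\Psi,\dot\Phi\rangle$ for $-i\epsilon\langle\dot\Psi,\Xi\rangle$, integrates by parts in the resulting scalar identity, and closes with a sup-bootstrap $S(t)^2\le 2\epsilon(\cdots)S(t)$. What your approach buys is an explicit formula for the error and a marginally better constant; what it costs is the existence and $s$-differentiability (on domain vectors) of the unitary propagator $U(t,s)$ for the time-dependent family $H(\cdot)$, which is a standard but not entirely free assumption — note the paper deliberately keeps its hypotheses minimal (it does not even assume norm-continuity of the resolvent), and its quadratic-form argument only ever uses the single given solution $\Psi$ and the self-adjointness of $H(t)$. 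In the paper's actual application ($H(t)=(-\partial_x^2+V_t)P_t$) the propagator does exist, so this is a matter of generality rather than a gap.
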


\begin{proof}
We first observe that without loss of generality we may assume that $E(t)=0$ and $\langle\Phi(t),\dot\Phi(t)\rangle=0$ for all $t$ (that is, $\theta=\beta=0$). In fact, if we have proved the theorem in this case, we can apply it to $\tilde H(t)=H(t)-E(t)$, $\tilde E(t)=0$, $\tilde\Phi(t) = e^{-i\beta(t)}\Phi(t)$ and $\tilde\Psi(t) = e^{i\theta(t)/\epsilon}\Psi(t)$ and obtain the theorem as stated.

Thus, assuming now $E(t)=\langle\Phi(t),\dot\Phi(t)\rangle=0$, we compute
\begin{align*}
\frac{d}{dt} \left\| \Psi(t) - \Phi(t) \right\|^2
& = 2 \re \left\langle \Psi(t) - \Phi(t), \dot\Psi(t) - \dot\Phi(t) \right\rangle \\
& = 2 \re \left\langle \Psi(t) - \Phi(t), \frac{1}{i\epsilon}H(t)\Psi(t) - \dot\Phi(t) \right\rangle \\
& = 2\re \langle \Psi(t), \dot\Phi(t) \rangle \,.
\end{align*}
Here we used the fact that $H(t)$ is self-adjoint. We now insert the definition of $\Xi(t)$ and obtain
$$
\langle \Psi(t), \dot\Phi(t)\rangle
= \langle H(t) \Psi(t), \Xi(t) \rangle
= -i\epsilon \langle \dot\Psi(t), \Xi(t)\rangle \,.
$$
Thus, we have shown that
$$
\frac{d}{dt} \left\| \Psi(t) - \Phi(t) \right\|^2 = 2\epsilon \im \langle \dot\Psi(t), \Xi(t)\rangle \,.
$$
Since $\langle\dot\Phi(t),\Xi(t)\rangle = \langle\dot\Phi(t),H(t)^{-1}\dot\Phi(t)\rangle$ is real, we have
\begin{align*}
\frac{d}{dt} \left\| \Psi(t) - \Phi(t) \right\|^2 & = 2\epsilon \im \langle \dot\Psi(t)-\dot\Phi(t), \Xi(t)\rangle \\
& = 2\epsilon\im \left( \frac{d}{dt}\langle\Psi(t)-\Phi(t),\Xi(t)\rangle - \langle\Psi(t)-\Phi(t),\dot\Xi(t)\rangle \right).
\end{align*}
Integrating this and recalling that $\Psi(0)=\Phi(0)$, we obtain
$$
\left\| \Psi(t) - \Phi(t) \right\|^2 = 2\epsilon\im \left( \langle\Psi(t)-\Phi(t),\Xi(t)\rangle - \int_0^t \langle\Psi(s)-\Phi(s),\dot\Xi(s)\rangle \,ds \right).
$$
Thus,
$$
\left\| \Psi(t) - \Phi(t) \right\|^2 \leq 2\epsilon \left( \|\Xi(t)\|+ \int_0^t \|\dot\Xi(s)\| \,ds \right) \sup_{0\leq s\leq t} \left\|\Psi(s)-\Phi(s)\right\|.
$$
This implies the first bound in the theorem.

To deduce the second one, we observe that $1-|z|^2\leq 2(1-\re z)$ for all $z\in\C$ with $|z|\leq 1$, and obtain, using $\|\Psi(t)\|=1$ (which follows from the self-adjointness of $H(t)$),
$$
\|\Pi(t)\Psi(t)\|^2 = 1- \left|\langle\Phi(t),\Psi(t)\rangle\right|^2 \leq 2\left( 1 - \re \langle\Phi(t),\Psi(t)\rangle \right) = \left\| \Psi(t) - \Phi(t)\right\|^2 \,.
$$
Therefore, the second assertion follows from the first one.
\end{proof}

We now assume, in addition, that there are $C_1, C_2\geq 0$ such that for all $t\in[0,T]$ and all $\psi$ in the form domain of $H(t)$,
$$
\langle \psi,\dot H(t) \psi\rangle \leq C_1 \langle \psi,H(t)\psi\rangle + C_2 \|\psi\|^2 \,.
$$

\begin{theorem}\label{adiabaticenergy}
Let $\Psi$ be the solution of \eqref{eq:schreq}. Then
$$
\left\langle \Psi(t) - e^{-i\theta(t)/\epsilon-i\beta(t)}\Phi(t), \left(H(t)-E(t)\right) \left( \Psi(t) - e^{-i\theta(t)/\epsilon-i\beta(t)}\Phi(t) \right) \right\rangle \leq \sup_{0\leq s\leq t} C(s) \ e^{C_1 t} \ \epsilon^2 \,,
$$
where
\begin{align}
\label{eq:c(t)}
C(t) & = 4 \left( \left\| \frac{d}{dt} \left( e^{-i\beta(t)} \Phi(t) \right) \right\| + \int_0^t \left\|\frac{d^2}{ds^2}\left( e^{-i\beta(s)} \Phi(s) \right) \right\|ds \right) \notag \\
& \quad\quad \times \sup_{0\leq s\leq t} \left( \left\|\Xi(s)\right\| + \int_0^s \left\|\dot\Xi(s_1)\right\|ds_1 \right) \notag \\
& \quad + 4C_2 t\sup_{0\leq s\leq t} \left( \left\|\Xi(s)\right\| + \int_0^s \left\|\dot\Xi(s_1)\right\|ds_1 \right)^2 \,.
\end{align}
In particular, if $\Pi(t)=1-|\Phi(t)\rangle\langle\Phi(t)|$, then
$$
\left\langle \Pi(t) \Psi(t), \left(H(t)-E(t)\right) \Pi(t) \Psi(t) \right\rangle \leq \sup_{0\leq s\leq t} C(s) \ e^{C_1 t} \ \epsilon^2 \,.
$$
\end{theorem}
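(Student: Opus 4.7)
I would follow the reduction from the proof of Theorem \ref{adiabatic}: setting $\tilde H(t):=H(t)-E(t)$, $\tilde\Phi(t):=e^{-i\beta(t)}\Phi(t)$, and $\tilde\Psi(t):=e^{i\theta(t)/\epsilon}\Psi(t)$, one has $i\epsilon\partial_t\tilde\Psi=\tilde H\tilde\Psi$, $\tilde H\tilde\Phi=0$, $\langle\tilde\Phi,\dot{\tilde\Phi}\rangle=0$, and $\tilde\Psi(0)=\tilde\Phi(0)$. The form-bound hypothesis on $\dot H$ transfers to $\dot{\tilde H}$ with the same $C_1$ and a constant that differs from $C_2$ only by the uniformly bounded quantities $C_1 E(t)$ and $\dot E(t)$; up to this harmless relabeling, it suffices to bound $f(t):=\langle D(t),\tilde H(t)D(t)\rangle$ where $D:=\tilde\Psi-\tilde\Phi$. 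Since both $\Pi(t)$ and $H(t)-E(t)$ are invariant under the phase gauges, the original claim follows.

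The plan is a Gronwall estimate for $f$. Differentiating, using $\dot{\tilde\Psi}=(i\epsilon)^{-1}\tilde H\tilde\Psi$ together with $\tilde HD=\tilde H\tilde\Psi$ (from $\tilde H\tilde\Phi=0$) and self-adjointness of $\tilde H$, the term $\re\langle\dot{\tilde\Psi},\tilde HD\rangle=\re\bigl((i\epsilon)^{-1}\|\tilde H\tilde\Psi\|^2\bigr)$ vanishes, so that
\begin{align*}
\dot f(t)=-2\re\langle\dot{\tilde\Phi},\tilde HD\rangle+\langle D,\dot{\tilde H}D\rangle=2\epsilon\,\im\langle\dot{\tilde\Phi},\dot{\tilde\Psi}\rangle+\langle D,\dot{\tilde H}D\rangle,
\end{align*}
where the second equality uses $\tilde HD=\tilde H\tilde\Psi=i\epsilon\dot{\tilde\Psi}$ once more. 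The explicit factor of $\epsilon$ in the first term is the crucial gain. I would then write $\dot{\tilde\Psi}=\dot{\tilde\Phi}+\dot D$ (the $\|\dot{\tilde\Phi}\|^2$ contribution being real) and integrate the first term by parts in time to get
\begin{align*}
\int_0^t 2\epsilon\,\im\langle\dot{\tilde\Phi},\dot{\tilde\Psi}\rangle\,ds
=2\epsilon\,\im\langle\dot{\tilde\Phi}(t),D(t)\rangle-2\epsilon\int_0^t\im\langle\ddot{\tilde\Phi}(s),D(s)\rangle\,ds.
\end{align*}

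Inserting the $L^2$ conclusion $\|D(s)\|\leq 2\epsilon\bigl(\|\Xi(s)\|+\int_0^s\|\dot\Xi(s_1)\|\,ds_1\bigr)$ of Theorem \ref{adiabatic} converts this expression into $\epsilon^2$ times exactly the first summand of $C(t)$. The remaining term $\langle D,\dot{\tilde H}D\rangle\leq C_1 f+C_2\|D\|^2$ is handled by combining the same pointwise $L^2$ bound $\|D\|^2\leq 4\epsilon^2(\sup\|\Xi\|+\int\|\dot\Xi\|)^2$ with Gronwall's inequality: after integration over $[0,t]$, the $C_2$-term produces the factor $C_2 t$ (giving the second summand of $C(t)$), and the $C_1 f$-term generates the exponential factor $e^{C_1 t}$. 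The second assertion is then immediate: decomposing $\tilde\Psi=\langle\tilde\Phi,\tilde\Psi\rangle\tilde\Phi+\Pi\tilde\Psi$ and using $\tilde H\tilde\Phi=0$, all diagonal and cross $\tilde\Phi$-terms in $\langle D,\tilde HD\rangle$ vanish, leaving $\langle D,\tilde HD\rangle=\langle\Pi\tilde\Psi,\tilde H\Pi\tilde\Psi\rangle=\langle\Pi\Psi,(H-E)\Pi\Psi\rangle$.

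The delicate step is the cancellation $\re\langle\dot{\tilde\Psi},\tilde HD\rangle=0$: it is what replaces the \emph{fast} derivative $\dot{\tilde\Psi}$ by the \emph{slow} derivative $\dot{\tilde\Phi}$ in the energy balance, enabling the time-integration by parts that provides the second factor of $\epsilon$. Everything else is an orderly assembly of the two $\epsilon$-gains plus Gronwall; no analytic input beyond what is already used in the proof of Theorem \ref{adiabatic} is required.
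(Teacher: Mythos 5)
Your proposal is correct and follows essentially the same route as the paper: reduce by the gauge transformation to $E=0$, $\langle\Phi,\dot\Phi\rangle=0$, exploit the cancellation $\re\langle\dot\Psi,H\Psi\rangle=0$ to replace the fast derivative by $\dot\Phi$, gain the second power of $\epsilon$ by integrating by parts in time and invoking the $L^2$ bound of Theorem \ref{adiabatic}, and close with Gronwall for the $C_1$-term. The only (cosmetic) difference is that you differentiate $\langle D,\tilde HD\rangle$ directly and use $\tilde HD=i\epsilon\dot{\tilde\Psi}$, whereas the paper first rewrites the quadratic form as $\langle\Psi,H\Psi\rangle$ and uses $\dot H\Phi=-H\dot\Phi$; your explicit remark that the gauge change shifts $C_2$ by the bounded quantities $C_1E$ and $\dot E$ is a fair point of extra care.
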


\begin{proof}
By the same argument as in the proof of Theorem \ref{adiabatic} we may assume that $E(t)=\langle\Phi(t),\dot\Phi(t)\rangle=0$ for all $t$. Note that
$$
\left\langle\Psi(t)-\Phi(t),H(t)\left(\Psi(t)-\Phi(t)\right)\right\rangle = \left\langle\Psi(t),H(t)\Psi(t)\right\rangle,
$$
and therefore, using the self-adjointness of $H(t)$,
\begin{align*}
\frac{d}{dt}\left\langle\Psi(t)-\Phi(t),H(t)\left(\Psi(t)-\Phi(t)\right)\right\rangle 
& = 2\re \left\langle\Psi(t),H(t)\dot\Psi(t)\right\rangle + \left\langle\Psi(t),\dot H(t)\Psi(t)\right\rangle \\
& = 2\re \frac{1}{i\epsilon} \left\langle\Psi(t),H(t)^2\Psi(t)\right\rangle + \left\langle\Psi(t),\dot H(t)\Psi(t)\right\rangle \\
& = \left\langle\Psi(t),\dot H(t)\Psi(t)\right\rangle \\
& = \left\langle\Phi(t),\dot H(t)\Phi(t)\right\rangle + 2\re \left\langle\Psi(t)-\Phi(t),\dot H(t)\Phi(t)\right\rangle \\
& \qquad + \left\langle\Psi(t)-\Phi(t),\dot H(t)\left(\Psi(t)-\Phi(t)\right)\right\rangle.
\end{align*}
We now use the fact that
$$
\dot H(t)\Phi(t) = \partial_t \left( H(t)\Phi(t) \right) - H(t)\dot\Phi(t) = -H(t)\dot \Phi(t)
$$
to write
$$
\left\langle\Phi(t),\dot H(t)\Phi(t)\right\rangle = - \left\langle\Phi(t),H(t)\dot \Phi(t)\right\rangle = 0
$$
and
$$
\left\langle\Psi(t)-\Phi(t),\dot H(t)\Phi(t)\right\rangle = - \left\langle\Psi(t)-\Phi(t), H(t)\dot\Phi(t)\right\rangle = i\epsilon \left\langle\dot\Psi(t),\dot\Phi(t) \right\rangle.
$$
Thus, we have shown that
\begin{align*}
& \frac{d}{dt}\left\langle\Psi(t)-\Phi(t),H(t)\left(\Psi(t)-\Phi(t)\right)\right\rangle \\
& \qquad = -2\epsilon \im \left\langle\dot\Psi(t),\dot\Phi(t) \right\rangle + \left\langle\Psi(t)-\Phi(t),\dot H(t)\left(\Psi(t)-\Phi(t)\right)\right\rangle \\
& \qquad = -2\epsilon \im \left\langle\dot\Psi(t)-\dot\Phi(t),\dot\Phi(t) \right\rangle \\
& \qquad \quad + \left\langle\Psi(t)-\Phi(t),\dot H(t)\left(\Psi(t)-\Phi(t)\right)\right\rangle \\
& \qquad = -2\epsilon \im \left( \frac{d}{dt} \left\langle\Psi(t)-\Phi(t),\dot\Phi(t) \right\rangle - \left\langle\Psi(t)-\Phi(t),\ddot\Phi(t) \right\rangle \right) \\
& \qquad\quad + \left\langle\Psi(t)-\Phi(t),\dot H(t)\left(\Psi(t)-\Phi(t)\right)\right\rangle.
\end{align*}
Integrating and recalling that $\Psi(0)=\Phi(0)$, we obtain
\begin{align*}
& \left\langle\Psi(t)-\Phi(t),H(t)\left(\Psi(t)-\Phi(t)\right)\right\rangle \\
& \qquad = -2\epsilon \im \left( \left\langle\Psi(t)-\Phi(t),\dot\Phi(t) \right\rangle - \int_0^t \left\langle\Psi(s)-\Phi(s),\ddot\Phi(s) \right\rangle ds \right) \\
& \qquad \quad + \int_0^t \left\langle\Psi(s)-\Phi(s),\dot H(s)\left(\Psi(s)-\Phi(s)\right)\right\rangle ds\,.
\end{align*}
We now set $f(t) :=  \left\langle\Psi(t)-\Phi(t),H(t)\left(\Psi(t)-\Phi(t)\right)\right\rangle$ and bound, using the assumption on $\dot H$ and Theorem \ref{adiabatic}, 
\begin{align*}
f(t) & \leq 2\epsilon \left( \left\|\dot\Phi(t)\right\| + \int_0^t \left\|\ddot\Phi(s)\right\|ds \right) \sup_{0\leq s\leq t} \left\|\Psi(s)-\Phi(s)\right\| \\
& \quad + C_1 \int_0^t f(s)\,ds + C_2 t \sup_{0\leq s\leq t} \left\|\Psi(s)-\Phi(s)\right\|^2 \\
& \leq C(t) \epsilon^2 + C_1 \int_0^t f(s)\,ds
\end{align*}
with $C(t)$ from \eqref{eq:c(t)}. The first inequality in the theorem now follows from Gronwall's inequality.

To prove the second inequality, we simply observe that (still assuming $E=\langle\Phi,\dot\Phi\rangle=0$)
$$
\left\langle \Pi(t) \Psi(t), H(t) \Pi(t) \Psi(t) \right\rangle = \left\langle \Psi(t)-\Phi(t), H(t) \left(\Psi(t)-\Phi(t)\right) \right\rangle
$$
and apply the first bound.
\end{proof}


\section{The reference dynamics}

We now turn our attention to the non-linear dynamics. As a first step in the proof of Theorem \ref{main} we will prove the existence of a unique solution to the ($\epsilon$-independent) reference system \eqref{eq:eqref} with the mass conservation condition \eqref{eq:refmass} and the initial conditions \eqref{eq:refinitial}.

\begin{proposition}\label{reference}
Let $\phi_0,\dot\phi_0\in L^2(\R)$ be real-valued and assume that $-\partial_x^2+\phi_0$ has an eigenvalue $E_0<0$. Let $\psi_0$ be an associated real-valued eigenfunction. Then there is a maximal $T_*\in(0,\infty)\cup\{\infty\}$ and unique functions $Q\in C([0,T_*),H^1(\R,\R))$, $V\in C^1([0,T_*),L^2(\R,\R))$ and $E\in C([0,T_*),(-\infty,0))$ such that \eqref{eq:eqref}, \eqref{eq:refmass} and \eqref{eq:refinitial} hold. Moreover, we have $Q\in C^\infty([0,T_*),H^1(\R,\R))$, $V\in C^\infty([0,T_*),L^2(\R,\R))$ and $E\in C^\infty([0,T_*),(-\infty,0))$
\end{proposition}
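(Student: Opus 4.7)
The plan is to reformulate the coupled system as a single non-linear integral equation for $V$ alone and apply Picard iteration. The key first step will be to construct an analytic ``ground state map'' $V \mapsto (Q(V), E(V))$ on an open neighborhood of $\phi_0$ in $L^2(\R,\R)$. By Assumption \ref{ass:main}, $E_0$ is isolated in the spectrum of $-\partial_x^2 + \phi_0$; I would choose a small circle $\gamma \subset \C$ enclosing $E_0$ and no other spectral points. Since $H^1(\R) \hookrightarrow L^\infty(\R)$ in one dimension, multiplication by any $W \in L^2(\R)$ defines a bounded operator $H^1 \to L^2$, and the second resolvent identity combined with a Neumann-series argument shows that
\[
P(V) := -\frac{1}{2\pi i}\oint_\gamma (-\partial_x^2 + V - z)^{-1}\, dz
\]
depends real-analytically on $V \in L^2$ in an open neighborhood $U$ of $\phi_0$, with values in the bounded operators $L^2 \to H^1$. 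Setting
\[
Q(V) := \|\psi_0\|_2 \, \frac{P(V)\psi_0}{\|P(V)\psi_0\|_2},
\qquad
E(V) := \frac{\langle Q(V), (-\partial_x^2 + V)Q(V)\rangle}{\|Q(V)\|_2^2},
\]
would then define real-analytic maps $U \to H^1(\R,\R)$ and $U \to (-\infty, 0)$ respectively, satisfying $(-\partial_x^2 + V)Q(V) = E(V)Q(V)$ and agreeing with $\psi_0$ and $E_0$ at $V = \phi_0$.

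Next I would observe that, since the second equation in \eqref{eq:eqref} contains no spatial derivatives of $V$, it is a pointwise-in-$x$ harmonic oscillator equation whose Cauchy problem with initial data $(\phi_0, \dot\phi_0)$ is equivalent, by Duhamel, to the fixed-point equation
\[
V(t) = \phi_0 \cos t + \dot\phi_0 \sin t - \tfrac12 \int_0^t \sin(t-s)\, Q(V(s))^2\, ds.
\]
The nonlinear map $V \mapsto Q(V)^2$ is locally Lipschitz from $U \subset L^2$ into $L^2$, since $Q(V) \in H^1 \hookrightarrow L^\infty$ and
\[
\|Q(V_1)^2 - Q(V_2)^2\|_2 \le \bigl(\|Q(V_1)\|_\infty + \|Q(V_2)\|_\infty\bigr) \|Q(V_1) - Q(V_2)\|_2,
\]
with the last factor controlled by $\|V_1 - V_2\|_{L^2}$ via analyticity of $Q$. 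A Banach fixed-point argument on $C([0, \tau], L^2)$ for small $\tau > 0$ then produces a unique local solution, which extends to a maximal interval $[0, T_*)$. The integral equation itself yields $V \in C^1([0, T_*), L^2)$; setting $Q(t) := Q(V(t))$ and $E(t) := E(V(t))$ gives $Q \in C([0, T_*), H^1)$ and $E \in C([0, T_*), (-\infty, 0))$, and the normalization \eqref{eq:refmass} holds by construction. Uniqueness of the full triple reduces to uniqueness of $V$ via the ground state map.

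Finally, to upgrade to $C^\infty$-regularity I would proceed by a standard bootstrap. Assuming inductively that $V \in C^k([0, T_*), L^2)$, the chain rule applied to the real-analytic map $V \mapsto Q(V)^2$ (from $U \subset L^2$ to $L^2$) gives $Q(V(\cdot))^2 \in C^k([0, T_*), L^2)$, whence $\partial_t^2 V = -V - \tfrac12 Q(V)^2 \in C^k$ and so $V \in C^{k+2}$. Iterating yields $V \in C^\infty([0, T_*), L^2)$, and analyticity of the ground state map promotes $Q$ and $E$ to $C^\infty$ as well. The main obstacle will be the analyticity of the Riesz projection $P(V)$ as a function of $V \in L^2$ with values in the bounded operators $L^2 \to H^1$; this is where Assumption \ref{ass:main} (in particular the spectral gap around $E_0$) and the one-dimensional Sobolev embedding enter in an essential way.
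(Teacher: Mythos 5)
Your proposal is correct in substance but organizes the argument quite differently from the paper. The paper inverts the roles of the unknowns: it uses the explicit Duhamel formula \eqref{eq:solv} to regard $V$ as a functional $\mathcal V_Q$ of $Q$, writes $Q=(1-\delta)\psi_0+f$ with $f\perp\psi_0$ and $E=E_0+e$, solves for $(e,f)$ by a Lyapunov--Schmidt-type fixed point involving the reduced resolvent of $-\partial_x^2+\phi_0-E_0$ on the orthogonal complement of $\psi_0$, and then runs a \emph{second} fixed point in $\delta$ to enforce the mass constraint \eqref{eq:refmass}. You instead build the analytic ground-state map $V\mapsto(Q(V),E(V))$ once and for all via the Riesz projection (legitimate, since $W\mapsto W(-\partial_x^2+\phi_0-z)^{-1}$ is bounded on $L^2$ with norm $\lesssim\|W\|_2$, uniformly for $z$ on the contour, by the one-dimensional embedding $H^2\hookrightarrow L^\infty$) and then close a single contraction for $V$ alone; the normalization $\|Q(V)\|_2=\|\psi_0\|_2$ is built into the map, so the paper's second fixed point is not needed. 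That is a genuine simplification, and the regularity bootstrap at the end is the same in both arguments. Two points deserve a sentence more than you give them. First, the proposition assumes only that $E_0<0$ is \emph{some} negative eigenvalue, not Assumption \ref{ass:main}; what your contour construction actually needs is that $E_0$ is isolated and simple, which is automatic for a negative eigenvalue of a one-dimensional Schr\"odinger operator with $\phi_0\in L^2$, so you should not invoke Assumption \ref{ass:main} here. Second, uniqueness does not quite ``reduce to uniqueness of $V$'' for free: given an arbitrary solution $(Q,V,E)$ with only the continuity stated in the proposition, you must first show $E_t=E(V_t)$ and $Q_t=Q(V_t)$, i.e.\ rule out that $E_t$ leaves the disk bounded by $\gamma$ (impossible, since it would have to cross $\gamma$, which lies in the resolvent set of $-\partial_x^2+V_t$) and that $Q_t=-Q(V_t)$ (excluded by continuity together with $Q_0=\psi_0$); only then does $V$ satisfy your integral equation and Banach uniqueness apply. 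Both steps are routine connectedness arguments, comparable to details the paper itself leaves implicit.
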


\begin{proof}
We will prove existence and uniqueness on a small time interval $[0,\tau]$. Once this is shown, by iterating the argument we obtain existence and uniqueness on a maximal time interval. Note that, by solving the equation for $V$ together with its initial conditions we obtain
\begin{equation}
\label{eq:solv}
V_t = \phi_0 \cos t + \dot\phi_0 \sin t - \tfrac{1}{2} \int_0^t Q_s^2 \sin(t-s) \,ds\,.
\end{equation}
From this formula it is clear that if $Q$ is continuous in time, then $V$ is $C^2$ in time. Consequently, by standard perturbation theory and the fact that eigenvalues of one-dimensional Schr\"odinger operators are simple, the eigenvalue $E$ and the eigenfunction $Q$ are also $C^2$ in time. Thus, by \eqref{eq:solv}, $V$ is $C^4$ in time, and iterating the above argument we obtain the $C^\infty$ assertion in the proposition.

As another consequence of \eqref{eq:solv}, we can consider $V$ as a functional of $Q$,
$$
\mathcal V_Q(t) := \phi_0 \cos t + \dot\phi_0 \sin t - \tfrac{1}{2} \int_0^t Q_s^2 \sin(t-s) \,ds\,,
$$
and think of only $Q$ and $E$ as the unknowns. The crucial step in the proof is the following

\emph{Claim.} For any $\rho>0$ there is a $\tau>0$ such that for any $\delta\in C([0,\tau],[0,1/2])$ with $\delta(0)=0$ there are $f\in C([0,\tau],H^1(\R,\R)$ with $\langle\psi_0,f\rangle=0$ and $f_0=0$ and $e\in C([0,\tau],(-\infty,0))$ with $e_0=0$ such that
\begin{equation}
\label{eq:decompqref}
Q := (1-\delta)\psi_0 + f
\qquad\text{and}\qquad
E=E_0+e
\end{equation}
satisfy \eqref{eq:eqref} and \eqref{eq:refinitial}. Moreover,
$$
\| f\|_2 \leq \rho \|\psi_0\|_2
$$
and, emphasizing the $\delta$ dependence,
\begin{equation}
\label{eq:contractionproof2}
\sup_{0\leq t\leq\tau} \| f_t^{(\delta_1)} - f_t^{(\delta_2)} \|_2 \leq C \sup_{0\leq t\leq\tau} |\delta_1(t)-\delta_2(t)|
\end{equation}
with a universal constant $C$, independent of $\rho$.

Let us accept this claim for the moment and complete the proof. The idea is to determine $\delta$ so as to satisfy \eqref{eq:refmass}. The latter is equivalent to $\|\psi_0\|_2^2 = (1-\delta)^2\|\psi_0\|_2^2 + \|f^{(\delta)}\|_2^2$ and therefore to a fixed point of
$$
F(\delta)(t):= 1-\sqrt{1-\|f^{(\delta)}_t\|_2^2/\|\psi_0\|_2^2}
$$
defined on $\mathcal X=\{ \delta\in C([0,\tau],[0,1/2]):\ \delta(0)=0\}$. Here $\tau$ will be chosen later as in the claim, depending on $\rho$.

Since $f_0=0$ we have $F(\delta)(0)=0$ and, since $0\leq 1-\sqrt{1-x}\leq x$ for all $0\leq x\leq 1$, we have
$$
0 \leq F(\delta) \leq \|f^{(\delta)}_t\|_2^2/\|\psi_0\|_2^2 \leq \rho^2 \,,
$$
so $F(\delta)\in\mathcal X$ provided $\rho^2\leq 1/2$, which we assume in the following. Moreover,
\begin{align*}
|F(\delta_1)-F(\delta_2)| & = \|\psi_0\|_2^{-2} \frac{\left|\| f^{(\delta_1)}\|_2^2 - \|f^{(\delta_2)}\|_2^2 \right|}{\sqrt{1-\|f^{(\delta_1)}\|_2^2/\|\psi_0\|_2^2}+\sqrt{1-\|f^{(\delta_2)}\|_2^2/\|\psi_0\|_2^2}} \\
& \leq \frac{\rho}{\sqrt{1-\rho^2}} \|\psi_0\|_2^{-1} \| f^{(\delta_1)}- f^{(\delta_2)} \|_2 \\
& \leq \frac{C\rho}{\sqrt{1-\rho^2}} \|\psi_0\|_2^{-1} \sup_{0\leq t\leq\tau} |\delta_1(t)-\delta_2(t)| \,.
\end{align*}
The last inequality is valid provided $\tau$ is chosen as in the claim depending on $\rho$ such that \eqref{eq:contractionproof2} holds. Choosing $\rho>0$ sufficiently small we see that $F$ is a contraction in $\mathcal X$. Therefore $F$ has a unique fixed point in $\mathcal X$, as stated in the proposition.

It remains to verify the claim, which we will do by another fixed point argument. Inserting the decomposition \eqref{eq:decompqref} into \eqref{eq:eqref} we obtain
$$
(-\partial_x^2+\phi_0-E_0)f = -(\mathcal V_{(1-\delta)Q+f}-\phi_0)((1-\delta)\psi_0+f) + e((1-\delta)\psi_0+f) \,.
$$
Projecting this equation onto the span of $\psi_0$ and its orthogonal complement, we see that the equation is equivalent to a fixed point of the map
$$
F(e,f):=
\begin{pmatrix}
(1-\delta)^{-1} \|\psi_0\|_2^{-1} \langle\psi_0, (\mathcal V_{(1-\delta)Q+f}-\phi_0)((1-\delta)\psi_0+f)\rangle \\
- R (\mathcal V_{(1-\delta)Q+f}-\phi_0)((1-\delta)\psi_0+f) + e Rf
\end{pmatrix}.
$$
Here $R$ is the inverse of $-\partial_x^2+\phi_0-E_0$ defined on the orthogonal complement of $\psi_0$. Since $e_0$ is a simple isolated eigenvalue, $R$ is a bounded operator from $H^{-1}(\R)$ to $H^1(\R)$. We will consider $F$ as a map on $\mathcal Y\times \mathcal Z$, where
\begin{align*}
\mathcal Y & = \left\{ e\in C([0,\tau],\R):\ e(0)=0 \,,\ \sup_{0\leq t\leq\tau} |e_t|\leq \sigma \right\}, \\
\mathcal Z & = \left\{ f\in C([0,\tau],H^1(\R,\R)):\ \langle\psi_0,f\rangle=0 \,, \sup_{0\leq t\leq\tau} \|f_t\|_{H^1} \leq \sigma \right\}.
\end{align*}
Here $\tau>0$ and $\sigma>0$ will be chosen later sufficiently small. In particular, we will choose $\sigma\leq \rho\|\psi_0\|_2$.

It follows from \eqref{eq:solv} that for $f\in\mathcal Z$
\begin{equation}
\label{eq:contractionproof1}
\| \mathcal V_{(1-\delta)Q+f}(t) - \phi_0 \|_2 \lesssim t \,.
\end{equation}
Using the Sobolev embedding $H^1\subset L^4$ and, by duality, $L^{4/3}\subset H^{-1}$, we obtain
$$
\| (\mathcal V_{(1-\delta)Q+f}-\phi_0)((1-\delta)\psi_0+f) \|_{H^{-1}} \lesssim \|\mathcal V_{(1-\delta)Q+f}-\phi_0 \|_2 \|(1-\delta)\psi_0+f \|_4 \lesssim t
$$
and, consequently, for the two components $F_1$ and $F_2$ of the map $F$,
$$
\sup_{0\leq t\leq\tau} | F_1(e,f)(t) | \lesssim \tau \,,
\qquad
\sup_{0\leq t\leq\tau} \| F_2(e,f)(t) \|_{H^1} \lesssim \tau + \sigma^2
$$
Thus, if $\tau$ is small compared to $\sigma$ and $\sigma$ is small compared to $1$, then $F$ maps $\mathcal Y\times\mathcal Z$ into itself.

To prove the contraction property and already preparing for the proof of \eqref{eq:contractionproof2}, we want to bound, similarly as before, the $L^{4/3}$ norm of
\begin{align*}
& (\mathcal V_{(1-\delta_1)Q+f_1}-\phi_0)((1-\delta_1)\psi_0+f_1) - (\mathcal V_{(1-\delta_2)Q+f_2}-\phi_0)((1-\delta_2)\psi_0+f_2) \\
& = \left(\mathcal V_{(1-\delta_1)Q+f_1} - \mathcal V_{(1-\delta_2)Q+f_2} \right) \left( (1-\tfrac12(\delta_1+\delta_2))\psi_0 + \tfrac12(f_1+f_2) \right) \\
& \quad + \left( \tfrac12 \left( \mathcal V_{(1-\delta_1)Q+f_1} + \mathcal V_{(1-\delta_2)Q+f_2} \right) - \phi_0 \right)(-(\delta_1-\delta_2)\psi_0+ f_1-f_2) \,.
\end{align*}
Using
$$
\| \mathcal V_{(1-\delta_1)Q+f_1}(t) - \mathcal V_{(1-\delta_2)Q+f_2}(t) \|_2 \lesssim t \left( \sup_{0\leq s\leq t} |\delta_1(s)-\delta_2(s)| + \sup_{0\leq s\leq t} \| f_1(s)-f_2(s)\|_4 \right)
$$
and \eqref{eq:contractionproof1} we obtain
\begin{align*}
& \| (\mathcal V_{(1-\delta_1)Q+f_1}-\phi_0)((1-\delta_1)\psi_0+f_1) - (\mathcal V_{(1-\delta_2)Q+f_2}-\phi_0)((1-\delta_2)\psi_0+f_2) \|_{H^{-1}} \\
& \quad \lesssim t \left( \sup_{0\leq s\leq t} |\delta_1(s)-\delta_2(s)| + \sup_{0\leq s\leq t} \| f_1(s)-f_2(s)\|_{H^1} \right).
\end{align*}
Therefore, in obvious notation,
\begin{align}
\label{eq:contractionproof3}
\sup_{0\leq t\leq\tau} |F_1^{(\delta_1)}(e_1,f_1)-F_1^{(\delta_2)}(e_2,f_2)| \lesssim \tau \left( \sup_{0\leq s\leq \tau} |\delta_1(s)-\delta_2(s)| + \sup_{0\leq s\leq \tau} \| f_1(s)-f_2(s)\|_{H^1} \right)
\end{align}
and, writing also $e_1 R f_1 - e_2 Rf_2 = (e_1-e_2)R\tfrac12(f_1+f_2) + \tfrac12(e_1+e_2) R(f_1-f_2)$,
\begin{align}
\label{eq:contractionproof4}
& \sup_{0\leq t\leq\tau} \|F_2^{(\delta_1)}(e_1,f_1)-F_2^{(\delta_2)}(e_2,f_2)\|_{H^1} \lesssim (\tau+\sigma) \sup_{0\leq t\leq \tau} \| f_1(t)-f_2(t)\|_{H^1} \notag \\
& \qquad\qquad\qquad\qquad\qquad\qquad\qquad\quad + \sigma \sup_{0\leq t\leq\tau} |e_1(t)-e_2(t)| + \tau \sup_{0\leq s\leq \tau} |\delta_1(s)-\delta_2(s)| \,.
\end{align}

We first focus on the case $\delta_1=\delta_2$. Decreasing $\sigma$ if necessary and recalling that $\tau$ is chosen small compared to $\sigma$, we see that $F$ is a contraction in $\mathcal Y\times\mathcal Z$ and therefore has a unique fixed point. This proves the first part of the claim.

It remains to prove \eqref{eq:contractionproof2}, which we prove even with the $H^1$ norm on the left side. We have, using the fixed point property and \eqref{eq:contractionproof3} and \eqref{eq:contractionproof4},
\begin{align*}
\sup_{0\leq t\leq\tau} \| f^{(\delta_1)}-f^{(\delta_2)} \|_{H^1}
& = \sup_{0\leq t\leq\tau} \| F_2^{(\delta_1)}(e^{(\delta_1)},f^{(\delta_1)}) - F_2^{(\delta_2)}(e^{(\delta_2)},f^{(\delta_2)}) \|_{H^1} \\
& \lesssim (\tau+\sigma) \sup_{0\leq t\leq\tau} \|f^{(\delta_1)}-f^{(\delta_2)} \|_{H^1} 
+ \tau \sup_{0\leq s\leq \tau} |\delta_1(s)-\delta_2(s)| \\
& \qquad\quad + \sigma \sup_{0\leq t\leq\tau} |F_1^{(\delta_1)}(e^{(\delta_1)},f^{(\delta_1)}) - F_1^{(\delta_2)}(e^{(\delta_2)},f^{(\delta_2)}) | \\
& \lesssim (\tau+\sigma) \sup_{0\leq t\leq\tau} \|f^{(\delta_1)}-f^{(\delta_2)} \|_{H^1} + \tau \sup_{0\leq s\leq \tau} |\delta_1(s)-\delta_2(s)| \,.
\end{align*}
Decreasing $\tau$ and $\sigma$ further, if necessary, we can absorb the first term into the left side and obtain \eqref{eq:contractionproof2}. This concludes the proof of Proposition \ref{reference}.\end{proof}

For $t\in[0,T_*)$ we introduce
\begin{equation}
\label{eq:chit}
\chi_t = (-\partial_x^2 +V_t-E_t)^{-1} \partial_t Q_t \,.
\end{equation}
This is well-defined since $\|Q_t\|_2=\|\psi_0\|_2$ implies $\langle Q_t,\partial_t Q_t\rangle =0$ and since $-\partial_x^2 +V_t-E_t$ is invertible as a map from the orthogonal complement of $Q_t$ to itself.

\begin{lemma}\label{boundsqchi}
Let $\psi_0,\phi_0,\dot\phi_0$ be as in Proposition \ref{reference} and let $T<T_*$. Then
$$
\| Q_t \|_\infty + \| \langle x\rangle Q_t \|_1 + \| \langle x\rangle^2 Q_t \|_2 + \| \langle x \rangle \partial_t Q_t \|_1 + \|\partial_t Q_t\|_2 + \|\partial_t^2 Q_t\|_2 \lesssim 1
$$
and
$$
\| \chi_t \|_\infty + \| \langle x\rangle \chi_t \|_1 + \| \langle x \rangle^2 \chi_t \|_2 + \| \langle x \rangle \partial_t \chi_t \|_1 + \|\partial_t \chi_t \|_2 \lesssim 1 \,.
$$
If, in addition, $\phi_0,\dot\phi_0\in \langle x\rangle^{-2} L^1(\R)$, then $V\in C^\infty([0,T_*),\langle x\rangle^{-2} L^1(\R,\R))$.
\end{lemma}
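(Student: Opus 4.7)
The proof proceeds in three stages: the bounds on $Q_t$ and its time derivatives, the bounds on $\chi_t$ and its time derivative, and finally the additional regularity of $V$. The backbone throughout is that on $[0,T]\subset[0,T_*)$ the eigenvalue $E_t<0$ is bounded away from $0$ uniformly in $t$ by compactness (and Proposition \ref{reference}), so the restricted operator $(-\partial_x^2+V_t-E_t)|_{Q_t^\perp}$ has a uniform spectral gap above zero.

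For the bounds on $Q_t$, I would follow the ODE/Volterra strategy sketched in Section \ref{sec:dispprep} for $\phi(t)$. Writing the eigenvalue equation $-Q_t''=(E_t-V_t)Q_t$ as a Volterra integral equation in $x$ against the exponentially decaying free Green's function associated with $E_t$, one obtains pointwise bounds $|Q_t(x)|\lesssim e^{-c|x|}$ uniformly in $t\in[0,T]$ for some $c>0$. These immediately yield $\|Q_t\|_\infty+\|\langle x\rangle Q_t\|_1+\|\langle x\rangle^2 Q_t\|_2\lesssim 1$. For $\partial_t Q_t$, differentiate the eigenvalue equation, together with the normalization $\|Q_t\|_2^2=\|\psi_0\|_2^2$ (which forces $\langle Q_t,\partial_t Q_t\rangle=0$), to obtain
\begin{equation*}
(-\partial_x^2+V_t-E_t)\,\partial_t Q_t=(\partial_t E_t-\partial_t V_t)Q_t,\qquad \partial_t E_t=\|\psi_0\|_2^{-2}\langle Q_t,(\partial_t V_t)Q_t\rangle.
\end{equation*}
Thus $\partial_t Q_t$ is the image under the restricted resolvent $(-\partial_x^2+V_t-E_t)^{-1}|_{Q_t^\perp}$ of an exponentially decaying right-hand side. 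By a Combes--Thomas / weighted Green's function argument using the uniform spectral gap, this resolvent preserves exponential decay, giving $|\partial_t Q_t(x)|\lesssim e^{-c'|x|}$ uniformly in $t$ and hence $\|\langle x\rangle\partial_t Q_t\|_1+\|\partial_t Q_t\|_2\lesssim 1$. Differentiating once more in $t$ and repeating produces $\|\partial_t^2 Q_t\|_2\lesssim 1$, where the smoothness in $t$ of $V$, $E$, $Q$ from Proposition \ref{reference} is used to make sense of the iterated derivatives.

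The bounds on $\chi_t$ follow by the same mechanism: $\chi_t=(-\partial_x^2+V_t-E_t)^{-1}\partial_t Q_t$ is obtained by applying the restricted resolvent to the exponentially decaying datum $\partial_t Q_t$, so the preceding resolvent argument yields pointwise exponential decay of $\chi_t$ and with it the claimed bounds $\|\chi_t\|_\infty+\|\langle x\rangle\chi_t\|_1+\|\langle x\rangle^2\chi_t\|_2\lesssim 1$. Differentiating \eqref{eq:chit} in $t$ produces
\begin{equation*}
(-\partial_x^2+V_t-E_t)\,\partial_t\chi_t=\partial_t^2 Q_t-(\partial_t V_t-\partial_t E_t)\chi_t,
\end{equation*}
whose right-hand side has already been controlled, so recycling the same resolvent argument gives the remaining bounds on $\partial_t\chi_t$.

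Finally, for $V\in C^\infty([0,T_*),\langle x\rangle^{-2}L^1)$, I would plug the bounds on $Q$ into the explicit formula \eqref{eq:solv}. The first two terms $\phi_0\cos t+\dot\phi_0\sin t$ lie in $\langle x\rangle^{-2}L^1$ by the additional hypothesis, while the integral term involves $Q_s^2$, which by the pointwise exponential decay above lies in $\langle x\rangle^{-N}L^1$ for every $N$, uniformly in $s\in[0,T]$; smoothness in $t$ of the result then follows from differentiating \eqref{eq:solv} and Proposition \ref{reference}. I expect the main obstacle to be the Combes--Thomas / weighted resolvent estimate used throughout: once uniform exponential decay of the Green's function of $(-\partial_x^2+V_t-E_t)|_{Q_t^\perp}$ is in hand (from the uniform gap and the decay properties of $V_t$), the rest is a routine unwinding of the eigenvalue equation and its time derivatives.
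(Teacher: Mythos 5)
Your proof is correct and follows essentially the same route as the paper: pointwise exponential decay of $Q_t$ via the Volterra/Jost-function representation (using that $E_t$ stays away from zero for $T<T_*$), then differentiating the eigenvalue equation and inverting $-\partial_x^2+V_t-E_t$ on $Q_t^\perp$ to control $\partial_t Q_t$, $\partial_t^2 Q_t$, $\chi_t$ and $\partial_t\chi_t$, and finally the explicit formula \eqref{eq:solv} for the statement about $V$. The only (harmless) difference is in one technical step: where you invoke a Combes--Thomas / weighted-resolvent estimate to see that the reduced resolvent preserves exponential decay, the paper reruns the same ODE variation-of-parameters argument, which is why it records the slightly more precise statement that $\partial_t Q_t$ and $\chi_t$ decay like an exponential times a linear, respectively quadratic, polynomial --- either version yields the stated norm bounds.
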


\begin{proof}
The proof of the properties of $Q_t$ and $\partial_t Q_t$ is identical to the arguments in Subsection~\ref{sec:dispprep}, since the assumption $T<T_*$ guarantees that $E_t$ stays away from zero. The bound on $\partial_t^2 Q_t$ is obtained similarly. As we discussed in Subsection~\ref{sec:dispprep}, $\partial_t Q_t$ behaves at infinity like an exponential, possibly multiplied by a linear function. Since $(-\partial_x^2 + V_t - E_t)\chi_t = \partial_t Q_t$, we can use the same ODE arguments to deduce that $\chi_t$ behaves like an exponential times a quadratic polynomial, which implies the claimed bounds for $\chi_t$. Differentiating the equation for $\chi_t$ with respect to $t$, we obtain similarly also the bounds for $\partial_t \chi_t$. The last statement about $V$ follows from \eqref{eq:solv} together with the above bounds on $Q_t$.
\end{proof}


\section{Decomposition of solutions to the Landau--Pekar equations}

After the preparations in the previous sections we now turn our attention to solutions of the Landau--Pekar equations \eqref{eq:lp}. Our goal in this section is to derive equations for $\alpha_t$ and $R_t$ appearing in the decomposition \eqref{eq:decomp} of the solution $\psi_t$.


\subsection{Decomposition of the solution and effective equations}

We assume that the initial data $(\psi_0,\phi_0,\dot\phi_0)$ are fixed as in Assumption \ref{ass:main} and we consider the corresponding solution $(V,Q)$ of \eqref{eq:eqref}, \eqref{eq:refmass} and \eqref{eq:refinitial} constructed in the previous section. Let
\begin{equation}
\label{eq:l}
L_t := -\partial_x^2 + V_t - E_t
\end{equation}
and
\begin{equation}
\label{eq:p}
P_t := 1-\|Q_t\|_2^{-2} |Q_t\rangle\langle Q_t|  =
1- \|\psi_0\|_2^{-2} |Q_t\rangle\langle Q_t| \,.
\end{equation}
Moreover, let
\begin{equation}
\label{eq:w}
W_t := - \tfrac{1}{2} \int_0^t \left( \left( |\alpha_s|^2-1\right) Q_s^2 + 2 Q_s \re(\overline{\alpha_s} R_s) + |R_s|^2 \right) \sin(t-s)\, ds \,.
\end{equation}
The following lemma describes equations for $R$ and $\alpha$ appearing in the decomposition \eqref{eq:decomp}.

\begin{lemma}[Equations for $R$ and $\alpha$]\label{effeq}
\begin{equation}
\label{eq:effeqr}
\epsilon i\partial_t R = L R + P W(\alpha Q + R)  - i\epsilon\alpha\partial_t Q - i\epsilon \|\psi_0\|_2^{-2} \langle\partial_t Q,R\rangle Q
\end{equation}
 and
 \begin{equation}
\label{eq:effeqalpha}
\partial_t \alpha = \|\psi_0\|_2^{-2} \left( \langle \partial_t Q,R\rangle - i \epsilon^{-1} \langle Q, W(\alpha Q+R)\rangle \right).
\end{equation}
\end{lemma}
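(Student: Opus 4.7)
The plan is to work with the gauge-transformed wave function $\tilde\psi_t := e^{i\epsilon^{-1}\int_0^t E_s\,ds}\psi_t = \alpha_t Q_t + R_t$, for which the Schr\"odinger equation in \eqref{eq:lp} becomes
\begin{equation*}
i\epsilon\partial_t\tilde\psi_t = (-\partial_x^2 + \phi_t - E_t)\tilde\psi_t = L_t\tilde\psi_t + (\phi_t - V_t)\tilde\psi_t,
\end{equation*}
since the time-dependent phase shifts the effective Hamiltonian by $-E_t$. The first key observation is that $\phi_t - V_t$ agrees with $W_t$ from \eqref{eq:w}. Indeed, subtracting the second equation in \eqref{eq:eqref} from the second equation in \eqref{eq:lp} yields $-\partial_t^2(\phi_t - V_t) = (\phi_t - V_t) + \tfrac12(|\psi_t|^2 - Q_t^2)$, and the initial conditions in \eqref{eq:lpinitial} and \eqref{eq:refinitial} match, so $(\phi_t - V_t)|_{t=0} = \partial_t(\phi_t - V_t)|_{t=0} = 0$. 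Solving this inhomogeneous harmonic ODE by variation of parameters and expanding $|\tilde\psi_s|^2 = |\alpha_s|^2 Q_s^2 + 2 Q_s\re(\overline{\alpha_s} R_s) + |R_s|^2$ produces precisely the expression for $W_t$.

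With this identification, and using $L_t Q_t = 0$ from the first equation in \eqref{eq:eqref}, the equation for $\tilde\psi$ simplifies to $i\epsilon\partial_t\tilde\psi_t = L_t R_t + W_t(\alpha_t Q_t + R_t)$. To extract \eqref{eq:effeqalpha}, the plan is to differentiate the definition $\alpha_t = \|\psi_0\|_2^{-2}\langle Q_t,\tilde\psi_t\rangle$ in time. From mass conservation \eqref{eq:refmass} one has $\langle Q_t,\partial_t Q_t\rangle = 0$, so by the orthogonality $\langle Q_t,R_t\rangle = 0$ the term $\langle\partial_t Q_t,\tilde\psi_t\rangle$ reduces to $\langle\partial_t Q_t, R_t\rangle$. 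For the term $\langle Q_t,\partial_t\tilde\psi_t\rangle$, I would substitute the evolution equation for $\tilde\psi$ and use that $L_t$ is self-adjoint with $L_t Q_t = 0$, which kills the $L_t\tilde\psi_t$ contribution and leaves exactly the $-i\epsilon^{-1}\langle Q_t, W_t(\alpha_t Q_t + R_t)\rangle$ appearing in \eqref{eq:effeqalpha}.

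Finally, \eqref{eq:effeqr} will be obtained by equating the two expressions $i\epsilon\partial_t\tilde\psi_t = i\epsilon(\partial_t\alpha_t)Q_t + i\epsilon\alpha_t\partial_t Q_t + i\epsilon\partial_t R_t$ and $i\epsilon\partial_t\tilde\psi_t = L_t R_t + W_t(\alpha_t Q_t + R_t)$, then substituting the formula for $\partial_t\alpha_t$ into the $i\epsilon(\partial_t\alpha_t)Q_t$ contribution. The resulting $\|\psi_0\|_2^{-2}\langle Q_t, W_t(\alpha_t Q_t + R_t)\rangle Q_t$ term is precisely $(1-P_t)W_t(\alpha_t Q_t + R_t)$ by the definition \eqref{eq:p}, and subtracting it from $W_t(\alpha_t Q_t + R_t)$ produces $P_t W_t(\alpha_t Q_t + R_t)$ as required. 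There is no essential obstacle: the lemma is a bookkeeping exercise that amounts to projecting the equation for $\tilde\psi$ onto $Q_t$ and its orthogonal complement, the only care being the correct handling of the normalization factor $\|\psi_0\|_2^{-2}$ and the role of the matching initial data in identifying $\phi_t - V_t$ with $W_t$.
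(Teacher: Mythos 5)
Your proposal is correct and follows essentially the same route as the paper: identify $\phi_t-V_t$ with $W_t$ via the matching initial data and the explicit solution of the inhomogeneous oscillator equation, then project the gauge-transformed Schr\"odinger equation onto $Q_t$ and its orthogonal complement (your substitution of the $\partial_t\alpha$ formula into the $R$-equation is algebraically the same as the paper's application of $P_t$). The only cosmetic difference is that the paper inserts the decomposition $\tilde\psi=\alpha Q+R$ before projecting rather than differentiating $\alpha_t=\|\psi_0\|_2^{-2}\langle Q_t,\tilde\psi_t\rangle$ directly, which changes nothing.
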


\begin{proof}
Inserting decomposition \eqref{eq:decomp} into the first equation in \eqref{eq:lp} we find
\begin{align}\label{eq:effeqrproof}
\epsilon i \partial_t R & = LR + (\phi-V)(\alpha Q+R) - i\epsilon (\partial_t \alpha) Q - i\epsilon\alpha \partial_t Q \,.
\end{align}
By the second equation in \eqref{eq:lp} and the initial conditions \eqref{eq:lpinitial} we have
\begin{align*}
\phi_t= \phi_0 \cos t + \dot\phi_0 \sin t - \tfrac{1}{2} \int_0^t |\psi_s|^2 \sin(t-s) \,ds\,.
\end{align*}
Comparing this with \eqref{eq:solv} and recalling decomposition \eqref{eq:decomp}, we obtain
$$
\phi_t - V_t = - \tfrac{1}{2} \int_0^t \left( \left( |\alpha_s|^2-1\right)Q_s^2 + 2 Q_s \re(\overline{\alpha_s} R_s) + |R_s|^2 \right) \sin(t-s)\, ds = W_t \,.
$$
Inserting this into \eqref{eq:effeqrproof} yields 
\begin{align}\label{eq:effeqrproof1}
\epsilon i \partial_t R & = LR + W(\alpha Q+R) - i\epsilon (\partial_t \alpha) Q - i\epsilon\alpha \partial_t Q \,.
\end{align}

Next, we take the inner product of this equation with $Q$ and obtain, since $L$ is self-adjoint and $L Q=0$,
$$
\epsilon i \langle Q, \partial_t R\rangle = \langle Q, W(\alpha Q+R)\rangle - i\epsilon (\partial_t \alpha) \|\psi_0\|^2 - i\epsilon\alpha \langle Q,\partial_t Q\rangle \,.
$$
Using the fact that both $\|Q\|_2=\|\psi_0\|_2$ and $\langle Q,R\rangle=0$ are constant in time, we infer from the previous equation that
$$
- \epsilon i \langle \partial_t Q, R\rangle = \langle Q, W(\alpha Q+R)\rangle - i\epsilon (\partial_t \alpha) \|\psi_0\|^2 \,,
$$
which is \eqref{eq:effeqalpha}.

Finally, we apply the projection $P$ to equation \eqref{eq:effeqrproof1} and obtain, since $P$ commutes with $L$ and since $PR=R$ and $\langle Q,\partial_t Q\rangle =0$,
$$
\epsilon i P \partial_t R = LR + PW(\alpha Q+R) - i\epsilon\alpha \partial_t Q \,.
$$
Moreover, since $PR=R$ and since $\partial_t P=-\|\psi_0\|^{-2} \left( |\partial_t Q\rangle\langle Q|+ | Q\rangle\langle\partial_t Q| \right)$, we have
$$
P\partial_t R = \partial_t (PR) - (\partial_t P)R = \partial_t R + \|\psi_0\|^{-2} \langle\partial_t Q,R\rangle Q
$$
This yields \eqref{eq:effeqr}.
\end{proof}


\subsection{Extracting the leading term from $R$}

Our next goal is to remove the term $\epsilon i \alpha \partial_tQ$ from the effective equation \eqref{eq:effeqr} for $R$. Recall that the function $\chi_t$ was defined in \eqref{eq:chit}. The definition implies that
\begin{equation}
\label{eq:chiortho}
P_t \chi_t = \chi_t \,.
\end{equation}
We also note that
\begin{equation}
\label{eq:chireal}
\chi_t \ \text{is real-valued}
\end{equation}
since $Q_t$ is real-valued and since $L_t= -\partial_x^2 +V_t-E_t$, and therefore also its inverse, are reality-preserving.

 We define $\tilde R$ by
\begin{equation}
\label{eq:tilder}
\tilde R_t := R_t - i\epsilon\alpha_t \chi_t \,.
\end{equation}
It follows from \eqref{eq:decomp} and \eqref{eq:chiortho} that
\begin{equation}
\label{eq:tilderortho}
P_t \tilde R_t = \tilde R_t \,.
\end{equation}

We now derive an effective equation for $\tilde R$.

\begin{lemma}[Equation for $\tilde R$]
\begin{equation}
\label{eq:effeqrtilde}
\epsilon i \partial_t \tilde R = L \tilde R + PW(\alpha Q+i\epsilon\alpha\chi+\tilde R) - i \epsilon \|\psi_0\|^{-2} \langle\partial_t Q,\tilde R\rangle Q + \epsilon^2 (\partial_t\alpha) \chi + \epsilon^2 \alpha P \partial_t \chi \,.
\end{equation}
\end{lemma}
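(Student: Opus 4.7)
The plan is a direct substitution calculation. Writing $R = \tilde R + i\epsilon\alpha\chi$ and differentiating, I have
$$\epsilon i\partial_t R = \epsilon i \partial_t \tilde R - \epsilon^2 (\partial_t\alpha)\chi - \epsilon^2 \alpha \partial_t\chi,$$
and applying $L$ to the decomposition gives $LR = L\tilde R + i\epsilon\alpha L\chi = L\tilde R + i\epsilon\alpha \partial_t Q$, by the defining property $L\chi = \partial_t Q$ of $\chi$ from \eqref{eq:chit}. This is the crucial point: $\chi$ was chosen precisely so that the term $-i\epsilon\alpha\partial_t Q$ in \eqref{eq:effeqr}, which is the obstruction to obtaining extra smallness in $\epsilon$, cancels against $L$ applied to the correction term $i\epsilon\alpha\chi$.

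Next I would substitute into \eqref{eq:effeqr} and replace $R$ by $\tilde R + i\epsilon\alpha\chi$ everywhere. After the cancellation described above, what remains (besides $L\tilde R$, the driving term $PW(\alpha Q + i\epsilon\alpha\chi+\tilde R)$, and $\epsilon^2(\partial_t\alpha)\chi$) are the two terms
$$\epsilon^2 \alpha \partial_t \chi \quad \text{and}\quad -i\epsilon\|\psi_0\|^{-2}\langle\partial_t Q, \tilde R + i\epsilon\alpha\chi\rangle Q.$$
The first term needs to become $\epsilon^2\alpha P\partial_t\chi$, and the second should contribute only $-i\epsilon\|\psi_0\|^{-2}\langle\partial_t Q,\tilde R\rangle Q$, so I need to show that the leftover pieces match up.

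The identity that does this is obtained by differentiating the orthogonality relation $\langle Q_t,\chi_t\rangle = 0$ (which follows from \eqref{eq:chiortho}) with respect to $t$, giving $\langle Q,\partial_t\chi\rangle = -\langle \partial_t Q,\chi\rangle$. Hence, decomposing $\partial_t\chi$ with respect to the splitting induced by $P$,
$$\alpha\partial_t\chi = \alpha P\partial_t\chi + \alpha\|\psi_0\|^{-2}\langle Q,\partial_t\chi\rangle Q = \alpha P\partial_t\chi - \alpha\|\psi_0\|^{-2}\langle\partial_t Q,\chi\rangle Q.$$
On the other hand, since $\chi$ and $\partial_t Q$ are real-valued (by \eqref{eq:chireal} and the reality of $Q$), the $i\epsilon\alpha\chi$ contribution to the inner product is
$$-i\epsilon\|\psi_0\|^{-2}\langle\partial_t Q,i\epsilon\alpha\chi\rangle Q = \epsilon^2\alpha\|\psi_0\|^{-2}\langle\partial_t Q,\chi\rangle Q.$$
Adding $\epsilon^2$ times the first identity to this expression produces exactly $\epsilon^2\alpha P\partial_t\chi$, with the two stray $Q$-components cancelling. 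Collecting everything yields \eqref{eq:effeqrtilde}.

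There is no real obstacle here; the only thing to be careful about is the sesquilinearity convention when pulling the factor $i\epsilon\alpha$ out of the inner product (using that both arguments are real to avoid a sign error), and the fact that splitting $\partial_t\chi$ via $P$ must produce a $Q$-component with the correct sign to cancel the one arising from the $\langle\partial_tQ,i\epsilon\alpha\chi\rangle Q$ boundary term.
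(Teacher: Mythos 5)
Your proof is correct and follows essentially the same route as the paper: direct substitution of $R=\tilde R+i\epsilon\alpha\chi$ into \eqref{eq:effeqr}, cancellation of $-i\epsilon\alpha\partial_tQ$ against $i\epsilon\alpha L\chi$, and the differentiated orthogonality $\langle\partial_tQ,\chi\rangle+\langle Q,\partial_t\chi\rangle=0$ to reconcile $\partial_t\chi$ with $P\partial_t\chi$ and $R$ with $\tilde R$ in the inner-product term (the paper states this last step in one line; you carry it out explicitly). The only cosmetic point is that pulling $i\epsilon\alpha$ out of $\langle\partial_tQ,i\epsilon\alpha\chi\rangle$ without conjugation is justified by the paper's convention that the inner product is linear in the second argument, not by reality of the arguments; your resulting signs are nonetheless correct.
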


\begin{proof}
Inserting the definition of $\tilde R$ into equation \eqref{eq:effeqr} we obtain
\begin{equation*}
\epsilon i \partial_t \tilde R = L \tilde R + PW(\alpha Q+i\epsilon\alpha\chi+\tilde R) - i \epsilon \|\psi_0\|^{-2} \langle\partial_t Q,R\rangle Q + \epsilon^2 (\partial_t\alpha) \chi + \epsilon^2 \alpha \partial_t \chi \,.
\end{equation*}
This is the same as \eqref{eq:effeqrtilde}, since $\langle\partial_t Q,\chi\rangle + \langle Q,\partial_t\chi\rangle = \partial_t \langle Q,\chi\rangle = 0$ in view of \eqref{eq:chiortho}.
\end{proof}

We now cast the effective equation for $\tilde R$ into Duhamel form. To do so, we denote by $U(t,s)$ the $\epsilon$-adiabatic propagator for $(-\partial_x^2+V)P$, that is,
$$
i\epsilon\partial_t U(t,s) = (-\partial_x^2+ V_t) P_t U(t,s) \,,
\qquad
U(t,t) = 1 \,.
$$
Moreover, we set
$$
\tilde U(t,s) := e^{i\epsilon^{-1} \int_s^t E_{s_1}\,ds_1} U(t,s) \,,
$$
which is the propagator for $(-\partial_x^2+V)P- E$.

\begin{lemma}[Equation for $\tilde R$ in Duhamel form]
\begin{align}
\label{eq:duhamelnl}
\tilde R_t & = P_t \tilde U(t,0)\tilde R_0 + \frac{1}{i\epsilon}\int_0^t P_t \tilde U(t,s) P_s W_s \left(\alpha_s Q_s + i\epsilon\alpha_s\chi_s + \tilde R_s \right)ds \notag \\
& \qquad - \|\psi_0\|^{-2} \int_0^t P_t \tilde U(t,s) Q_s \langle\partial_sQ_s,\tilde R_s\rangle\,ds \notag \\
& \qquad - i\epsilon \int_0^t P_t \tilde U(t,s) \chi_s (\partial_s\alpha_s)\,ds - i\epsilon \int_0^t P_t\tilde U(t,s) P_s (\partial_s\chi_s)\alpha_s \,ds \,.
\end{align}
\end{lemma}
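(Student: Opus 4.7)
The plan is to cast the differential equation \eqref{eq:effeqrtilde} into Duhamel form using the propagator $\tilde U(t,s)$, and then to project with $P_t$. The key preliminary observation is that since $Q_t$ is an eigenfunction of $-\partial_x^2+V_t$ with eigenvalue $E_t$, the rank-one projection $P_t$ commutes with $-\partial_x^2+V_t$ and hence with $L_t$. Combining this with $P_t\tilde R_t = \tilde R_t$ from \eqref{eq:tilderortho}, I obtain
\begin{equation*}
L_t\tilde R_t = L_t P_t\tilde R_t = \bigl((-\partial_x^2+V_t)P_t - E_t\bigr)\tilde R_t,
\end{equation*}
so that equation \eqref{eq:effeqrtilde} can be rewritten in the form
\begin{equation*}
i\epsilon\partial_t\tilde R_t = \bigl((-\partial_x^2+V_t)P_t - E_t\bigr)\tilde R_t + F_t,
\end{equation*}
where $F_t$ collects the four remaining terms from \eqref{eq:effeqrtilde}, namely
\begin{equation*}
F_t = P_tW_t(\alpha_t Q_t + i\epsilon\alpha_t\chi_t + \tilde R_t) - i\epsilon\|\psi_0\|^{-2}\langle\partial_tQ_t,\tilde R_t\rangle Q_t + \epsilon^2(\partial_t\alpha_t)\chi_t + \epsilon^2\alpha_t P_t\partial_t\chi_t.
\end{equation*}

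Next, since $\tilde U(t,s)$ is by definition the two-parameter propagator generated by $(-\partial_x^2+V)P-E$, the standard Duhamel principle for linear non-autonomous evolution equations yields
\begin{equation*}
\tilde R_t = \tilde U(t,0)\tilde R_0 + \frac{1}{i\epsilon}\int_0^t \tilde U(t,s)F_s\,ds.
\end{equation*}
I would then apply $P_t$ to both sides; using \eqref{eq:tilderortho} once more, the left-hand side remains $\tilde R_t$, while the right-hand side acquires the outer $P_t$ factor visible in every term of \eqref{eq:duhamelnl}. Inserting the explicit form of $F_s$ and simplifying the scalar prefactors $(i\epsilon)^{-1}(-i\epsilon)=-1$ and $(i\epsilon)^{-1}\epsilon^2=-i\epsilon$ reproduces the four integral terms on the right-hand side of \eqref{eq:duhamelnl}.

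This lemma is essentially bookkeeping and I do not expect a genuine obstacle: the only conceptually delicate moment is the initial rewriting $L_t\tilde R_t = ((-\partial_x^2+V_t)P_t - E_t)\tilde R_t$, which combines the projection identity $P_t\tilde R_t=\tilde R_t$ with the commutation $[P_t,-\partial_x^2+V_t]=0$ inherited from the eigenvalue equation $(-\partial_x^2+V_t)Q_t = E_tQ_t$. Once this identification is made, the assertion reduces to a variation-of-constants formula for an evolution restricted, via its natural generator, to the range of the time-dependent projector $P_t$.
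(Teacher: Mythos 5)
Your argument is correct and is essentially the paper's proof: the paper likewise obtains the variation-of-constants formula for the generator $(-\partial_x^2+V)P-E$ (by differentiating $\tilde U(t,0)^{-1}\tilde R_t$, which is the same computation you invoke as the standard Duhamel principle) and then applies $P_t$ using \eqref{eq:tilderortho}. Your explicit justification of the identification $L_t\tilde R_t=\bigl((-\partial_x^2+V_t)P_t-E_t\bigr)\tilde R_t$ via $P_t\tilde R_t=\tilde R_t$ is the one point the paper leaves implicit, and it is handled correctly.
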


\begin{proof}
We differentiate $\tilde U(t,0)^{-1}\tilde R(t)$, use the equation for $\tilde R$ and for $\tilde U$, integrate the resulting expression and use $\tilde U(t,0) \tilde U(s,0)^{-1} = \tilde U(t,s)$ to get 
\begin{align*}
\tilde R_t & = \tilde U(t,0)\tilde R_0 + \frac{1}{i\epsilon}\int_0^t \tilde U(t,s) P_s W_s \left( \alpha_s Q_s + i\epsilon\alpha_s\chi_s + \tilde R_s \right)ds
\\
& \quad - \|\psi_0\|^{-2} \int_0^t \tilde U(t,s) Q_s \langle\partial_s Q_s,\tilde R_s\rangle\,ds \\
& \quad - i\epsilon \int_0^t \tilde U(t,s) \chi_s (\partial_s\alpha_s) \,ds - i\epsilon \int_0^t \tilde U(t,s) P_s (\partial_s \chi_s) \alpha_s \,ds \,.
\end{align*}
Finally, we apply $P_t$ to both sides, recalling \eqref{eq:tilderortho}.
\end{proof}


\section{Bounds on $\tilde R$}

Our goal in this section is to complete the proof of Theorem \ref{main}. The main step in this proof are bounds on $\tilde R$, which occupy the main part of this section.

\subsection{Control functions and their bounds}

In this short subsection we summarize the key estimates that are proved in the following subsections and that will eventually imply our main result, Theorem \ref{main}. To formulate these estimates, we introduce three control functions
\begin{align*}
\mathcal M_1(t) & := \sup_{0\leq s\leq t} \epsilon^{-1} \left\|\tilde R_s \right\|_2 \,, \\
\mathcal M_2(t) & := \sup_{0\leq s\leq t} \epsilon^{-1} \left(\max\left\{1,(\epsilon/s)^{1/2}\right\}\right)^{-1} \left\|\tilde R_s \right\|_\infty \,, \\
\mathcal M_3(t) & := \sup_{0\leq s\leq t} \epsilon^{-1} \left( \epsilon + \min\left\{ (\epsilon/s)^{1/2},(\epsilon/s)^{3/2}\right\} \right)^{-1} \left\|\langle x \rangle^{-1} \tilde R_s \right\|_\infty \,.
\end{align*}
The quantity $\mathcal M_1$ is what we are primarily interested in and our goal is to prove that $\mathcal M_1\lesssim 1$. Our strategy to proving this is to prove that, in fact, $\mathcal M_1+\mathcal M_2+\mathcal M_3\lesssim 1$. Thus, the quantities $\mathcal M_2$ and $\mathcal M_3$ appear mainly in order to close the argument. The resulting bound $\mathcal M_3\lesssim 1$ is interesting in its own right and reflects the dispersive nature of $\tilde R$, up to contributions of order $\epsilon^2$. The other bound that we obtain, namely $\mathcal M_2\lesssim 1$, is possibly non-optimal, but sufficient for our purpose.

Our bounds on these control functions read as follows.

\begin{proposition}\label{m1}
Let $T<T_*$. Then for all $t\in[0,T]$ and $\epsilon\in (0,1]$,
$$
\mathcal M_1(t)^2 \lesssim 1+ \epsilon \left( \mathcal M_3(t) + \mathcal M_1(t)^2 + \epsilon \mathcal M_1(t)\mathcal M_2(t) + \epsilon \mathcal M_1(t) \mathcal M_3(t) \right) \mathcal M_3(t) \,.
$$
\end{proposition}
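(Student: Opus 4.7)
The plan is to derive the inequality from an $L^2$ energy identity for the equation \eqref{eq:effeqrtilde} satisfied by $\tilde R$. Computing $\tfrac12 \tfrac{d}{dt}\|\tilde R\|_2^2 = \re\langle\tilde R,\partial_t\tilde R\rangle$ and substituting from \eqref{eq:effeqrtilde}, the term $\langle\tilde R,L\tilde R\rangle$ drops out by self-adjointness of $L$, and the term $-i\epsilon\|\psi_0\|^{-2}\langle\partial_t Q,\tilde R\rangle Q$ drops out because $\langle Q,\tilde R\rangle=0$ by \eqref{eq:tilderortho}. What remains is
\begin{equation*}
\tfrac{\epsilon}{2}\tfrac{d}{dt}\|\tilde R\|_2^2 = \im\langle\tilde R,PW(\alpha Q+i\epsilon\alpha\chi+\tilde R)\rangle + \epsilon^2\im\langle\tilde R,(\partial_t\alpha)\chi + \alpha P\partial_t\chi\rangle.
\end{equation*}
Integrating from $0$ to $t$ and observing that $R_0=0$ forces $\tilde R_0=-i\epsilon\chi_0$, so $\|\tilde R_0\|_2^2\lesssim \epsilon^2$ by Lemma \ref{boundsqchi}, will produce the constant $1$ on the right side of the claimed bound after dividing by $\epsilon^2$.

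The bulk of the argument is the estimation of the source terms. Expanding $W$ via its definition \eqref{eq:w} together with the mass identity $1-|\alpha|^2=\|R\|_2^2/\|\psi_0\|_2^2$ and the substitution $R=\tilde R+i\epsilon\alpha\chi$, the integrand of $W_s$ becomes a quadratic expression in $(\tilde R_{s'},\epsilon\chi_{s'})$, $s'\leq s$, with coefficients built from $Q$ and $\alpha$. To estimate the pairing $\im\langle\tilde R,PWQ\alpha\rangle$, I would use $|\langle\tilde R,WQ\rangle|\leq\|\langle x\rangle^{-1}\tilde R\|_\infty\,\|\langle x\rangle WQ\|_1$, with the rapid decay of $Q$ from Lemma \ref{boundsqchi} absorbing the weight and one factor of $\tilde R$ supplying the trailing factor $\epsilon\mathcal M_3$ in the claimed bound. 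The complementary quadratic factor inside $W$ then contributes one of the four summands in $(\mathcal M_3+\mathcal M_1^2+\epsilon\mathcal M_1\mathcal M_2+\epsilon\mathcal M_1\mathcal M_3)$, depending on whether the two factors are of type $\tilde R$ or of type $\epsilon\chi$, and controlled respectively in $\mathcal M_1$ (via $L^2$), $\mathcal M_2$ (via $L^\infty$) or $\mathcal M_3$ (via weighted $L^\infty$). The pairing $\im\langle\tilde R,PW\tilde R\rangle$ is bounded by $\|W\|_\infty\|\tilde R\|_2^2$, while the $\epsilon^2$ tail terms are handled using the bound on $|\partial_t\alpha|$ from \eqref{eq:effeqalpha} together with Lemma \ref{boundsqchi}.

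The main obstacle will be the bookkeeping of the $W$ contribution: the $\sin(t-s)$ convolution in $W$ combined with the decomposition $R=\tilde R+i\epsilon\alpha\chi$ yields a number of cross-terms, and one must identify precisely which pairing produces which of the four summands in $(\mathcal M_3+\mathcal M_1^2+\epsilon\mathcal M_1\mathcal M_2+\epsilon\mathcal M_1\mathcal M_3)\mathcal M_3$. The $s$-integrations can be handled in the style of Lemma \ref{intbounds}, the relevant dispersive decay being encoded directly in the definitions of $\mathcal M_2$ and $\mathcal M_3$ (the latter containing an additive $\epsilon$ to reflect the $O(\epsilon^2)$ nondispersive remainder). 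Once organized, integrating over $[0,t]$ and dividing by $\epsilon^2$ yields the stated inequality.
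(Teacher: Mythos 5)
Your overall strategy --- the $L^2$ energy identity for \eqref{eq:effeqrtilde}, the vanishing of the $L$ and $\langle\partial_tQ,\tilde R\rangle Q$ contributions, the initial datum $\tilde R_0=-i\epsilon\chi_0$, and the weighted duality $|\langle\tilde R,WQ\rangle|\le\|\langle x\rangle^{-1}\tilde R\|_\infty\|\langle x\rangle WQ\|_1$ for the terms in which one factor decays in $x$ --- is exactly the paper's. The gap is in your treatment of the self-interaction term $\im\langle\tilde R,PW\tilde R\rangle$. Write $W=W_0+(W-W_0)$ with $W_{0,t}=-\tfrac12\int_0^t|\tilde R_s|^2\sin(t-s)\,ds$. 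The piece $(W-W_0)\tilde R$ is harmless: $W-W_0$ carries a factor of $Q$ or $\chi$ inside the convolution, hence decays in $x$, and $\|\langle x\rangle(W-W_0)\|_2\|\tilde R\|_2$ fits your scheme. But $W_0\tilde R$ is a product of two functions with no spatial decay, so the weighted duality is unavailable, and your proposed substitute $\|W\|_\infty\|\tilde R\|_2^2$ does not work: bounding $\|W_{0,s}\|_\infty$ requires $\int_0^s\|\tilde R_{s_1}\|_\infty^2\,ds_1$, and under the $\mathcal M_2$ control one only has $\|\tilde R_{s_1}\|_\infty^2\lesssim\epsilon^2\max\{1,\epsilon/s_1\}\,\mathcal M_2^2$, which is not integrable at $s_1=0$. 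Even the workable interpolation $|\langle\tilde R,W_0\tilde R\rangle|\le\|W_0\|_2\|\tilde R\|_2\|\tilde R\|_\infty\lesssim\epsilon^4\max\{1,(\epsilon/t)^{1/2}\}\mathcal M_1^2\mathcal M_2^2$ produces, after multiplying by $\epsilon^{-1}$, integrating in time and dividing by $\epsilon^2$, a term of size $\epsilon\,\mathcal M_1^2\mathcal M_2^2$, which does not appear on the right side of the proposition and is short by a factor $\epsilon$ compared with the quartic terms that do appear there.

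The missing ingredient is an algebraic cancellation, not an estimate: $W$, and in particular $W_0$, is \emph{real-valued}, so $\langle\tilde R,W_0\tilde R\rangle=\int W_0|\tilde R|^2\,dx$ is real and $\im\langle\tilde R,W_0\tilde R\rangle=0$. This is precisely how the paper disposes of the dangerous term: it estimates $\im\langle\tilde R,\,W(\alpha Q+i\epsilon\alpha\chi+\tilde R)-W_0\tilde R\rangle$ by the weighted duality (Corollary \ref{wbound}) and discards $W_0\tilde R$ for free. With this one observation inserted, the rest of your bookkeeping goes through and yields the stated inequality.
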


\begin{proposition}\label{m2}
Let $T<T^*$. Then for all $t\in[0,T]$ and $\epsilon\in (0,1]$,
$$
\mathcal M_2(t) \lesssim 1+ \mathcal M_3(t) + \mathcal M_1(t)^2 + \epsilon \mathcal M_1(t) \mathcal M_2(t) + \epsilon \mathcal M_1(t)\mathcal M_3(t) + \epsilon \mathcal M_1(t)^2 \mathcal M_2(t) \,.
$$
\end{proposition}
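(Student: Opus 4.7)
The plan is to estimate $\|\tilde R_t\|_\infty$ via the Duhamel formula \eqref{eq:duhamelnl}, applying the unweighted dispersive bound \eqref{eq:dispthm3} from Theorem \ref{dispersive}. Since the adiabatic propagator $\tilde U(t,s)$ differs from the propagator in that theorem only by the scalar phase $e^{i\epsilon^{-1}\int_s^t E_{s'}\,ds'}$, we may use
$$
\|P_t\tilde U(t,s)f\|_\infty \lesssim \max\{(\epsilon/(t-s))^{1/2},1\}\,\|f\|_1
\qquad\text{for } f\in P_s L^2.
$$
Applying this to each of the five terms in \eqref{eq:duhamelnl} produces, after division by the normalization $\epsilon\max\{(\epsilon/t)^{1/2},1\}$ defining $\mathcal M_2$, one of the six types of contribution appearing on the right side of the claim.

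For the homogeneous term, $\alpha_0=1$ and $R_0=0$ give $\tilde R_0=-i\epsilon\chi_0$, so by Lemma \ref{boundsqchi}, $\|\tilde R_0\|_1\lesssim\epsilon$, contributing the constant $1$. For the nonlinear $W$-integral I would substitute $R_s=i\epsilon\alpha_s\chi_s+\tilde R_s$, expand $W_s(\alpha_s Q_s+i\epsilon\alpha_s\chi_s+\tilde R_s)$ into its multilinear components in $Q,\chi,\tilde R$, and employ the identity $1-|\alpha_s|^2=\|\psi_0\|_2^{-2}\|R_s\|_2^2\lesssim\epsilon^2(1+\mathcal M_1(t)^2)$, the a priori bounds of Lemma \ref{boundsqchi}, and the defining bounds of $\mathcal M_1,\mathcal M_2$ on $\tilde R$. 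Time-integrating each resulting $L^1$ estimate against the dispersive kernel via (variants of) Lemma \ref{intbounds} yields the contributions $\mathcal M_1(t)^2$, $\epsilon\mathcal M_1(t)\mathcal M_2(t)$, and $\epsilon\mathcal M_1(t)^2\mathcal M_2(t)$.

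The third Duhamel term is treated via $|\langle\partial_s Q_s,\tilde R_s\rangle|\leq\|\langle x\rangle\partial_s Q_s\|_1\|\langle x\rangle^{-1}\tilde R_s\|_\infty\lesssim\epsilon(\epsilon+\min\{(\epsilon/s)^{1/2},(\epsilon/s)^{3/2}\})\mathcal M_3(t)$, producing the $\mathcal M_3(t)$ term after time integration. For the $\chi_s\partial_s\alpha_s$ integral I would invoke equation \eqref{eq:effeqalpha} to rewrite $\partial_s\alpha_s$ in terms of $\langle\partial_s Q_s,\tilde R_s\rangle$ and $\epsilon^{-1}\langle Q_s,W_s(\alpha_s Q_s+R_s)\rangle$; the first piece yields an $\epsilon\mathcal M_1(t)\mathcal M_3(t)$ contribution (after the explicit $\epsilon$ prefactor in the Duhamel term), and the second is estimated exactly as in the $W$-analysis above. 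The last integral, involving $\partial_s\chi_s$, is handled by the same mechanism using $\|\partial_s\chi_s\|_1\lesssim 1$ from Lemma \ref{boundsqchi}.

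The main obstacle is the combinatorial bookkeeping in the $W$-expansion: every multilinear contribution must be identified, bounded in $L^1$, and the resulting time integral — essentially a convolution of $\sin(s-\tau)$ with the dispersive kernel $\max\{(\epsilon/(t-s))^{1/2},1\}$ — must be checked against Lemma \ref{intbounds}, so that every contribution falls into one of the six listed types. No new analytic input beyond \eqref{eq:dispthm3} and Lemma \ref{intbounds} is required; the novelty of the statement is only in the particular combination of control functions that controls $\mathcal M_2(t)$.
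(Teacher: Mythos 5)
Your overall strategy coincides with the paper's: estimate $\|\tilde R_t\|_\infty$ from the Duhamel formula \eqref{eq:duhamelnl} using the unweighted dispersive bound \eqref{eq:dispthm3}, the $L^1$ bounds on $W_s(\alpha_sQ_s+i\epsilon\alpha_s\chi_s+\tilde R_s)$ (Corollary \ref{wbound}), the bound on $\partial_s\alpha_s$ (Lemma \ref{alphadot}), and the weighted pairing $|\langle\partial_sQ_s,\tilde R_s\rangle|\lesssim\epsilon(\epsilon+\min\{(\epsilon/s)^{1/2},(\epsilon/s)^{3/2}\})\mathcal M_3$. Four of the five Duhamel terms are handled correctly this way, since $\tilde R_0=-i\epsilon\chi_0$, $P_sW_s(\cdots)$, $\chi_s$ and $P_s\partial_s\chi_s$ all lie in $P_sL^2=\ran P_c(s)$, so your stated estimate $\|P_t\tilde U(t,s)f\|_\infty\lesssim\max\{(\epsilon/(t-s))^{1/2},1\}\|f\|_1$ is legitimately available for them.

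The gap is in the third Duhamel term, $-\|\psi_0\|_2^{-2}\int_0^t P_t\tilde U(t,s)Q_s\,\langle\partial_sQ_s,\tilde R_s\rangle\,ds$. Here the propagator acts on $Q_s$, which satisfies $P_sQ_s=0$: it lies entirely in the \emph{discrete} spectral subspace at time $s$, so the hypothesis $f\in P_sL^2$ of your dispersive bound fails, and Theorem \ref{dispersive} (whose proof uses $P_c(0)\psi_0=\psi_0$ already in deriving the Duhamel identity of Lemma \ref{duhamel}) gives no information about $\tilde U(t,s)Q_s$. You therefore have no bound at all on $\|P_t\tilde U(t,s)Q_s\|_\infty$; unitarity of $U(t,s)$ only controls the $L^2$ norm. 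The paper supplies the missing input from a completely different source, namely the \emph{linear adiabatic theorem} of Section \ref{sec:adiabatic}: applied to $H(t)=(-\partial_x^2+V_t)P_t$ with $\Phi(t)=Q_t/\|\psi_0\|_2$ and $E(t)=0$, Theorems \ref{adiabatic} and \ref{adiabaticenergy} give $\|P_tU(t,s)Q_s\|_{H^1}\lesssim\epsilon$, hence $\|P_tU(t,s)Q_s\|_\infty\lesssim\epsilon$ (equation \eqref{eq:adiabaticappl}), which is what makes this term contribute only $\epsilon^3\mathcal M_3(t)$. You need to add this ingredient; it cannot be extracted from \eqref{eq:dispthm3} and Lemma \ref{intbounds} alone. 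The remaining differences from the paper (re-deriving the $W$-expansion and the $\partial_s\alpha_s$ bound inline rather than quoting Corollary \ref{wbound} and Lemma \ref{alphadot}, and a slightly different attribution of which Duhamel term produces which monomial in the $\mathcal M_j$) are matters of bookkeeping and do not affect correctness.
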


\begin{proposition}\label{m3}
Let $T<T^*$. Then for all $t\in[0,T]$ and $\epsilon\in (0,1]$,
$$
\mathcal M_3(t) \lesssim 1 + \mathcal M_1(t)^2 + \epsilon \mathcal M_1(t)\mathcal M_2(t) + \epsilon \mathcal M_1(t) \mathcal M_3(t) + \epsilon^{1/2} \mathcal M_1(t)^2 \mathcal M_2(t) \,.
$$
\end{proposition}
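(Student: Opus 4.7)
The plan is to start from the Duhamel representation \eqref{eq:duhamelnl} for $\tilde R_t$ and estimate each of its five terms in the weighted norm $\|\langle x\rangle^{-1}\,\cdot\,\|_\infty$, using the dispersive estimate \eqref{eq:dispthm1} of Theorem \ref{dispersive} for the adiabatic propagator $\tilde U(t,s)$. Since $\tilde U(t,s)=e^{i\epsilon^{-1}\int_s^t E_{s_1}ds_1}U(t,s)$ differs from the Schr\"odinger-type propagator of Section \ref{sec:linear} only by an overall phase, the first dispersive bound gives
$$
\left\|\langle x\rangle^{-1} P_t \tilde U(t,s) f\right\|_\infty \lesssim \min\!\left\{(\epsilon/(t-s))^{1/2},(\epsilon/(t-s))^{3/2}\right\} \left\|\langle x\rangle f\right\|_1
$$
for $f$ suitably placed in the continuous spectrum at time $s$ (which is achieved by the $P_s$ factors already present in \eqref{eq:duhamelnl}). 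After estimating each integrand, the time integrals will be closed via Lemma \ref{intbounds} and formulas such as \eqref{eq:finiteintegral0}, and the final inequality is obtained by dividing by $\epsilon(\epsilon+\min\{(\epsilon/t)^{1/2},(\epsilon/t)^{3/2}\})$ and taking a supremum.

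For the inhomogeneous terms this is routine. Since $\psi_t|_{t=0}=\psi_0$ and $\alpha_0=1$, the initial value is $\tilde R_0=-i\epsilon\chi_0$, and the bound $\|\langle x\rangle\chi_0\|_1\lesssim 1$ from Lemma \ref{boundsqchi} together with \eqref{eq:dispthm1} gives $\|\langle x\rangle^{-1}P_t\tilde U(t,0)\tilde R_0\|_\infty\lesssim \epsilon\min\{(\epsilon/t)^{1/2},(\epsilon/t)^{3/2}\}$, producing the constant $1$. The two $\chi$-integrals in the last line of \eqref{eq:duhamelnl} are controlled similarly using $\|\langle x\rangle\chi_s\|_1, \|\langle x\rangle P_s\partial_s\chi_s\|_1\lesssim 1$, combined with the bound $|\partial_s\alpha_s|\lesssim \epsilon\mathcal M_1+\ldots$ that follows from \eqref{eq:effeqalpha}, yielding only $O(\epsilon)$ contributions. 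The term $-\|\psi_0\|^{-2}\int P_t\tilde U(t,s)Q_s\langle\partial_s Q_s,\tilde R_s\rangle ds$ is estimated by writing $|\langle\partial_s Q_s,\tilde R_s\rangle|\leq \|\langle x\rangle\partial_s Q_s\|_1\|\langle x\rangle^{-1}\tilde R_s\|_\infty$ and using the definition of $\mathcal M_3$ and $\|\langle x\rangle Q_s\|_1\lesssim 1$; this generates the $\epsilon\mathcal M_1\mathcal M_3$ contribution after integration.

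The main work lies in the $W$-integral $(i\epsilon)^{-1}\int P_t\tilde U(t,s)P_sW_s(\alpha_s Q_s+i\epsilon\alpha_s\chi_s+\tilde R_s)\,ds$. Here one decomposes $W_s$ into the three summands of \eqref{eq:w}, and uses the mass-conservation identity $|\alpha_s|^2-1=-\|R_s\|_2^2/\|\psi_0\|_2^2$ to see that the first summand is of size $O(\epsilon^2\mathcal M_1^2)$. The second summand $Q_{s_1}\mathrm{Re}(\overline{\alpha_{s_1}}R_{s_1})$ supplies the $\mathcal M_1^2$ term after pairing with $\alpha_sQ_s$ and observing the smoothing from the time integral. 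The substitution $R=\tilde R+i\epsilon\alpha\chi$ then converts $\mathcal M_1$-type and $\mathcal M_3$-type weighted bounds on $\tilde R$ into bounds on $R$ at the price of explicit powers of $\epsilon$, explaining the mixed terms $\epsilon\mathcal M_1\mathcal M_2$ and $\epsilon\mathcal M_1\mathcal M_3$.

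The principal obstacle is the cubic-in-$R$ contribution coming from $W_s^{(3)}=-\tfrac12\int_0^s|R_{s_1}|^2\sin(s-s_1)ds_1$ multiplied by $\tilde R_s$ (or by $\alpha_sQ_s$ paired with a further factor of $R$ via the $L^\infty$ route). To keep $\mathcal M_3$ closed, one must avoid measuring all three factors of $R$ in $L^2$; instead one measures two in $L^2$ (giving $\mathcal M_1^2$) and one in $L^\infty$ (giving $\mathcal M_2$), using the bound $\|R_{s_1}\|_\infty\lesssim \epsilon\max\{1,(\epsilon/s_1)^{1/2}\}\mathcal M_2+\epsilon$. The inner $s_1$-integral of $\max\{1,(\epsilon/s_1)^{1/2}\}$ produces an extra factor of order $s^{1/2}\epsilon^{1/2}$ on $0\leq s\lesssim\epsilon$, which, when combined with the prefactor $\epsilon^{-1}$ and the dispersive-plus-time integration, leaves precisely $\epsilon^{1/2}\mathcal M_1^2\mathcal M_2$ rather than a worse power. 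Checking that no other combination of the decomposition requires a stronger norm than $\mathcal M_1$, $\mathcal M_2$, or $\mathcal M_3$ provides — and verifying that all error terms fit into the stated right-hand side of Proposition \ref{m3} — is the technical heart of the argument.
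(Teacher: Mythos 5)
Your overall architecture (Duhamel formula \eqref{eq:duhamelnl}, dispersive estimates of Theorem \ref{dispersive}, decomposition of $W$, special treatment of the cubic term) matches the paper's, but there is a genuine gap at the final step, and it concerns precisely the part of the argument your proposal treats as routine. The piece of $W$ coming from $2Q_{s_1}\re(\overline{\alpha_{s_1}}R_{s_1})$ is \emph{linear} in $\tilde R$ and is estimated through $\|\langle x\rangle W_{2,s}\|_2\lesssim\int_0^s\|\langle x\rangle^2 Q_{s_1}\|_2\,\|\langle x\rangle^{-1}\tilde R_{s_1}\|_\infty\,ds_1$. If, as you propose, one now ``divides by $\epsilon(\epsilon+\min\{(\epsilon/t)^{1/2},(\epsilon/t)^{3/2}\})$ and takes a supremum,'' this inner integral is bounded by $\epsilon\,(t+4)\,\mathcal M_3(t)$ (cf.\ \eqref{eq:finiteint}), and after the Duhamel integration and the division the right-hand side contains $C\,\mathcal M_3(t)$ with a constant $C\gtrsim 1$ carrying no positive power of $\epsilon$. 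Such a term cannot be absorbed into the left-hand side, so the stated inequality does not follow from a pure sup-and-absorb argument. The paper avoids this by \emph{not} taking the supremum inside this integral: it retains the quantity $\tilde{\mathcal M}_3(t)=\int_0^t\bigl(1+\epsilon^{-1}\min\{(\epsilon/s)^{1/2},(\epsilon/s)^{3/2}\}\bigr)\mathcal M_3(s)\,ds$, arrives at an integral inequality $\mathcal M_3(t)\leq\alpha(t)+\int_0^t\beta(s)\mathcal M_3(s)\,ds$ with $\int_0^t\beta(s)\,ds\lesssim 1$, and closes with Gronwall's lemma. Some such mechanism (Gronwall, or an iteration over short time intervals as in the proof of Theorem \ref{dispersive}) is indispensable and is absent from your proposal.

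A second, smaller issue: your plan asserts that every Duhamel term is paired with the weighted estimate \eqref{eq:dispthm1}, but for the cubic piece $W_{0,s}\tilde R_s$ this is impossible --- controlling $\|\langle x\rangle W_{0,s}\tilde R_s\|_1$ would require a bound on $\tilde R$ against a \emph{growing} weight, which none of $\mathcal M_1,\mathcal M_2,\mathcal M_3$ provides. The paper therefore splits off $W_{0,s}\tilde R_s$ (with $W_0$ built from $|\tilde R|^2$, the $\chi$ cross-terms being harmless), bounds it only in unweighted $L^1$ by $\|W_{0,s}\|_2\|\tilde R_s\|_2\lesssim\epsilon^3\mathcal M_1(t)^2\mathcal M_2(t)$, and pairs it with the unweighted estimate \eqref{eq:dispthm2}. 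The factor $\epsilon^{1/2}$ in $\epsilon^{1/2}\mathcal M_1^2\mathcal M_2$ then comes from $\int_0^t(\epsilon/(t-s))^{1/2}ds\lesssim\epsilon^{1/2}t^{1/2}$, not, as you suggest, from the inner $s_1$-integral of $\max\{1,(\epsilon/s_1)^{1/2}\}$, which is simply $O(1)$ on a bounded interval. Your identification of the required norm combination $\mathcal M_1^2\mathcal M_2$ is correct, but the necessity of switching to the unweighted dispersive estimate for this single term, and the true source of the $\epsilon^{1/2}$, need to be made explicit. (Also, the $\epsilon\mathcal M_1\mathcal M_3$ term on the right-hand side arises from $\|\langle x\rangle^{-1}W_{0,t}\|_2\|\langle x\rangle^2(\alpha_tQ_t+i\epsilon\alpha_t\chi_t)\|_2$ in Corollary \ref{wbound}, not from the $\langle\partial_sQ_s,\tilde R_s\rangle$ term as you state.)
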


We emphasize that the implied constants in the bounds in Propositions \ref{m1}, \ref{m2} and \ref{m3} depend on $T$.

We will prove Propositions \ref{m1}, \ref{m2} and \ref{m3} in Subsections \ref{sec:m1}, \ref{sec:m2} and \ref{sec:m3}, respectively. We will use them in Subsection \ref{sec:mainproof} to conclude the proof of Theorem \ref{main}.


\subsection{Preparations for the proof.}

In the proof of Propositions \ref{m1}, \ref{m2} and \ref{m3} we will frequently and without further mention use the bounds from Lemma \ref{boundsqchi} on $Q_t$, $\chi_t$ and their derivatives. Moreover, we will use the following bound, which follows from the usual adiabatic theorem.

\begin{lemma}
Let $T<T_*$. Then for all $0\leq s\leq t \leq T$ and $\epsilon\in(0,1]$,
\begin{equation*}
\| P_t U(t,s) Q_s \|_{H^1} \lesssim \epsilon \,.
\end{equation*}
\end{lemma}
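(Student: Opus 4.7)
The plan is to apply the adiabatic theorem of Section \ref{sec:adiabatic} (in the energy form, Theorem \ref{adiabaticenergy}) to the self-adjoint Hamiltonian generating $U(t,s)$. Because $Q_t$ is an eigenvector of $-\partial_x^2+V_t$, the projection $P_t$ commutes with $-\partial_x^2+V_t$, so $\mathcal H(t) := (-\partial_x^2+V_t)P_t = P_t(-\partial_x^2+V_t)$ is self-adjoint, and by definition it generates $U(t,s)$. Moreover $\mathcal H(t)Q_t = 0$, so $\Phi(t) := Q_t/\|\psi_0\|_2$ is an $L^2$-normalized eigenvector of $\mathcal H(t)$ with eigenvalue $E(t)\equiv 0$. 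The $C^\infty$ regularity of $\Phi$ and $\mathcal H$ required by Theorems \ref{adiabatic}--\ref{adiabaticenergy}, as well as the operator inequality on $\dot{\mathcal H}$, follow from Lemma \ref{boundsqchi} applied to $V$, $Q$ and their time derivatives.

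The main obstacle to a direct application is that the eigenvalue $E(t)=0$ sits at the edge of the continuous spectrum of $\mathcal H(t)$ on $P_tL^2$, so no strict spectral gap is available in the naive sense and the quantity $\Xi(t) := \mathcal H(t)^{-1}\partial_t\Phi(t)$ used in the proof of Theorem \ref{adiabatic} is not obviously bounded. I would circumvent this by substituting for $\Xi(t)$ the function $\chi_t/\|\psi_0\|_2$ introduced in \eqref{eq:chit}, which is bounded in $L^2$ precisely thanks to the no-resonance assumption of Lemma \ref{boundsqchi}. Using the decomposition $\mathcal H(t) = L_t + E_tP_t$ on $P_tL^2$, one has the relation $\mathcal H(t)\chi_t = \partial_tQ_t + E_t\chi_t$; inserting this into the key identity of the adiabatic proof produces the identity of Theorem \ref{adiabatic} plus an additive correction proportional to $E_t\langle P_t\Psi,\chi_t\rangle/\|\psi_0\|_2$. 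This correction sees only the orthogonal component $P_tU(t,s)Q_s$ — the very quantity being bounded — and is linear in it, so it is absorbed by a Gr\"onwall estimate on the bounded interval $[0,T]$.

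To upgrade the resulting $L^2$ bound to $H^1$, I would carry out the same $\chi_t$-substitution in the proof of Theorem \ref{adiabaticenergy}, obtaining the quadratic-form bound
\[
\langle P_tU(t,s)Q_s,\, L_t\, P_tU(t,s)Q_s\rangle \;\lesssim\; \epsilon^2.
\]
The $H^1$ bound then follows from the coercivity of $L_t$ on $P_tL^2$: since $E_t$ is the unique negative eigenvalue of $-\partial_x^2+V_t$, the restriction of $L_t=-\partial_x^2+V_t-E_t$ to $P_tL^2$ has spectrum contained in $[|E_t|,\infty)$ with $|E_t|>0$ bounded away from $0$ uniformly in $t\in[0,T]$ (as $T<T_*$), which combined with boundedness of $V_t$ yields $\|f\|_{H^1}^2 \lesssim \langle f,L_tf\rangle$ for all $f\in P_tL^2$. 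Putting these ingredients together delivers the claimed bound $\|P_tU(t,s)Q_s\|_{H^1}\lesssim\epsilon$; the delicate step throughout is the careful bookkeeping of the correction terms produced by the $\chi_t$-substitution and verifying that they remain of order $O(1)$ (rather than $O(1/\epsilon)$) so that the Gr\"onwall argument closes with the expected adiabatic rate.
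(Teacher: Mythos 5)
Your overall strategy coincides with the paper's: apply Theorems \ref{adiabatic} and \ref{adiabaticenergy} to $H(t)=(-\partial_x^2+V_t)P_t$ with $\Phi(t)=Q_t/\|\psi_0\|_2$, $E(t)=0$, and then upgrade from the quadratic form to $H^1$ via the coercivity $\|f\|_{H^1}^2\lesssim\langle f,(H(t)+M)f\rangle$; that last step is exactly the paper's, and is fine. The gap is in your treatment of the corrector. With $\Xi$ replaced by $\chi_t/\|\psi_0\|_2$ and using $H(t)\chi_t=\partial_tQ_t+E_t\chi_t$, the key identity in the proof of Theorem \ref{adiabatic} becomes
\[
\frac{d}{dt}\left\|\Psi-\Phi\right\|^2 \;=\; 2\re\langle\Psi,\dot\Phi\rangle \;=\; \frac{2}{\|\psi_0\|_2}\left(\epsilon\,\im\langle\dot\Psi,\chi_t\rangle \;-\; E_t\,\re\langle\Psi-\Phi,\chi_t\rangle\right),
\]
where I used $\langle\Phi,\chi_t\rangle=0$ to replace $\Psi$ by $\Psi-\Phi$ in the correction. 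After the integration by parts in time, the first term contributes $\lesssim\epsilon\sup\|\Psi-\Phi\|$ as in the paper, but your correction contributes $\lesssim\int_s^t\|\Psi(\sigma)-\Phi(\sigma)\|\,d\sigma$ with an $\epsilon$-\emph{independent} constant. Setting $G(t)=\sup_{s\le\sigma\le t}\|\Psi(\sigma)-\Phi(\sigma)\|$, the resulting inequality is $G^2\lesssim\epsilon G+TG$, hence only $G\lesssim\epsilon+T$. Gr\"onwall cannot rescue this: it absorbs terms of size $G^2$ (or $\epsilon G$) against a left-hand side of size $G^2$, but a term that is merely \emph{linear} in $G$ with an $O(1)$ coefficient destroys the adiabatic rate on time scales of order one. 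Your closing remark that it suffices for the corrections to be ``$O(1)$'' is precisely where the argument fails; they must be $O(\epsilon)\cdot G$ or quadratic in the error. The same defect propagates into your version of the energy estimate of Theorem \ref{adiabaticenergy}.

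The correction is in fact of higher order --- to leading order $P_t\Psi\approx i\epsilon a\,\chi_t$ with $a$ real up to $O(\epsilon)$, so $\re\langle\Psi-\Phi,\chi_t\rangle$ is $O(\epsilon^2)$ plus a second-order remainder --- but exhibiting this requires either a further corrector or another integration by parts in time exploiting the phase $e^{iE_t(\cdot)/\epsilon}$ carried by $\langle\Psi,\chi_t\rangle$, i.e.\ genuinely more work than a Gr\"onwall absorption on $[0,T]$. Note for comparison that the paper's own proof carries no correction term at all: it simply takes $\Xi(t)=\chi_t$ in the abstract theorems (so that the identity $\langle\Psi,\dot\Phi\rangle=-i\epsilon\langle\dot\Psi,\Xi\rangle$ holds exactly and the quadratic structure of the estimate is preserved). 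Your attempt to justify that identification is commendably more explicit about the spectral subtlety, but the proposed mechanism for closing the resulting estimate does not work as stated.
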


By Sobolev's theorem, this implies, in particular, that for all $0\leq s\leq t \leq T$ with $T<T_*$,
\begin{equation}
\label{eq:adiabaticappl}
\| P_t U(t,s) Q_s \|_\infty \lesssim \epsilon \,,
\end{equation}
which will be useful later on.

\begin{proof}
Since $|\int V|f|^2\,dx |\leq \|V\|_2 \|f\|_4^2 \lesssim \|V\|_2 \|f'\|_2^{1/2} \|f\|_2^{3/2}$ and since $\|V(t)\|_2\lesssim 1$, we have for some constant $M$, independent of $t$,
$$
\|f\|_{H^1}^2 \leq 2 \langle f,(-\partial_x^2 + V_t + M) f\rangle \,.
$$
Now let $H(t) = (-\partial_x^2+V_t)P_t = (-\partial_x^2+V_t) - E_t\|\psi_0\|_2^{-2} |Q_t\rangle\langle Q_t|$. Since $|E_t|\lesssim 1$, we obtain
$$
\|f\|_{H^1}^2 \leq 2 \langle f,(H(t) + M') f\rangle \,.
$$
We obtain the bound in the lemma from Theorems \ref{adiabatic} and \ref{adiabaticenergy} applied to $f=P_tU(t,s)Q_s$. Note that we are, indeed, in the set-up of Section \ref{sec:adiabatic} with $H(t)$ defined before and with $\Phi(t)=Q_t$, $E(t)=0$, $\beta(t)=0$ (since $Q_t$ is real), $\theta(t)=0$ and $\Xi(t)=\chi_t$. The fact that the eigenvalue $0$ of $H(t)$ is simple follows from the fact that $0$ is never an eigenvalue of a one-dimensional Schr\"odinger operator with potential in $\langle x\rangle^{-1} L^1(\R)$, see \cite[Chapter 5]{Ya}. The fact that the constants in Theorems \ref{adiabatic} and \ref{adiabaticenergy} are finite follows from Lemma \ref{boundsqchi}.
\end{proof}

Finally, in the proofs of Propositions \ref{m2} and \ref{m3} in Subsections \ref{sec:m2} and \ref{sec:m3} we will apply Theorem \ref{dispersive} with the $V$ from Proposition \ref{reference}. At this point the assumption $T<T^*$, which is stronger than $T<T_*$, enters in order to satisfy the eigenvalue and non-resonance conditions in Assumption \ref{ass:disp}. It is also at this point that the assumptions $\phi_0,\dot\phi_0\in\langle x\rangle^{-2} L^1(\R)$ enter. These assumptions, together with the bounds on $Q_t$ from Lemma \ref{boundsqchi}, imply that $V$ and its derivative satisfy the properties stated in Assumption \ref{ass:disp}.


\subsection{Bounds on $W$}

We recall that $W$ was defined in \eqref{eq:w}. In this subsection we will derive bounds on $W$ in terms of our control functions. At a crucial point in our proof it will be important to use instead of $\mathcal M_3$ the modified control function $\tilde{\mathcal M}_3$ defined by
$$
\tilde{\mathcal M}_3(t) = \int_0^t \left( 1+ \epsilon^{-1} \min\left\{ \left( \frac{\epsilon}{s} \right)^{1/2}, \left( \frac{\epsilon}{s} \right)^{3/2} \right\} \right) \mathcal M_3(s)\,ds \,.
$$
Note that since
\begin{align}\label{eq:finiteint}
\int_0^t \left( 1+ \epsilon^{-1} \min\left\{ \left( \frac{\epsilon}{s} \right)^{1/2}, \left( \frac{\epsilon}{s} \right)^{3/2} \right\} \right) ds
& \leq t + \int_0^\infty \epsilon^{-1} \min\left\{ \left( \frac{\epsilon}{s} \right)^{1/2}, \left( \frac{\epsilon}{s} \right)^{3/2} \right\} ds \notag \\
& = t + 4 \,,
\end{align}
we have
\begin{equation}
\label{eq:m3refined}
\tilde{\mathcal M}_3(t) \lesssim \mathcal M_3(t) \,.
\end{equation}

\begin{lemma}\label{wbound0}
There is a real-valued function $W_0$ such that the following bounds hold on any interval $[0,T]$ with $T<T_*$,
\begin{align*}
\| \langle x \rangle (W_t-W_{0,t}) \|_2 + \| \langle x \rangle \partial_t(W_t-W_{0,t}) \|_2 & \lesssim \epsilon^2 \left( 1+ \tilde{\mathcal M}_3(t) + \mathcal M_1(t)^2 \right),\\
\| W_{0,t} \|_2 + \| \partial_t W_{0,t} \|_2 & \lesssim \epsilon^2 \mathcal M_1(t)\mathcal M_2(t), \\
\| \langle x \rangle^{-1} W_{0,t} \|_2 + \| \langle x \rangle^{-1} \partial_t W_{0,t} \|_2 & \lesssim \epsilon^3\mathcal M_1(t)\mathcal M_3(t) \,.
\end{align*}
In particular,
\begin{align*}
\|W_t\|_2 + \|\partial_t W_t\|_2 \lesssim \epsilon^2 \left( 1+ \tilde{\mathcal M}_3(t) + \mathcal M_1(t)^2 + \mathcal M_1(t)\mathcal M_2(t) \right).
\end{align*}
\end{lemma}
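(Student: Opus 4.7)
The plan is to substitute the decomposition $R = \tilde R + i\epsilon\alpha\chi$ from \eqref{eq:tilder} into the integrand of $W$ and use that $\chi$ is real-valued (cf.\ \eqref{eq:chireal}). Since $i\epsilon|\alpha|^2\chi$ is purely imaginary, the cross term simplifies to $\re(\overline\alpha R) = \re(\overline\alpha\tilde R)$, and
$$
|R|^2 = |\tilde R|^2 - 2\epsilon\chi\,\im(\alpha\overline{\tilde R}) + \epsilon^2|\alpha|^2\chi^2.
$$
Mass conservation together with $\|Q_s\|_2 = \|\psi_0\|_2$ and $\langle Q_s,R_s\rangle = 0$ yields $1-|\alpha_s|^2 = \|\psi_0\|_2^{-2}\|R_s\|_2^2$; combined with $\|R\|_2 \leq \|\tilde R\|_2 + \epsilon|\alpha|\,\|\chi\|_2 \lesssim \epsilon(1+\mathcal M_1(t))$ and Lemma~\ref{boundsqchi} this gives $|1-|\alpha_s|^2| \lesssim \epsilon^2(1+\mathcal M_1(t)^2)$ for all $s\leq t$. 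Thus the integrand of $W$ splits into five pieces: two pointwise of size $\epsilon^2$ and localized by the exponential decay of $Q$ or $\chi$ (namely $(|\alpha|^2-1)Q^2$ and $\epsilon^2|\alpha|^2\chi^2$), two linear-in-$\tilde R$ cross terms that are still multiplied by a localized factor ($Q$ or $\epsilon\chi$), and the non-localized $|\tilde R|^2$.

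I will take
$$
W_{0,t} := -\tfrac12\int_0^t |\tilde R_s|^2\,\sin(t-s)\,ds,
$$
the unique non-localized piece. For the unweighted $L^2$ bound on $W_0$ I use $\||\tilde R_s|^2\|_2 = \|\tilde R_s\|_4^2 \leq \|\tilde R_s\|_\infty\|\tilde R_s\|_2 \lesssim \epsilon^2 \max\{1,(\epsilon/s)^{1/2}\}\,\mathcal M_1(t)\mathcal M_2(t)$, together with the elementary computation $\int_0^t \max\{1,(\epsilon/s)^{1/2}\}\,ds \lesssim 1$ on the compact interval. For the weighted bound I use $\|\langle x\rangle^{-1}|\tilde R_s|^2\|_2 \leq \|\langle x\rangle^{-1}\tilde R_s\|_\infty\|\tilde R_s\|_2$, which after integration in $s$ produces precisely $\epsilon^3\mathcal M_1(t)\tilde{\mathcal M}_3(t)$; the desired $\epsilon^3\mathcal M_1\mathcal M_3$ then follows from \eqref{eq:m3refined}.

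The remainder $W-W_0$ is estimated in weighted $L^2$ term by term, in each case exploiting the exponential decay bounds of Lemma~\ref{boundsqchi}. The piece $(|\alpha|^2-1)Q^2$ contributes $|1-|\alpha|^2|\cdot \|\langle x\rangle Q^2\|_2 \lesssim \epsilon^2(1+\mathcal M_1^2)$. Each cross term has the form (localized)$\cdot\tilde R$, handled by peeling off weights via $\|\langle x\rangle Q\tilde R\|_2 \leq \|\langle x\rangle^2 Q\|_2\,\|\langle x\rangle^{-1}\tilde R\|_\infty$ (and the analogous bound with $\chi$ in place of $Q$, carrying an extra $\epsilon$); integration in $s$ produces a contribution of size $\epsilon^2\tilde{\mathcal M}_3 \lesssim \epsilon^2\mathcal M_3$. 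The piece $\epsilon^2|\alpha|^2\chi^2$ gives a pure $\lesssim \epsilon^2$. For the $\partial_t$-assertions, differentiation under the integral sign generates no boundary term (since $\sin(t-t)=0$) and merely replaces $\sin$ by $\cos$, so all the above estimates carry over verbatim. The final composite bound on $\|W\|_2+\|\partial_t W\|_2$ is the triangle inequality applied to $W = W_0 + (W-W_0)$, after absorbing $\tilde{\mathcal M}_3$ by $\mathcal M_3$. The only delicate point is the correct assignment of pieces to $W_0$ versus $W-W_0$: the linear-in-$\tilde R$ cross terms must go into $W-W_0$ so their $\langle x\rangle$-weight can be absorbed by the decay of $Q$ or $\chi$, whereas $|\tilde R|^2$ must be isolated as $W_0$ where the $\|\tilde R\|_\infty$ and $\|\langle x\rangle^{-1}\tilde R\|_\infty$ bounds — rather than spatial localization — supply the smallness.
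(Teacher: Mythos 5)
Your proposal is correct and follows essentially the same route as the paper: the same splitting of $W$ after substituting $R=\tilde R+i\epsilon\alpha\chi$, the same choice $W_{0,t}=-\tfrac12\int_0^t|\tilde R_s|^2\sin(t-s)\,ds$, the identity $1-|\alpha|^2=\|\psi_0\|_2^{-2}\|R\|_2^2$ from mass conservation, and the same H\"older pairings ($\|\tilde R\|_2\|\tilde R\|_\infty$ for $W_0$, $\|\langle x\rangle^2 Q\|_2\|\langle x\rangle^{-1}\tilde R\|_\infty$ for the cross terms). The treatment of $\partial_t W$ by replacing $\sin$ with $\cos$ is also the paper's argument, so there is nothing to add.
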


\begin{proof}
Inserting definition \eqref{eq:tilder} of $\tilde R$ into definition \eqref{eq:w} of $W$ we obtain the decomposition
$$
W = W_0 + W_1 + W_2 + W_3
$$
with
$$
W_{0,t} = - \tfrac12 \int_0^t |\tilde R_s|^2\sin(t-s)\,ds \,,
$$
$$
W_{1,t} = \tfrac12 \int_0^t (1-|\alpha_s|^2) Q_s^2 \sin(t-s)\,ds \,,
$$
$$
W_{2,t} = - \int_0^t Q_s \re(\overline{\alpha_s} \tilde R_s) \sin(t-s)\,ds \,,
$$
$$
W_{3,t} = -\tfrac12 \int_0^t \left( \epsilon^2 |\alpha_s|^2 \chi_s^2 + 2\epsilon \chi_s \im(\overline{\alpha_s} \tilde R_s)\right) \sin(t-s)\,ds \,.
$$
Note that for each $j$, $\partial_t W_{j,t}$ is given by the same formula as $W_j$, but with $\sin(t-s)$ replaced by $\cos(t-s)$. Consequently, $W_{j,t}+i \partial_t W_{j,t}$ is given by the same formula, but with $\sin(t-s)$ replaced by $i e^{-i(t-s)}$. In the following we shall derive bounds on the norms of $W_{j,t}+i \partial_t W_{j,t}$. Since $W_j$ is real-valued, this also implies bounds on the corresponding norms of $W_{j,t}$ and $\partial_t W_{j,t}$.

We begin with the bounds on $W_0$. We have
\begin{align*}
& \left\| W_{0,t}+i\partial_t W_{0,t} \right\|_2 \leq \tfrac{1}{2} \int_0^t \left\| \tilde R_s \right\|_2 \left\| \tilde R_s \right\|_\infty ds \\
& \qquad\qquad \leq \tfrac12 \epsilon^2 \mathcal M_1(t)\mathcal M_2(t) \int_0^t \max\left\{ 1, \left( \frac{\epsilon}{s}\right)^{1/2} \right\} ds \\
& \qquad\qquad \lesssim \epsilon^2 \mathcal M_1(t) \mathcal M_2(t) \,,\\
& \left\| \langle x \rangle^{-1} (W_{0,t}+i\partial_t W_{0,t}) \right\|_2 \leq \tfrac{1}{2} \int_0^t \left\| \tilde R_s \right\|_2 \left\| \langle x \rangle^{-1} \tilde R_s \right\|_\infty ds \\
& \qquad\qquad \leq \tfrac12 \epsilon^2 \mathcal M_1(t)\mathcal M_3(t) \int_0^t \left( \epsilon + \min\left\{ \left( \frac{\epsilon}{s}\right)^{1/2}, \left( \frac{\epsilon}{s} \right)^{3/2} \right\} \right) ds \\
& \qquad\qquad \lesssim \epsilon^3 \mathcal M_1(t) \mathcal M_3(t) \,.
\end{align*}
In the last inequality we used \eqref{eq:finiteint}.

We finally prove bounds on $\|\langle x\rangle (W_{j,t}+i\partial_t W_{j,t})\|_2$ for $j=1,2,3$. Because of the bounds on $Q_s$ and $\chi_s$ we have 
\begin{align*}
& \left\| \langle x \rangle (W_{2,t}+i\partial_t W_{2,t}) \right\|_2 \leq \int_0^t \left\|\langle x \rangle^2 Q_s \right\|_2 \left\| \langle x \rangle^{-1} \tilde R_s \right\|_\infty\,ds \\
&\qquad\qquad \lesssim \epsilon^2 \int_0^t \left( 1+ \epsilon^{-1} \min\left\{ \left( \frac{\epsilon}{s} \right)^{1/2}, \left( \frac{\epsilon}{s} \right)^{3/2} \right\} \right) \mathcal M_3(s)\,ds \\
& \qquad\qquad = \epsilon^2 \tilde{\mathcal M_3}(t) \,, \\
& \left\| \langle x \rangle (W_{3,t}+i\partial_t W_{3,t}) \right\|_2 \leq \tfrac12 \epsilon \int_0^t \left\| \langle x \rangle^2 \chi_s \right\|_2 \left( \epsilon \left\| \langle x \rangle^{-1} \chi_s \right\|_\infty + 2 \left\| \langle x \rangle^{-1} \tilde R_s \right\|_\infty\right)ds \\
& \qquad\qquad \lesssim \epsilon^2 \int_0^t \left( 1+ \left( \epsilon + \min\left\{ \left( \frac{\epsilon}{s} \right)^{1/2}, \left( \frac{\epsilon}{s}\right)^{3/2} \right\} \right) \mathcal M_3(s) \right)ds \\
& \qquad\qquad \lesssim \epsilon^2 \left( 1 + \epsilon \int_0^t \left( 1+ \epsilon^{-1} \min\left\{ \left( \frac{\epsilon}{s} \right)^{1/2}, \left( \frac{\epsilon}{s} \right)^{3/2} \right\} \right) \mathcal M_3(s)\,ds \right) \\
& \qquad\qquad = \epsilon^2 \left( 1 + \epsilon \tilde{\mathcal M}_3(t) \right).
\end{align*}
In order to bound $W_1$, we recall that the $L^2$-norm of the solution $\psi$ is constant in time. In view of the orthogonality in \eqref{eq:decomp} this implies
$$
\|\psi_0\|_2^2 = \|\psi_t\|_2^2 = |\alpha_t|^2 \|Q_t\|^2 + \|R_t\|^2 
= |\alpha_t|^2 \|\psi_0\|_2^2 + \|R_t\|_2^2  \,,
$$
that is,
$$
1-|\alpha_t|^2 = \|\psi_0\|_2^{-2} \|R_t\|_2^2 \,.
$$
This implies both $|\alpha_t|^2\leq 1$ (which follows also directly from the definition of $\alpha$ and the Schwarz inequality) and
$$
1-|\alpha_t|^2 \leq 2\|\psi_0\|_2^{-2} (\epsilon^2 |\alpha_t|^2 \|\chi_t\|_2^2 + \|\tilde R_t\|_2^2) \leq 2\|\psi_0\|_2^{-2} (\|\chi_t\|_2^2 + \mathcal M_1(t)^2) \epsilon^2 \lesssim \epsilon^2 (1+ \mathcal M_1(t)^2).
$$
Thus,
$$
\|\langle x \rangle (W_{1,t}+i\partial_t W_{1,t}) \|_2 \leq \tfrac12 \int_0^t (1-|\alpha_s|^2) \|\langle x\rangle Q_s^2\|_2 \,ds \lesssim \epsilon^2 (1+ \mathcal M_1(t)^2) \,.
$$
This concludes the proof of the lemma.
\end{proof}

\begin{corollary}\label{wbound}
There is a real-valued function $W_0$ such that the following bounds hold on any interval $[0,T]$ with $T<T_*$,
\begin{align*}
\left\| W_t \left(\alpha_t Q_t +i\epsilon\alpha_t \chi_t + \tilde R_t \right) \right\|_1 \lesssim
\epsilon^2 \!\left( 1 + \mathcal M_3(t) +\mathcal M_1(t)^2 + \epsilon \mathcal M_1(t)\mathcal M_3(t) + \epsilon \mathcal M_1(t)^2 \mathcal M_2(t) \right),
\end{align*}
\begin{align*}
& \left\| \langle x \rangle \left( W_t \left( \alpha_t Q_t +i\epsilon\alpha_t \chi_t + \tilde R_t \right) - W_{0,t} \tilde R_t \right) \right\|_1 \lesssim
\epsilon^2 \left( 1 + \tilde{\mathcal M}_3(t) + \mathcal M_1(t)^2 + \epsilon \mathcal M_1(t) \mathcal M_3(t) \right)
\end{align*}
and
$$
\left\| W_{0,t} \tilde R_t \right\|_1 \lesssim \epsilon^3 \mathcal M_1(t)^2 \mathcal M_2(t) \,.
$$
\end{corollary}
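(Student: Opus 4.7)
The three bounds all come from splitting $W = W_0 + (W-W_0)$ and expanding $\alpha Q + i\epsilon\alpha\chi + \tilde R$, which produces six products. Each product is then estimated by Cauchy--Schwarz (occasionally H\"older) with the weights $\langle x\rangle$ and $\langle x\rangle^{-1}$ distributed so as to match the best of the three bounds from Lemma \ref{wbound0}: $\|\langle x\rangle(W-W_0)\|_2 \lesssim \epsilon^2(1+\tilde{\mathcal M}_3+\mathcal M_1^2)$, $\|W_0\|_2 \lesssim \epsilon^2\mathcal M_1\mathcal M_2$, and $\|\langle x\rangle^{-1}W_0\|_2 \lesssim \epsilon^3\mathcal M_1\mathcal M_3$. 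The uniform weighted estimates $\|Q\|_2, \|\langle x\rangle^2 Q\|_2, \|\chi\|_2, \|\langle x\rangle^2 \chi\|_2 \lesssim 1$ from Lemma \ref{boundsqchi}, together with $|\alpha|\le 1$ and $\|\tilde R\|_2\le \epsilon\mathcal M_1$, supply the companion factors.

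The third estimate is the simplest: $\|W_0\tilde R\|_1 \le \|W_0\|_2\,\|\tilde R\|_2 \lesssim \epsilon^2\mathcal M_1\mathcal M_2\cdot\epsilon\mathcal M_1 = \epsilon^3\mathcal M_1^2\mathcal M_2$.

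For the second estimate, observe that $W(\alpha Q+i\epsilon\alpha\chi+\tilde R) - W_0\tilde R = W\alpha Q + i\epsilon W\alpha\chi + (W-W_0)\tilde R$. The piece $\langle x\rangle W\alpha Q$ is split into $\langle x\rangle(W-W_0)\alpha Q$, bounded by $\|\langle x\rangle(W-W_0)\|_2\|Q\|_2 \lesssim \epsilon^2(1+\tilde{\mathcal M}_3+\mathcal M_1^2)$, and $\langle x\rangle W_0\alpha Q$, where the key trick is to redistribute weights as $(\langle x\rangle^{-1}W_0)(\langle x\rangle^2 Q)$ and apply Cauchy--Schwarz using the sharper bound $\|\langle x\rangle^{-1}W_0\|_2 \lesssim \epsilon^3\mathcal M_1\mathcal M_3$; this produces $\epsilon^3\mathcal M_1\mathcal M_3 = \epsilon^2\cdot\epsilon\mathcal M_1\mathcal M_3$. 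The piece $\langle x\rangle i\epsilon W\alpha\chi$ is handled identically, with $\chi$ playing the r\^ole of $Q$, giving a still smaller contribution of order $\epsilon^3(1+\tilde{\mathcal M}_3+\mathcal M_1^2) + \epsilon^4\mathcal M_1\mathcal M_3$. Finally, $\langle x\rangle(W-W_0)\tilde R$ is bounded by $\|\langle x\rangle(W-W_0)\|_2\,\|\tilde R\|_2 \lesssim \epsilon^3\mathcal M_1(1+\tilde{\mathcal M}_3+\mathcal M_1^2)$. The leftover cubic pieces are absorbed into the stated bound by elementary Young-type inequalities, e.g.\ $\epsilon^3\mathcal M_1 \le \epsilon^2(1+\mathcal M_1^2)$, $\epsilon^3\mathcal M_1\tilde{\mathcal M}_3 = \epsilon^2\cdot\epsilon\mathcal M_1\tilde{\mathcal M}_3 \lesssim \epsilon^2\cdot\epsilon\mathcal M_1\mathcal M_3$ via \eqref{eq:m3refined}, and the cubic $\epsilon^3\mathcal M_1^3$ dominated using $\epsilon^2$-powers against $\mathcal M_1^2$.

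The first estimate then follows from the second by adding back the three unweighted contributions $\|W_0\alpha Q\|_1$, $\|\epsilon W_0\alpha\chi\|_1$ and $\|W_0\tilde R\|_1$. The last is precisely the $\epsilon^3\mathcal M_1^2\mathcal M_2$ term proved above; the first two are again improved by the weight redistribution $W_0 Q = (\langle x\rangle^{-1}W_0)(\langle x\rangle Q)$, producing $\epsilon^3\mathcal M_1\mathcal M_3$, and using $\tilde{\mathcal M}_3 \lesssim \mathcal M_3$ we arrive at the stated form. The main obstacle is entirely one of bookkeeping: in each of the six products one must remember to place $\langle x\rangle$ on the companion factor and $\langle x\rangle^{-1}$ on $W_0$ so as to cash in the sharper $\epsilon^3$-bound on $\|\langle x\rangle^{-1}W_0\|_2$ rather than the $\epsilon^2$-bound on $\|W_0\|_2$ whenever the partner $Q$ or $\chi$ has the necessary $\langle x\rangle^2$-decay.
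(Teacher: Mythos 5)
Your overall strategy is the paper's: split $W=W_0+(W-W_0)$, use Cauchy--Schwarz, and redistribute the weights so that $W_0$ is always measured in the sharper norm $\|\langle x\rangle^{-1}W_0\|_2\lesssim\epsilon^3\mathcal M_1\mathcal M_3$ whenever the partner has $\langle x\rangle^2$-decay. The third bound and the treatment of the $W_0(\alpha Q+i\epsilon\alpha\chi)$ pieces are fine (the paper's displayed $\|W_{3,t}\|_2$ in the third estimate is a typo for $\|W_{0,t}\|_2$, and you correctly use the latter).

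There is, however, one genuine gap. In the second estimate you isolate the piece $\langle x\rangle(W-W_0)\tilde R$ and bound it by
$\|\langle x\rangle(W-W_0)\|_2\,\|\tilde R\|_2\lesssim\epsilon^2\bigl(1+\tilde{\mathcal M}_3+\mathcal M_1^2\bigr)\cdot\epsilon\mathcal M_1$,
which contains the term $\epsilon^3\mathcal M_1^3$. This cannot be absorbed into the stated right-hand side $\epsilon^2\bigl(1+\tilde{\mathcal M}_3+\mathcal M_1^2+\epsilon\mathcal M_1\mathcal M_3\bigr)$: that would require $\epsilon\mathcal M_1^3\lesssim 1+\mathcal M_1^2+\epsilon\mathcal M_1\mathcal M_3$, i.e.\ essentially $\epsilon\mathcal M_1\lesssim 1$, which is part of what the whole bootstrap is trying to establish and is not available a priori (there is no general inequality between $\mathcal M_1$ and $\mathcal M_2,\mathcal M_3$, so no Young-type trick rescues it; the same problem propagates into your first estimate). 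The paper avoids this by never separating $\tilde R$ from its companions in the $(W-W_0)$ term: it pairs $W-W_0$ with the full combination $\alpha Q+i\epsilon\alpha\chi+\tilde R=\alpha Q+R=e^{i\epsilon^{-1}\int_0^tE}\psi_t$, whose $L^2$ norm is \emph{exactly} $\|\psi_0\|_2\lesssim1$ by mass conservation — an $O(1)$ factor with no $\mathcal M_1$ at all. Concretely, write
$W(\alpha Q+i\epsilon\alpha\chi+\tilde R)-W_0\tilde R=(W-W_0)(\alpha Q+i\epsilon\alpha\chi+\tilde R)+W_0(\alpha Q+i\epsilon\alpha\chi)$
and estimate the first summand by $\|\langle x\rangle(W-W_0)\|_2\|\psi_0\|_2$ and the second as you already do. With this one-line repair your argument closes; a small additional remark is that passing from the second estimate to the first only requires adding back $\|W_0\tilde R\|_1$ (the terms $W_0\alpha Q$ and $i\epsilon W_0\alpha\chi$ are already contained in the weighted quantity), though adding the extra terms as you propose merely overcounts and does no harm.
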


\begin{proof}
We bound
\begin{align*}
&\left\| W_t \left( \alpha_t Q_t +i\epsilon\alpha_t \chi_t + \tilde R_t \right) \right\|_1 \\
& \qquad \leq \left\| W-W_{0,t} \right\|_2 \left\| \alpha_t Q_t +i\epsilon\alpha_t \chi_t + \tilde R_t \right\|_2 \\
& \qquad\quad + \left\| \langle x \rangle^{-1} W_{0,t} \right\|_2 \left\| \langle x \rangle \left( \alpha_t Q_t +i\epsilon\alpha_t \chi_t \right) \right\|_2 + \left\| W_{0,t}\right\|_2 \left\|\tilde R_t \right\|_2 \,,
\end{align*}
\begin{align*}
&\left\| \langle x \rangle \left( W_t \left( \alpha_t Q_t +i\epsilon\alpha_t \chi_t + \tilde R_t \right) - W_{0,t} \tilde R_t \right) \right\|_1 \\
& \qquad \leq \left\| \langle x \rangle (W_t - W_{0,t}) \right\|_2 \left\| \alpha_t Q_t +i\epsilon\alpha_t \chi_t + \tilde R_t \right\|_2  + \left\| \langle x \rangle^{-1} W_{0,t} \right\|_2 \left\| \langle x \rangle^2 \left( \alpha_t Q_t+i\epsilon\alpha_t \chi_t \right) \right\|_2
\end{align*}
and
$$
\left\| W_{0,t} \tilde R_t \right\|_1 \leq \left\|W_{3,t} \right\|_2 \left\| \tilde R_t \right\|_2 \,.
$$
We now use the above bounds on the components of $W$, including the observation \eqref{eq:m3refined}, together with $|\alpha|\leq 1$ and
$$
\left\| \langle x \rangle^2 \left( \alpha_t Q_t+i\epsilon\alpha_t\chi_t \right) \right\|_2 \lesssim 1 \,,
\qquad
\left\| \alpha_t Q_t +i\epsilon\alpha_t \chi_t + \tilde R_t \right\|_2 = \left\|\psi_0\right\|_2 \lesssim 1 \,.
$$
This yields the bounds in the corollary.
\end{proof}


\subsection{Bound on $\partial_t\alpha$}

\begin{lemma}\label{alphadot}
The following bounds hold on any interval $[0,T]$ with $T<T_*$,
$$
|\partial_t \alpha_t | \lesssim \epsilon \left( 1+ \tilde{\mathcal M}_3(t) + \mathcal M_1(t)^2 + \mathcal M_1(t) \mathcal M_2(t) \right)
$$
and
\begin{align*}
\left| \partial_t (|\alpha_t|^2) \right| & \lesssim \epsilon \left( \epsilon + \min\left\{ \left( \frac{\epsilon}{t} \right)^{1/2}, \left( \frac{\epsilon}{t}\right)^{3/2} \right\} \right) \mathcal M_3(t) \\
& \quad + \epsilon^2 \left( 1+ \mathcal M_3(t) + \mathcal M_1(t)^2 + \mathcal M_1(t) \mathcal M_2(t) \right) \mathcal M_1(t) \,.
\end{align*}
\end{lemma}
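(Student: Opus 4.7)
\smallskip

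The starting point is the identity for $\partial_t\alpha$ from Lemma \ref{effeq}, namely
$$
\partial_t\alpha_t = \|\psi_0\|_2^{-2}\bigl(\langle\partial_tQ_t, R_t\rangle - i\epsilon^{-1}\langle Q_t, W_t(\alpha_tQ_t+R_t)\rangle\bigr),
$$
into which I would substitute the decomposition $R_t = \tilde R_t + i\epsilon\alpha_t\chi_t$. For the first bound on $|\partial_t\alpha|$, Cauchy--Schwarz together with the uniform bounds of Lemma \ref{boundsqchi} gives $|\langle\partial_tQ, R\rangle|\leq \|\partial_tQ\|_2\|\tilde R\|_2 + \epsilon|\alpha|\,\|\partial_tQ\|_2\|\chi\|_2 \lesssim \epsilon(1+\mathcal M_1)$. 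For the $W$-term I would use the pointwise decay $\|\langle x\rangle^{-1}Q\|_\infty\lesssim 1$ and the split $W(\alpha Q+R) = W_0\tilde R + [W(\alpha Q+R)-W_0\tilde R]$ together with the second and third inequalities of Corollary \ref{wbound} to get
$$
|\langle Q, W(\alpha Q+R)\rangle| \lesssim \epsilon^2\bigl(1+\tilde{\mathcal M}_3+\mathcal M_1^2+\epsilon\mathcal M_1\mathcal M_3\bigr) + \epsilon^3\mathcal M_1^2\mathcal M_2.
$$
Dividing by $\epsilon$, combining with the first piece, and using $\epsilon\leq 1$ together with \eqref{eq:m3refined} to absorb cross-terms yields the first assertion.

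\smallskip

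The bound on $|\partial_t(|\alpha|^2)| = 2|\re(\bar\alpha\partial_t\alpha)|$ hinges on two cancellations coming from the reality of $Q$, $\chi$, $\partial_tQ$ and $W$. First, since $\langle\partial_tQ, \chi\rangle\in\R$, the term $i\epsilon|\alpha|^2\langle\partial_tQ, \chi\rangle$ in $\bar\alpha\langle\partial_tQ, R\rangle$ is purely imaginary and drops out, leaving $\re(\bar\alpha\langle\partial_tQ, R\rangle) = \re(\bar\alpha\langle\partial_tQ, \tilde R\rangle)$. Second, since $\langle Q, WQ\rangle\in\R$, an analogous expansion of $\langle Q, W(\alpha Q+R)\rangle$ gives
$$
\epsilon^{-1}\im(\bar\alpha\langle Q, W(\alpha Q+R)\rangle) = \epsilon^{-1}\im(\bar\alpha\langle Q, W\tilde R\rangle) + |\alpha|^2\langle Q, W\chi\rangle.
$$
The contribution $\re(\bar\alpha\langle\partial_tQ, \tilde R\rangle)$ is then estimated via the weighted bound $|\langle\partial_tQ, \tilde R\rangle|\leq \|\langle x\rangle\partial_tQ\|_1\|\langle x\rangle^{-1}\tilde R\|_\infty$, which by the definition of $\mathcal M_3$ supplies the first summand $\epsilon(\epsilon + \min\{(\epsilon/t)^{1/2},(\epsilon/t)^{3/2}\})\mathcal M_3$ of the claim. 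The piece $\epsilon^{-1}|\im(\bar\alpha\langle Q, W\tilde R\rangle)|$ is bounded by $\epsilon^{-1}\|Q\|_\infty\|W\|_2\|\tilde R\|_2$, and Lemma \ref{wbound0} together with $\|\tilde R\|_2\lesssim \epsilon\mathcal M_1$ fits this into the second summand of the claim.

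\smallskip

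The main obstacle will be the remaining term $|\alpha|^2\langle Q, W\chi\rangle$: a crude bound $\leq \|Q\chi\|_2\|W\|_2$ only yields $O(\epsilon^2(1+\tilde{\mathcal M}_3+\mathcal M_1^2+\mathcal M_1\mathcal M_2))$, lacking the $\mathcal M_1$ prefactor required by the claim. To recover this factor I would expand $W = W_0+W_1+W_2+W_3$ as in the proof of Lemma \ref{wbound0} and exploit the partial cancellation between the leading $\epsilon^2|\alpha_s|^2\chi_s^2$ piece of the integrand of $W_3$ and the corresponding $-\epsilon^2|\alpha_s|^2\|\chi_s\|_2^2\|\psi_0\|_2^{-2}Q_s^2$ part of $W_1$ coming from $1-|\alpha_s|^2 = \|\psi_0\|_2^{-2}\|R_s\|_2^2$; what survives the pairing with $Q\chi$ inside the time integral is linear in $\tilde R_s$ (either through the $2Q_s\re(\bar\alpha_s\tilde R_s)$ and $2\epsilon\chi_s\im(\bar\alpha_s\tilde R_s)$ contributions or through $\|\tilde R_s\|_2^2$), and so carries the desired $\mathcal M_1$-dependence. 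Combining these bounds with the estimates on the individual $W_j$'s then closes the argument.
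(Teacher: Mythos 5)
Your first three steps track the paper's argument exactly: the identity \eqref{eq:effeqalpha}, the two cancellations from the reality of $\partial_tQ,\chi,Q,W$ (the paper phrases them as $\partial_t(|\alpha|^2)=2\|\psi_0\|_2^{-2}(\langle\partial_tQ,\re(\overline\alpha\tilde R)\rangle+\epsilon^{-1}\langle Q,W\im(\overline\alpha R)\rangle)$), the weighted estimate $|\langle\partial_tQ,\tilde R\rangle|\le\|\langle x\rangle\partial_tQ\|_1\|\langle x\rangle^{-1}\tilde R\|_\infty$, and the bound $\epsilon^{-1}\|Q\|_\infty\|W\|_2\|\tilde R\|_2$. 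The genuine problem is your last paragraph. The cancellation you propose does not exist: the $O(\epsilon^2)$ part of $W_1$ is $\tfrac{\epsilon^2}{2}\int_0^t|\alpha_s|^2\|\psi_0\|_2^{-2}\|\chi_s\|_2^2\,Q_s^2\sin(t-s)\,ds$ while that of $W_3$ is $-\tfrac{\epsilon^2}{2}\int_0^t|\alpha_s|^2\chi_s^2\sin(t-s)\,ds$; the integrands $\|\psi_0\|_2^{-2}\|\chi_s\|_2^2Q_s^2$ and $\chi_s^2$ are different functions of $x$ (they merely have the same $L^1$ pairing with the constant $1$, not with $Q\chi$), so after testing against $Q\chi$ a term of size $\epsilon^2$ with \emph{no} $\mathcal M_1$ factor genuinely survives. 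No rearrangement can remove it, because for initial data with $\chi_0$ small the term really is of order $\epsilon^2$ while $\mathcal M_1$ is small. Your program to ``recover the $\mathcal M_1$ prefactor'' therefore cannot close as described.

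The way out is to recognize that the obstacle is spurious. The paper simply estimates the whole $W$-contribution by $\epsilon^{-1}\|Q\|_\infty\|W\|_2\|R\|_2\lesssim\epsilon^2\left(1+\tilde{\mathcal M}_3+\mathcal M_1^2+\mathcal M_1\mathcal M_2\right)\left(1+\mathcal M_1\right)$, i.e.\ accepts $1+\mathcal M_1(t)$ in place of $\mathcal M_1(t)$. This costs nothing: $\mathcal M_1(t)\ge\epsilon^{-1}\|\tilde R_0\|_2=\|\chi_0\|_2$ is bounded below by a constant of the data, and in any case the second bound of the lemma is only invoked in the proof of Theorem \ref{main} after $\mathcal M\lesssim1$ has been established, where the distinction is irrelevant. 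A smaller remark on your first bound: routing it through Corollary \ref{wbound} produces the extra terms $\epsilon^2\mathcal M_1\mathcal M_3$ and $\epsilon^2\mathcal M_1^2\mathcal M_2$, which do not obviously absorb into the stated right-hand side (note that \eqref{eq:m3refined} gives $\tilde{\mathcal M}_3\lesssim\mathcal M_3$, not the reverse, so it cannot be used to absorb a $\mathcal M_3$ into $\tilde{\mathcal M}_3$). The direct estimate $|\langle Q,W(\alpha Q+R)\rangle|\le\|Q\|_\infty\|W\|_2\|\alpha Q+R\|_2$ with $\|\alpha Q+R\|_2=\|\psi_0\|_2$ and the final display of Lemma \ref{wbound0} lands exactly on the claimed bound.
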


While the first part of the lemma will be used in the proofs of Propositions \ref{m1}, \ref{m2} and \ref{m3} (because $\partial_t\alpha$ appears in the equation \eqref{eq:duhamelnl} for $\tilde R$), the second part of the lemma will only be used later when proving Theorem \ref{main}.

\begin{proof}
By equation \eqref{eq:effeqalpha} we have
\begin{align*}
|\partial_t\alpha| \leq \|\psi_0\|_2^{-2} \left( \|\partial_t Q\|_2 \|R\|_2 + \epsilon^{-1} \|Q\|_\infty \|W\|_2 \|\alpha Q+R\|_2 \right).
\end{align*}
The bound now follows from the bound on $\|W\|_2$ from Lemma \ref{wbound0} as well as from $\|\alpha Q+R\|_2=\|\psi_0\|_2$.

Multiplying equation \eqref{eq:effeqalpha} by $\overline\alpha$ and taking the real part, we arrive at
$$
\partial_t (|\alpha|^2) = 2\|\psi_0\|_2^{-2} \left( \langle \partial_t Q, \re(\overline\alpha \tilde R)\rangle + \epsilon^{-1} \langle Q, W \im(\overline\alpha R) \rangle \right).
$$
Note that here we used \eqref{eq:chireal}. Since $|\alpha|\leq 1$, we obtain
$$
| \partial_t (|\alpha|^2)| \leq 2\|\psi_0\|_2^{-2} \left( \| \langle x\rangle \partial_t Q\|_1 \|\langle x \rangle^{-1} \tilde R \|_\infty + \epsilon^{-1} \| Q \|_\infty \|W\|_2  \| R \|_2 \right).
$$
The bound now follows again from the bound on $\|W\|_2$ from Lemma \ref{wbound0} and \eqref{eq:m3refined}.
\end{proof}


\subsection{Bound on $\|\tilde R\|_2$}\label{sec:m1}

\begin{proof}[Proof of Proposition \ref{m1}]
We compute, using the effective equation \eqref{eq:effeqrtilde} for $\tilde R$, the self-adjointness of $L$ and the orthogonality \eqref{eq:tilderortho},
\begin{align*}
\partial_t \left\|\tilde R \right\|_2^2 & = 2 \re \langle \tilde R, \partial_t \tilde R \rangle \\
& = 2 \re \frac{1}{\epsilon i} \left\langle \tilde R,L \tilde R + PW(\alpha Q+i\epsilon\alpha\chi +\tilde R) - i\epsilon \|\psi_0\|^{-2} \langle\partial_t Q,\tilde R\rangle Q \right. \\
& \qquad\qquad\qquad \left. + \epsilon^2 (\partial_t\alpha)\chi + \epsilon^2\alpha P\partial_t\chi\right\rangle \\
& = \frac 2\epsilon \im \left\langle \tilde R,W(\alpha Q+i\epsilon\alpha\chi +\tilde R) + \epsilon^2 (\partial_t\alpha)\chi + \epsilon^2\alpha \partial_t\chi\right\rangle \\
& = \frac2\epsilon \im \left\langle \tilde R, W \left( \alpha Q+i\epsilon\alpha\chi+\tilde R \right) - W_{0} \tilde R \right\rangle + 2 \epsilon \im \left\langle \tilde R, (\partial_t\alpha) \chi + \alpha\partial_t \chi \right\rangle.
\end{align*}
In the last equality we used the fact that $W_0$ is real. We bound, using Corollary \ref{wbound} and \eqref{eq:m3refined},
\begin{align*}
& \frac 2\epsilon \im \left\langle \tilde R, W \left( \alpha Q+i\epsilon\alpha\chi+\tilde R \right) - W_0 \tilde R \right\rangle \\
& \quad \leq 2 \epsilon^{-1} \left\| \langle x \rangle^{-1} \tilde R \right\|_\infty \left\|\langle x \rangle \left( W \left( \alpha Q+i\epsilon\alpha\chi+\tilde R \right) - W_{0} \tilde R \right) \right\|_1 \\
& \quad \lesssim \epsilon^2 \left( \epsilon + \min\left\{ \left(\frac{\epsilon}{t}\right)^{1/2},\left(\frac{\epsilon}{t}\right)^{3/2} \right\} \right) \left( 1+ \mathcal M_3(t) + \mathcal M_1(t)^2 + \epsilon\mathcal M_1(t)\mathcal M_3(t) \right)\mathcal M_3(t)\, .
\end{align*}
Moreover,
$$
2\epsilon \im \left\langle \tilde R, \alpha \partial_t\chi \right\rangle \leq 2\epsilon \left\|\langle x \rangle^{-1} \tilde R \right\|_\infty \left\| \langle x\rangle \partial_t\chi\right\|_1 \lesssim \epsilon^2 \left( \epsilon + \min\left\{ \left(\frac{\epsilon}{t}\right)^{1/2},\left(\frac{\epsilon}{t}\right)^{3/2} \right\} \right) \mathcal M_3(t)
$$
and, by Lemma \ref{alphadot},
\begin{align*}
& 2\epsilon \im \langle \tilde R, \chi \partial_t \alpha \rangle \leq  2\epsilon |\partial_t\alpha| \|\langle x \rangle^{-1} \tilde R\|_\infty \|\langle x \rangle \chi\|_1  \\
& \qquad \lesssim \epsilon^3 \left( \epsilon + \min\left\{ \left(\frac{\epsilon}{t}\right)^{1/2},\left(\frac{\epsilon}{t}\right)^{3/2} \right\} \right)
\left( 1+ \mathcal M_3(t) + \mathcal M_1(t)^2 + \mathcal M_1(t)\mathcal M_2(t) \right)\mathcal M_3(t) \,.
\end{align*}
Thus,
\begin{align*}
\partial_t \left\|\tilde R \right\|_2^2 & \lesssim \epsilon^2 \left( \epsilon + \min\left\{ \left(\frac{\epsilon}{t}\right)^{1/2},\left(\frac{\epsilon}{t}\right)^{3/2} \right\} \right) \\
& \quad \times \left( 1+ \mathcal M_3(t) +\mathcal M_1(t)^2 + \epsilon \mathcal M_1(t) \mathcal M_2(t) + \epsilon\mathcal M_1(t)\mathcal M_3(t) \right) \mathcal M_3(t) \,.
\end{align*}
We integrate this bound and use \eqref{eq:finiteint} and the fact that $\|\tilde R_0\|_2 =\epsilon \|\chi_0\|_2 \lesssim \epsilon$ to conclude that
$$
\left\|\tilde R \right\|_2^2 \lesssim \epsilon^2 \left( 1+ \epsilon \left( 1+ \mathcal M_3(t) +\mathcal M_1(t)^2 +\epsilon\mathcal M_1(t)\mathcal M_2(t) + \epsilon \mathcal M_1(t) \mathcal M_3(t) \right) \mathcal M_3(t)  \right),
$$
that is,
$$
\mathcal M_1(t)^2 \lesssim 1+ \epsilon \left( 1+ \mathcal M_3(t) +\mathcal M_1(t)^2 +\epsilon\mathcal M_1(t)\mathcal M_2(t) + \epsilon \mathcal M_1(t) \mathcal M_3(t) \right) \mathcal M_3(t) \,.
$$
This implies the bound stated in the proposition.
\end{proof}


\subsection{Bound on $\|\tilde R(t)\|_\infty$}\label{sec:m2}

\begin{proof}[Proof of Proposition \ref{m2}]
We use the Duhamel formula \eqref{eq:duhamelnl} and the dispersive estimate from Theorem \ref{dispersive}, recalling \eqref{eq:chiortho}, to obtain
\begin{align*}
\left\| \tilde R_t \right\|_\infty & \lesssim \max\left\{1, \left( \frac{\epsilon}{t} \right)^{1/2} \right\}   \left\| \tilde R_0 \right\|_1 \\
&  + \frac{1}{\epsilon} \int_0^t \max\left\{1, \left( \frac{\epsilon}{t-s}\right)^{1/2} \right\} \left\| P_s W_s \left( \alpha_s Q_s+i\epsilon\alpha_s\chi_s + \tilde R_s \right) \right\|_1 \,ds \\
&  + \|\psi_0\|_2^{-2} \int_0^t \left\| P_t U(t,s) Q_s \right\|_\infty \left| \langle\partial_s Q_s,\tilde R_s\rangle\right| ds \\
&  + \epsilon \int_0^t \max\left\{1, \left( \frac{\epsilon}{t-s}\right)^{1/2} \right\} \left\| \chi_s \right\|_1 |\partial_s\alpha_s| \,ds \\
&  + \epsilon \int_0^t \max\left\{1, \left( \frac{\epsilon}{t-s}\right)^{1/2} \right\} \left\| P_s\partial_s\chi_s \right\|_1 |\alpha_s| \,ds \,.
\end{align*}
For the first term we simply use $\|\tilde R_0\|_1 = \epsilon \|\chi_0\|_1 \lesssim \epsilon$. For the second one we observe that
\begin{equation}
\label{eq:projectionbounded}
\|P_s f\|_1 \leq \left( 1 + \|\psi_0\|_2^{-2} \|Q_s\|_1 \|Q_s\|_\infty \right)\|f\|_1 \lesssim \|f\|_1 \,.
\end{equation}
With the bound from Corollary \ref{wbound} we obtain
\begin{align*}
& \frac{1}{\epsilon} \int_0^t \max\left\{1, \left( \frac{\epsilon}{t-s}\right)^{1/2} \right\} \left\| P_s W_s \left( \alpha_s Q_s+i\epsilon\alpha_s\chi_s + \tilde R_s \right) \right\|_1 \,ds \\
& \lesssim \epsilon \int_0^t \max\left\{ 1, \left( \frac{\epsilon}{t-s}\right)^{1/2} \right\} \left( 1 + \mathcal M_3(s) +\mathcal M_1(s)^2 + \epsilon \mathcal M_1(s)\mathcal M_3(s)  + \epsilon \mathcal M_1(s)^2 \mathcal M_2(s) \right) ds\\ 
& \lesssim \epsilon \left( 1 + \mathcal M_3(t) + \mathcal M_1(t)^2 + \epsilon \mathcal M_1(t)\mathcal M_3(t) + \epsilon \mathcal M_1(t)^2 \mathcal M_2(t) \right).
\end{align*}
For the third term we use \eqref{eq:adiabaticappl}, together with
\begin{equation*}
\left| \langle\partial_s Q_s,\tilde R_s\rangle\right| \leq \left\| \langle x \rangle \partial_s Q_s \right\|_1 \left\| \langle x \rangle^{-1} \tilde R_s \right\|_\infty \lesssim \epsilon \left( \epsilon + \min\left\{ \left( \frac{\epsilon}{s}\right)^{1/2}, \left( \frac{\epsilon}{s}\right)^{3/2} \right\} \right) \mathcal M_3(s) \,.
\end{equation*}
Thus, as in \eqref{eq:finiteint},
\begin{align*}
& \|\psi_0\|_2^{-2} \int_0^t \left\| P_t U(t,s) Q_s \right\|_\infty \left| \langle\partial_s Q_s,\tilde R_s\rangle\right| ds \\
& \lesssim \epsilon^2 \mathcal M_3(t) \int_0^t \left( \epsilon + \min\left\{ \left( \frac{\epsilon}{s}\right)^{1/2}, \left( \frac{\epsilon}{s}\right)^{3/2} \right\} \right)ds \lesssim \epsilon^3 \mathcal M_3(t) \,.
\end{align*}
To bound the fourth term we insert the bound from Lemma \ref{alphadot}, recall \eqref{eq:m3refined} and obtain
$$
\epsilon \int_0^t \max\left\{1, \left( \frac{\epsilon}{t-s}\right)^{1/2} \right\} \left\| \chi_s \right\|_1 |\partial_s\alpha_s| \,ds \lesssim \epsilon^2 \left( 1 + \mathcal M_3(t) + \mathcal M_1(t)^2 + \mathcal M_1(t) \mathcal M_2(t) \right)
$$
and for the fifth term we recall \eqref{eq:projectionbounded} and obtain immediately
$$
\epsilon \int_0^t \max\left\{1, \left( \frac{\epsilon}{t-s}\right)^{1/2} \right\} \left\| P_s\partial_s\chi_s \right\|_1 |\alpha_s| \,ds \lesssim \epsilon \,.
$$

To summarize, we have shown that
\begin{align*}
\left\| \tilde R_t \right\|_\infty & \lesssim \epsilon \max\left\{1, \left( \frac{\epsilon}{t}\right)^{1/2} \right\} \\
& \quad +\epsilon \left( 1+ \mathcal M_3(t) + \mathcal M_1(t)^2 + \epsilon \mathcal M_1(t) \mathcal M_2(t) + \epsilon \mathcal M_1(t)\mathcal M_3(t) + \epsilon \mathcal M_1(t)^2 \mathcal M_2(t) \right)
\end{align*}
and therefore
\begin{align*}
\mathcal M_2(t) \leq 1+ \mathcal M_3(t) + \mathcal M_1(t)^2 + \epsilon \mathcal M_1(t) \mathcal M_2(t) + \epsilon \mathcal M_1(t)\mathcal M_3(t) + \epsilon \mathcal M_1(t)^2 \mathcal M_2(t) \,,
\end{align*}
as claimed.
\end{proof}


\subsection{Bound on $\|\langle x\rangle^{-1} \tilde R(t)\|_\infty$}\label{sec:m3}

\begin{proof}[Proof of Proposition \ref{m3}]
We use the Duhamel formula \eqref{eq:duhamelnl} and the dispersive estimates from Theorem \ref{dispersive}, recalling \eqref{eq:chiortho}, to obtain
\begin{align*}
& \left\| \langle x \rangle^{-1} \tilde R_t \right\|_\infty \lesssim \min\left\{ \left( \frac{\epsilon}{t} \right)^{1/2}, \left( \frac{\epsilon}{t}\right)^{3/2} \right\}   \left\|\langle x \rangle \tilde R_0 \right\|_1 \\
& \qquad + \frac{1}{\epsilon} \int_0^t \min\left\{ \left( \frac{\epsilon}{t-s}\right)^{1/2}, \left( \frac{\epsilon}{t-s}\right)^{3/2} \right\} \\
& \qquad\qquad\qquad \times \left\| \langle x \rangle P_s \left( W_s \left( \alpha_s Q_s + i\epsilon\alpha_s \chi_s + \tilde R(s) \right) - W_{0,s} \tilde R_s \right) \right\|_1 \,ds \\
& \qquad + \frac{1}{\epsilon} \int_0^t \left( \frac{\epsilon}{t-s}\right)^{1/2} \left\| P_s W_{0,s} \tilde R_s \right\|_1 \,ds \\
& \qquad + \|\psi_0\|_2^{-2} \int_0^t \left\| \langle x \rangle^{-1} P_t U(t,s) Q_s \right\|_\infty \left| \langle\partial_s Q_s,\tilde R_s\rangle\right| ds \\
& \qquad + \epsilon \int_0^t \min\left\{ \left( \frac{\epsilon}{t-s}\right)^{1/2}, \left( \frac{\epsilon}{t-s}\right)^{3/2} \right\} \left\| \langle x \rangle \chi_s \right\|_1 |\partial_s\alpha_s | \,ds \\
& \qquad + \epsilon \int_0^t \min\left\{ \left( \frac{\epsilon}{t-s}\right)^{1/2}, \left( \frac{\epsilon}{t-s}\right)^{3/2} \right\} \left\| \langle x \rangle P_s \partial_s\chi_s \right\|_1 |\alpha_s| \,ds \,.
\end{align*}
For the first term we simply use $\left\|\langle x \rangle\tilde R_0\right\|_1 = \epsilon \left\|\langle x\rangle\chi_0 \right\|_1 \lesssim \epsilon$. For the second one we observe that
\begin{equation}
\label{eq:projectionbounded2}
\| \langle x\rangle P_s f\|_1 \leq \left( 1+ \|\psi_0\|^{-2} \|\langle x\rangle Q_s\|_1 \|\langle x\rangle^{-1} Q_s \|_\infty \right) \|\langle x\rangle f\|_1 \,.
\end{equation}
Using the bound from Lemma \ref{wbound} we obtain
\begin{align*}
& \frac{1}{\epsilon} \int_0^t \min\left\{ \left( \frac{\epsilon}{t-s}\right)^{1/2}, \left( \frac{\epsilon}{t-s}\right)^{3/2} \right\} \left\| \langle x \rangle P_s \left( W_s \left( \alpha_s Q_s + i\epsilon\alpha_s \chi_s + \tilde R(s) \right) - W_{0,s} \tilde R_s \right) \right\|_1 \,ds \\
& \lesssim \epsilon \int_0^t \min\left\{ \left( \frac{\epsilon}{t-s}\right)^{1/2}, \left( \frac{\epsilon}{t-s}\right)^{3/2} \right\} \left( 1 + \tilde{\mathcal M}_3(s) + \mathcal M_1(s)^2 + \epsilon \mathcal M_1(s) \mathcal M_3(s) \right) ds \\
& \leq \epsilon \left( 1 + \tilde{\mathcal M}_3(t) + \mathcal M_1(t)^2 + \epsilon \mathcal M_1(t) \mathcal M_3(t) \right) \int_0^t \min\left\{ \left( \frac{\epsilon}{t-s}\right)^{1/2}, \left( \frac{\epsilon}{t-s}\right)^{3/2} \right\} ds \\
& \lesssim \epsilon^2 \left( 1 + \tilde{\mathcal M}_3(t) + \mathcal M_1(t)^2 + \epsilon \mathcal M_1(t) \mathcal M_3(t) \right).
\end{align*}
The last inequality follows by a computation as in \eqref{eq:finiteint}. We emphasize that this bound is the reason why we introduced $\tilde{\mathcal M}_3$ in addition to $\mathcal M_3$. It would not be clear how to close the argument if on the right side of the previous inequality we had $\mathcal M_3$ instead of $\tilde{\mathcal M}_3$.

We bound the third term using Corollary \ref{wbound} and \eqref{eq:projectionbounded} and obtain
$$
\frac{1}{\epsilon} \int_0^t \left( \frac{\epsilon}{t-s}\right)^{1/2} \left\| P_s W_{0,s} \tilde R_s \right\|_1 \,ds \lesssim \epsilon^2 \mathcal M_1(t)^2 \mathcal M_2(t) \int_0^t \left( \frac{\epsilon}{t-s}\right)^{1/2} ds \lesssim \epsilon^{5/2} \mathcal M_1(t)^2 \mathcal M_2(t) \,.
$$

For the fourth term we argue similarly as in the previous subsection, but slightly more carefully, namely,
\begin{align*}
& \|\psi_0\|_2^{-2} \int_0^t \left\| \langle x\rangle^{-1} P_t U(t,s) Q_s \right\|_\infty \left| \langle\partial_s Q_s,\tilde R_s\rangle\right| ds \\
& \leq \|\psi_0\|_2^{-2} \int_0^t \left\| P_t U(t,s) Q_s \right\|_\infty \left| \langle\partial_s Q_s,\tilde R_s\rangle\right| ds \\
& \lesssim \epsilon^3 \int_0^t \left( 1 + \epsilon^{-1} \min\left\{ \left( \frac{\epsilon}{s}\right)^{1/2}, \left(\frac{\epsilon}{s}\right)^{3/2} \right\} \right) \mathcal M_3(s)\,ds \\
& = \epsilon^3 \tilde{\mathcal M}_3(t) \,.
\end{align*}

To bound the fifth term we insert the bound from Lemma \ref{alphadot} and obtain, arguing again as in \eqref{eq:finiteint},
\begin{align*}
& \epsilon \int_0^t \min\left\{ \left( \frac{\epsilon}{t-s}\right)^{1/2}, \left( \frac{\epsilon}{t-s}\right)^{3/2} \right\} \left\| \langle x \rangle \chi_s \right\|_1 |\partial_s\alpha_s | \,ds \\
& \lesssim \epsilon^2 \int_0^t \min\left\{ \left( \frac{\epsilon}{t-s}\right)^{1/2}, \left( \frac{\epsilon}{t-s}\right)^{3/2} \right\} \left( 1 + \tilde{\mathcal M}_3(s) + \mathcal M_1(s)^2 + \mathcal M_1(s) \mathcal M_2(s) \right) ds \\
& \lesssim \epsilon^3 \left( 1 + \tilde{\mathcal M}_3(t) + \mathcal M_1(t)^2 + \mathcal M_1(t) \mathcal M_2(t) \right),
\end{align*}
and for the sixth term we recall \eqref{eq:projectionbounded2} and obtain immediately
$$
\epsilon \int_0^t \min\left\{ \left( \frac{\epsilon}{t-s}\right)^{1/2}, \left( \frac{\epsilon}{t-s}\right)^{3/2} \right\} \left\| \langle x \rangle P_s \partial_s\chi_s \right\|_1 |\alpha_s| \,ds \lesssim \epsilon^2 \,.
$$

To summarize, we have shown that
\begin{align*}
& \left\| \langle x \rangle^{-1} \tilde R_t \right\|_\infty \lesssim \epsilon \min\left\{ \left( \frac{\epsilon}{t}\right)^{1/2}, \left( \frac{\epsilon}{t}\right)^{3/2} \right\} \\
& +\epsilon^2 \left( 1+ \tilde{\mathcal M}_3(t) + \mathcal M_1(t)^2 + \epsilon \mathcal M_1(t)\mathcal M_2(t) + \epsilon \mathcal M_1(t) \mathcal M_3(t) + \epsilon^{1/2} \mathcal M_1(t)^2 \mathcal M_2(t) \right)
\end{align*}
and therefore
\begin{align*}
\mathcal M_3(t) \leq \alpha(t) + \int_0^t \beta(s)\mathcal M_3(s)\,ds
\end{align*}
with
$$
\alpha(t) \lesssim 1+ \mathcal M_1(t)^2 + \epsilon \mathcal M_1(t)\mathcal M_2(t) + \epsilon \mathcal M_1(t) \mathcal M_3(t) + \epsilon^{1/2} \mathcal M_1(t)^2 \mathcal M_2(t)
$$
and
$$
\beta(t) \lesssim 1+ \frac{1}{\epsilon} \min\left\{ \left( \frac{\epsilon}{s}\right)^{1/2}, \left( \frac{\epsilon}{s}\right)^{3/2} \right\} .
$$
By Gronwall's lemma we conclude that
$$
\mathcal M_3(t) \leq \alpha(t) e^{\int_0^t \beta(s)\,ds} \,.
$$
Since, as in \eqref{eq:finiteint},
$$
\int_0^t \beta(s)\,ds \lesssim 1 \,,
$$
we conclude that
$$
\mathcal M_3(t) \lesssim \alpha(t) \lesssim 
1+ \mathcal M_1(t)^2 + \epsilon \mathcal M_1(t)\mathcal M_2(t) + \epsilon \mathcal M_1(t) \mathcal M_3(t) + \epsilon^{1/2} \mathcal M_1(t)^2 \mathcal M_2(t) \,,
$$
as claimed.
\end{proof}


\subsection{Proof of Theorem \ref{main}}\label{sec:mainproof}

We now show how Propositions \ref{m1}, \ref{m2} and \ref{m3}, together with some results proved along the way, imply our main result.

\begin{proof}[Proof of Theorem \ref{main}]
We fix $T<T^*$ and derive bounds uniformly in $t\in[0,T]$. If we insert the bound from Proposition \ref{m3} into Proposition \ref{m2}, we find
$$
\mathcal M_2(t) \lesssim 1 + \mathcal M_1(t)^2 + \epsilon \mathcal M_1(t)\mathcal M_2(t) + \epsilon \mathcal M_1(t) \mathcal M_3(t) + \epsilon^{1/2} \mathcal M_1(t)^2 \mathcal M_2(t) \,.
$$
Thus, introducing
$$
\mathcal M(t) := \left( \mathcal M_1(t)^2 + \mathcal M_2(t)^2 + \mathcal M_3(t)^2\right)^{1/2} \,,
$$
we have shown that
$$
\mathcal M_2(t), \mathcal M_3(t) \lesssim 1 + \mathcal M_1(t)^2 + \epsilon \mathcal M(t)^2 + \epsilon^{1/2} \mathcal M(t)^3 
\lesssim 1 + \mathcal M_1(t)^2 + \epsilon^{1/2} \mathcal M(t)^3 \,.
$$
On the other hand, from Proposition \ref{m1} we obtain
$$
\mathcal M_1(t) \lesssim 1 + \epsilon^{1/2} \mathcal M(t) + \epsilon^{1/2} \mathcal M(t)^{3/2} + \epsilon \mathcal M(t)^{3/2} \lesssim 1 + \epsilon^{1/2} \mathcal M(t)^{3/2} \,.
$$
Inserting this into the above bound on $\mathcal M_2$ and $\mathcal M_3$ we obtain
$$
\mathcal M_2(t), \mathcal M_3(t) \lesssim 1 + \epsilon \mathcal M(t)^3 + \epsilon^{1/2} \mathcal M(t)^3 \lesssim 1+ \epsilon^{1/2} \mathcal M(t)^3 \,.
$$
To summarize, we have shown that
$$
\mathcal M(t) \lesssim 1 + \epsilon^{1/2} \mathcal M(t)^3 \,.
$$
Since $\mathcal M$ is continuous and $\mathcal M(0)=\mathcal M_1(0)=\| \chi_0\|_2 \lesssim 1$ we deduce that, if $\epsilon>0$ is small enough, $\mathcal M(t)\lesssim 1$.

In particular, $\mathcal M_1(t)\lesssim 1$. Since $\| i\epsilon\alpha_t\chi_t\|\lesssim \epsilon$, this implies $\|R\|_2 \lesssim \epsilon$, as claimed. The identity $\|R\|_2 = \|\psi_0\| \sqrt{1-|\alpha|^2}$ was already derived in the proof of Lemma \ref{wbound0}.

Moreover, as shown in the proof of Lemma \ref{effeq}, $\phi-V=W$. Therefore the bound on this function and its derivative follow from Lemma \ref{wbound0} together with the bound $\mathcal M\lesssim 1$.

The bound on $\partial_t\alpha$ follows from Lemma \ref{alphadot} together with the bound $\mathcal M\lesssim 1$. The same lemma also gives
$$
|\partial_t (|\alpha|^2)| \lesssim \epsilon \left(\epsilon + \min\{ (\epsilon/t)^{1/2},(\epsilon/t)^{3/2}\} \right).
$$
This is the claimed bound for $t\geq \epsilon$. For $\epsilon\leq t$, we simply estimate $|\partial_t (|\alpha|^2)|= 2 |\re(\overline\alpha\partial_t\alpha)|\leq 2|\partial_t\alpha|$ and use the above bound on $\partial_t\alpha$. This completes the proof of the theorem.
\end{proof}


\bibliographystyle{amsalpha}

\end{document}